\newtheorem{lemma}{Lemma}
\newtheorem{theorem}{Theorem}
\newtheorem{claim}{Claim}
\DeclareMathOperator*{\E}{\mathbb{E}}
\DeclareMathOperator*{\Var}{\mathrm{Var}}
\newcommand{\eps}{\epsilon}
\newcommand{\N}{\mathbb{N}}
\newcommand{\x}{\mathbf{x}}
\newcommand{\p}{\mathbf{p}}
\newcommand{\1}{\mathbbm{1}}
\newcommand{\poi}{\text{Poi}}
\newcommand{\poly}{\text{poly}}
\newcommand{\polylog}{\text{polylog}}
\newcommand{\DP}{\mathtt{DP}}
\newcommand{\profilealg}{\mathtt{EstimateProfile}}
\newcommand{\postprocess}{\mathtt{InvertCounts}}
\newcommand{\samplealg}{\mathtt{Sample}}
\newcommand{\updatealg}{\mathtt{IncrementCounters}}
\newcommand{\Justin}[1]{\textcolor{red}{Justin: #1}}
\newcommand{\Piotr}[1]{\textcolor{blue}{Piotr: #1}}
\title{Space-Optimal Profile Estimation in Data Streams with Applications to Symmetric Functions}
\author{
Justin Y.\ Chen\\MIT\\\texttt{justc@mit.edu}
\and Piotr Indyk\\MIT\\\texttt{indyk@mit.edu}
\and David P.\ Woodruff\\CMU\\\texttt{dwoodruf@andrew.cmu.edu}
}
\date{}
\begin{document}
\begin{titlepage}
\maketitle
\thispagestyle{empty}
\begin{abstract}
We revisit the problem of estimating the profile (also known as the rarity) in the data stream model. Given a sequence of $m$ elements from a universe of size $n$, its profile is a vector $\phi$ whose $i$-th entry $\phi_i$ represents the number of distinct elements that appear in the stream exactly $i$ times.  A classic paper by Datar and Muthukrishan from 2002 gave an algorithm which estimates any entry $\phi_i$ up to an additive error of $\pm \epsilon D$ using $O(1/\epsilon^2 (\log n + \log m))$ bits of space, where $D$ is the number of distinct elements in the stream. 

In this paper, we considerably improve on this result by designing an algorithm which simultaneously estimates many coordinates of the profile vector $\phi$ up to small overall error.
We give an algorithm which, with constant probability, produces an estimated profile $\hat\phi$ with the following guarantees in terms of space and estimation error:
\begin{alphaenumerate}
    \item For any constant $\tau$, with $O(1 / \epsilon^2 + \log n)$ bits of space, $\sum_{i=1}^\tau |\phi_i - \hat\phi_i| \leq \epsilon D$.
    \item With $O(1/ \epsilon^2\log (1/\epsilon)  + \log n + \log \log m)$ bits of space,  $\sum_{i=1}^m |\phi_i - \hat\phi_i| \leq \epsilon m$.
\end{alphaenumerate}
In addition to bounding the error across multiple coordinates, our space bounds separate the terms that depend on $1/\epsilon$ and those that depend on $n$ and $m$. We prove matching lower bounds on space in both regimes.

Application of our profile estimation algorithm gives estimates within error $\pm \epsilon D$ of several symmetric functions of frequencies in $O(1/\epsilon^2 + \log n)$ bits. This generalizes space-optimal algorithms for the distinct elements problems to other problems including estimating the Huber and Tukey losses as well as frequency cap statistics.

% To achieve these results, we introduce two new techniques. First, we develop a hashing-based sketch that allows us to keep very limited information about the identities of the hashed elements by allowing collisions which mix frequencies of different elements together. To ``invert'' this process and approximately recover the profile, we develop a novel procedure which utilizes the ``Poissonization'' trick and a nuanced analysis of the collisions. Second, we reduce the randomness used by the algorithms in a somewhat subtle way: we first use Nisan’s generator to ensure that the random variables of interest are $O(1)$-wise independent, and then we analyze those variables by calculating their moments. (In our setting, using Nisan’s generator alone would not yield the desired space bound.) The latter technique seems quite versatile: since its introduction in an earlier version of this paper, it has been already used for other streaming problems~\cite{kacham2023pseudorandom}.
\end{abstract}
\end{titlepage}

\section{Introduction}
\label{s:intro}
%Estimating basic statistics of a data set, such as the number of times each element occurs, the number of distinct elements, or the number of elements that occur a given number of times, are fundamental problems in data stream algorithms. In this paper we focus on the last task.
Estimating basic statistics of a data set, such as the number of times each element occurs or the number of distinct elements are fundamental problems in data stream algorithms.
In this paper, we focus on the related problem of estimating the number of elements that occur a given number of times.
Formally, given a stream $\x = x_1, \ldots, x_m$ of $m$ elements from $[n] = \{1, 2, \ldots, n\}$, we define its {\em profile} to be the frequency of frequencies vector $\phi \in \{0, \ldots, n\}^m$ where $\phi_i = |\{j \in [n]: |\{k \in [m]: x_k = j\}| = i\}|$ is the number of distinct elements in $\x$ which appear exactly $i$ times; such elements are often referred to as ``$i$-rare'' and were first studied in a classic paper of Datar and Muthukrishnan~\cite{datar2002estimating}.

Any \emph{symmetric} function of the frequencies of the elements in a stream (such a function is invariant to relabeling of the domain elements) can be written as a function of the profile.
Therefore, algorithms for profile estimation in data streams can be used to estimate quantities such as the number of distinct elements~\cite{flajolet1985probabilistic}, frequency moments~\cite{alon1996moments}, capped statistics of the stream~\cite{cohen2015stream}, or the objective function of $M$-estimators such as the Huber or Tukey  objective~\cite{jayaram2022samplers}.
In this paper, we develop an algorithm for estimating the profile of a data set specified by an insertion-only stream, where elements are inserted but not deleted.
As an application, this algorithm improves upon the space complexity of estimating several symmetric functions of frequencies.

The profile (also referred to as the fingerprint, histogram, histogram of histograms, pattern, prevalence, or collision statistics) of a data set is a natural representation of the distribution of elements and has been studied extensively from both computational and statistical perspectives. Streaming algorithms that estimate the number of $i$-rare elements have been used for computing degree distributions in large graphs~\cite{buriol2005using}, detecting malicious IP traffic in a network~\cite{karamcheti2005detecting}, estimating the number of times users have been exposed to the same ad~\cite{ghazi2022multipartyreach}, counting the number of $k$-mers in genetic sequences with a given abundance value for fast $k$-mer size selection \cite{chikhi2014informed}, and for applications in databases~\cite{cormode2005summarizing}.
In practice, estimating the profile is a very popular sketching problem solved by users of Apache DataSketches, a popular open-source sketching library~\cite{dickens2023datasketchesemail}\footnote{See this example of estimating the number of estimating the distribution of how many times users visit a website in a month \url{https://datasketches.apache.org/docs/Tuple/TupleEngagementExample.html}.}.

% Furthermore, any \emph{symmetric} function of the frequencies of elements in a stream (a function that is invariant to relabeling of the domain elements) can be written as a function of the profile. In streaming, such functions include the number of distinct elements~\cite{flajolet1985probabilistic}, frequency moments~\cite{alon1996moments}, capped statistics of the stream~\cite{cohen2015stream}, or the objective function of $M$-estimators such as the Huber or Tukey  objective~\cite{jayaram2022samplers}. If the stream of elements corresponds to a set of samples from a discrete probability distribution $\p$, the profile is a {\em sufficient statistic} for estimating any symmetric property of $\p$ and is the core object utilized in various algorithms for statistical testing and estimation~\cite{acharya2016unified, valiant2017estimating, charikar2019efficient}.

The study of the problem in the context of streaming algorithms dates back to the work of Datar and Muthukrishan \cite{datar2002estimating}. 
They show how to estimate the ratio of $i$-rare elements in the stream  to the total number of distinct elements $D$ in the stream, i.e., the fraction $\phi_i/D$. The algorithm is simple and elegant: it collects a random sample $j_1 \ldots j_s$ of $s$ elements from the stream, where each $j_t$ is chosen uniformly at random from the set of distinct elements appearing in the stream. For each element $j_t$, it calculates the element's frequency (the number of times $j_t$ appears in the stream) and returns the fraction of $j_t$'s with frequency exactly $i$. The algorithm can be implemented in one pass using $O(s (\log n + \log m))$ bits of storage, and the authors show a trade-off between the quality of approximation and the sample size $s$. Specifically, to approximate $\phi_i/D$ up to $\pm \eps$ with constant probability, it suffices that $s=\Theta(1/\eps^2)$\footnote{The original paper provides a more refined bound with a mix of additive and multiplicative errors, see Lemma 2 in \cite{datar2002estimating}. We provide a single-parameter bound for the sake of simplicity.}. This translates into an $O(1/\eps^2 (\log n + \log m))$ space streaming algorithm which, given a particular $i$, finds an estimate $\hat{\phi}_i$ such that
\begin{equation}
\label{e:l_infty}
|\phi_i-\hat{\phi}_i|  \le \eps D. 
\end{equation}

Other works have studied rarity estimation streaming algorithms in the context of the sliding window model~\cite{braverman2014catch}, time and space efficient algorithms~\cite{feigenblat2011exponential} and privacy~\cite{dwork2010pan}. 
The ``layering'' technique of Indyk and Woodruff~\cite{indyk2005optimal}  essentially estimates the number of items whose frequencies are {\em approximately} equal to a given value.
A pertinent line of work studies algorithms for estimating general ``concave sublinear'' frequency statistics which depend on the rarities of low frequency elements~\cite{cohen2017hyperloglog, cohen2019samplingsketches}.
These papers provide succinct sketches for estimating several symmetric functions of frequencies but either do not account for the space used to store hash functions or do not focus on bit complexity and require two passes over the data.
When the space used for randomness is attributed to the algorithm, our results give improved space bounds compared to prior work for estimating several statistics that fall into this framework (see \cref{subsec:applications}).
% In addition, our algorithm has efficient update time of $O(\log(1/\eps) + \log\log n)$ while the single pass algorithm of Cohen~\cite{cohen2017hyperloglog} requires up to $\poly(1/\eps)$ time per update.

% In this paper, we focus on estimating many coordinates of the profile vector $\phi$ simultaneously with small {\em total} error. In many applications, it is useful to calculate more than one rarity as the true object of interest is the distribution of frequencies.
% With that in mind, our goal is to compute an approximation $\hat\phi$ with small $L_1$ error $\|\phi - \hat\phi\|_1 = \sum_{i=1}^m |\phi_i - \hat\phi_i|$.
% In contrast, \cref{e:l_infty} can be viewed as bounding the $L_{\infty}$ error\footnote{This can be made formal by replicating the algorithm $O(\log \tau)$ times to reduce the probability of failure to $\frac{1}{2\tau}$, and taking a union bound.}.

\subsection{Results for Profile Estimation}
We focus on estimating several coordinates of the profile vector $\phi$ simultaneously with small {\em total} error. In many applications, it is useful to calculate more than one rarity as the true object of interest is the distribution of frequencies.
In addition, estimating several coordinates of the profile has direct applications to estimating several symmetric functions of frequencies (see \cref{subsec:applications}).
Our main result is a streaming algorithm which, with constant probability, achieves the following guarantees (Theorems~\ref{thm:algD} and~\ref{thm:algm}):
\begin{itemize}
    \item For any $\tau = O(1)$, using space $O(1 / \eps^2 + \log n)$, the algorithm returns a $\hat\phi$ such that
    \begin{equation}\label{e:l1_D}
        \sum_{i=1}^\tau |\phi_i - \hat\phi_i| \leq \eps D.
    \end{equation}

    \item Using space $O(1 / \eps^2 \log (1/\eps)  + \log n + \log \log m)$, the algorithm returns a $\hat\phi$ such that
    \begin{equation}\label{e:l1_m}
        \sum_{i=1}^m |\phi_i - \hat\phi_i| \leq \eps m.
    \end{equation}
\end{itemize}

Both results use less space than the algorithm of Datar and Muthukrishnan while bounding error across several coordinates of the profile rather than for a single rarity.
A brief remark comparing the two guarantees: as $D \leq m$, the error in \cref{e:l1_D} is smaller than that of \cref{e:l1_m} but at the cost of only providing a guarantee for estimating the profile over constant frequencies.\footnote{Note that in this regime, to estimate the entire profile up to the $\tau$th coordinate, it is sufficient to have an algorithm for estimating a single $i$-rarity for constant $i$ and making a constant number of copies of this algorithm for each $i \in \{1, \ldots, \tau\}$.
Interestingly, our algorithm internally produces an estimate of the entire profile up to $i$ even if the goal is only to estimate the single $i$-rarity.}
On the other hand, by \cref{e:l1_m} we can estimate the entire profile in small space up to error $\pm \eps m$. Note that in common settings where the average frequency of the elements in the stream is small, $\eps m$ and $\eps D$ are of similar size. However, in general, estimating the entire profile up to additive error in terms of $D$ is hard. In fact, producing an estimate such that $\|\phi - \hat \phi\|_1 \leq D/2$ would require $\Omega(D\log m)$ bits of space.\footnote{Let $S_N^D$ be the set of all  $D$-sparse binary vectors of length $N$, and let $C \subset S_N^D$ be its subset such that any pair of distinct $c,c' \in C$ have $L_1$ distance greater than  $D$. Standard probabilistic arguments show that there exists such a set $C$ of size $\exp(\Omega(D \log (N/D)))$. Observe that for any $c \in C$, we can generate a stream with profile equal to $c$, by creating,  for each nonzero $c_i$,  a distinct element appearing $i$ times. 
Given such a stream, the assumed algorithm returns $\hat{\phi}$ with $L_1$ distance at most $D/2$ to $c$, which makes it possible to uniquely recover $c$. By the pigeonhole principle, the algorithm must use space at least $\Omega(\log |C|) = \Omega(D \log (N/D))$. Since the length of the generated stream $m$ is at most $ND$, if we pick $N>D^2$, we obtain the desired bound.} 

The first step in our analysis is to show that $s = O(1/\eps^2 \log (1/\eps) )$  samples suffice to achieve the guarantees of \cref{e:l1_m} using the empirical estimation method of Datar and Muthukrishnan ($s = O(1/\eps^2)$ suffices for \cref{e:l1_D} from the original paper). However, this na\"ively requires $s(\log n + \log m)$ space to store the identities and counts of each sample.

Our main result reduces the space required to $O(s)$ bits. The algorithm is obtained by compressing the identity and count information of each sample into $O(1)$ bits on average while retaining its statistical power for estimating the profile.
The algorithm is also time efficient: it takes $O(\log(1/\eps) + \log\log n)$ expected amortized time to process each stream update and $\poly(1/\eps)$ time to produce the final profile estimate.

To achieve these results, we use {\em two new techniques}:
\begin{enumerate}
\item To compress the identities of the stream elements, we hash the sampled elements to a domain of size $O(s)$, allowing collisions between sampled elements in a similar manner to Bloom filters or CountMin sketches. Our key contribution is to show that, under the parameters of our algorithm, the empirical profile of the sampled elements can be approximately recovered from the frequencies after hashing by an iterative ``inversion'' procedure. To our knowledge, this inversion procedure is novel and requires a careful analysis of the hashing procedure as well as an application of the ``Poissonization'' trick more commonly used in distribution testing.

\item To use small space, we need to be able to make efficient use of randomness for our hash functions. A key statistic used in our analysis is the number of buckets in the hash table with frequency $i$ for $i \in [m]$. Our analysis requires $O(1)$-wise independence of the associated random variables, however, this does not simply follow from using an $O(1)$-wise independent hash family. On the other hand, Nisan's generator can be used in a black-box fashion~\cite{indyk2006stable}, but this would blow up the space bound by a logarithmic factor. Instead, we apply Nisan's generator to a subroutine of the streaming algorithm to ensure that $O(1)$-wise independence holds for the pertinent random variables. To our knowledge, this is a novel technique and one that seems quite versatile: since its introduction in an earlier version of this paper, it has been already used for other streaming problems~\cite{kacham2023pseudorandom}.
\end{enumerate}

We complement our algorithmic results with the following lower bounds which show that we achieve the \emph{optimal} dependence on the error parameter $\eps$ (Theorems~\ref{thm:lb_m} and~\ref{thm:lb_D}).
\begin{itemize}
    \item Any one-pass algorithm satisfying \cref{e:l1_D} with constant probability must use at least $\Omega(1/\eps^2)$ bits of space.
    
    \item Any one-pass algorithm satisfying \cref{e:l1_m} with constant probability must use at least $\Omega(1/\eps^2 \log (1/\eps) )$ bits of space. 
\end{itemize}
To the best of our knowledge, the latter is a rare example of a natural streaming problem where the optimal dependence of the space bound on the accuracy parameter $\eps$ is not of the form $1/\eps^a$ for some integer exponent $a \ge 1$.

\subsection{Applications to Symmetric Functions}
\label{subsec:applications}

By itself, the profile is a useful statistic of the stream, but it is also important in that any symmetric (invariant to relabeling) function of frequencies can be written as a function of the profile.
Therefore, the guarantees of the algorithm given in \cref{e:l1_D} and \cref{e:l1_m} can be leveraged in a black-box way to give streaming algorithms for estimating a variety of symmetric functions of the frequencies of the stream.
We give several illustrative examples where we can estimate functions in essentially the same space required to estimate the number of distinct elements. In what follows, consider constant $\tau$.
\begin{itemize}
    \item \emph{Distinct elements with frequency at most or at least $\tau$:} The number of distinct elements with frequency at most $\tau$ is the sum of the first $\tau$ coordinates of the profile and can be calculated up to $\pm \eps D$ in $O(1/\eps^2 + \log n)$ space using our algorithm. The number of distinct elements with frequency at least $\tau$ can be calculated by subtracting those with frequency at most $\tau-1$ from the total number of distinct elements which can be also be approximated in $O(1/\eps^2 + \log n)$ space~\cite{kane2010optimal}.
    \item \emph{Mass of elements with frequency at most or at least $\tau$:}
    The mass of the distinct elements with frequency at most $\tau$ can be expressed as $\sum_{i=1}^\tau \phi_i \cdot i$ which can be calculated in space $\pm \eps D$ in $O(1/\eps^2 + \log n)$ space using our algorithm. Subtracting from the total mass of the stream which can be approximated up to $\pm \eps m$ in space $O(\log \log m + \log(1/\eps))$ with a Morris counter~\cite{morris1978counting} yields the mass of elements with frequency at least $\tau$.
    \item \emph{Capped (or saturated) statistics~\cite{cohen2015stream}:} For a given parameter $\tau$, the corresponding capped statistic of the stream is 
    \begin{equation*}
        \sum_{i=1}^\tau \phi_i \cdot i + \sum_{i=\tau+1}^m \phi_i,
    \end{equation*}
    a generalization of counting the stream length and counting the number of distinct elements. This can be calculated using the two quantities above up to error $\pm \eps D$ in $O(1/\eps^2 + \log n)$ space. 
    \item \emph{Tukey objective~\cite{Rey1983robust}} For a given parameter $\tau$, the Tukey objective is 
    \begin{equation*}
        \sum_{i=1}^\tau \phi_i  \cdot \frac{\tau^2}{6}\left(1 - \left(1 - i^2/\tau^2\right)^3\right) + \sum_{i=\tau + 1}^n \phi_i \cdot \frac{\tau^2}{6}.
    \end{equation*}
    The first summation can be estimated up to $\pm \eps D$ using our algorithm in $O(1/\eps^2 + \log n)$ space. The second summation is $\tau^2/6$ times using the number of distinct elements with frequency at least $\tau +1$ and thus can also be estimated up to $\pm \eps D$ in $O(1/\eps^2 + \log n)$ space.
    \item \emph{Huber objective~\cite{huber1964robust}} For a given parameter $\tau$, the Huber objective is 
    \begin{equation*}
        \sum_{i=1}^\tau \phi_i  \cdot \frac{i^2}{2} + \sum_{i=\tau + 1}^n \phi_i \cdot (\tau i - 1/2).
    \end{equation*}
    The first summation is can be estimated up to $\pm \eps D$ using our algorithm in $O(1/\eps^2 + \log n)$ space. The second summation can be written as $\tau$ times the mass of elements with frequency at least $\tau+1$ minus half the number of distinct elements with frequency at least $\tau + 1$. So, in total, the objective can be estimated up to $\pm \eps D$ in $O(1/\eps^2 + \log n + \log\log m + \log(m/D))$ space.
\end{itemize}

To our knowledge, the best known previous algorithms for these problems use $O(1/\eps^2 \log n)$ bits of space (by storing identities of sampled elements) or do not account for the space associated with randomness~\cite{datar2002estimating, cohen2015stream, cohen2017hyperloglog, cohen2019samplingsketches}.
In general, we can apply the guarantees of \cref{e:l1_D} to estimate, in $O(1/\eps^2 + \log n)$ space, any symmetric function which depends on a constant frequencies and is Lipschitz with respect to $L_1$ error in the profile.

\subsection{Technical Overview}
For simplicity, we focus in this overview on the $\pm \eps m$ guarantee of the algorithm in \cref{e:l1_m}. The same algorithm, but with slightly different parameters, achieves the guarantee in \cref{e:l1_D}.
Our new analysis of the algorithm of Datar and Muthukrishnan is given in \cref{s:basic}. Although the algorithm is suboptimal, it illustrates the core issues that the techniques of optimal algorithm have to address.  The method samples elements uniformly from the set of distinct elements in the stream and uses the (rescaled) empirical profile of the samples as the estimated profile $\hat{\phi}$.
In the context of an $L_1$ guarantee of $\eps m$ additive error, we observe that it suffices to estimate only the $\phi_i$'s for $i$ up to $O(1/\eps)$, as the remaining values can be set to zero without incurring much error (as there cannot be many high frequency elements). We then study the expected $L_1$ estimation error of the first $O(1/\eps)$ entries. The analysis crucially uses the specific properties of the profile function, leading to an $O(1/\eps^2 \log(1/\eps))$ bound on the sample size. Na\"ively, each of these sampled elements requires $\log n + \log m$ bits to store its identity and count.

This algorithm can be improved using the fact that, to compute the profile, the actual identities of the sampled elements are not important as long as we can distinguish among them. This makes it possible to reduce the space by hashing sampled elements to a smaller universe of size that is quadratic in the sample size (quadratic dependence being necessary to avoid collisions). Since the sample size is polynomial in $1/\eps$, each hash can be represented using $O(\log (1/\eps))$ bits. As we are only concerned with frequencies up to $O(1/\eps)$, the counts of each sampled element can also be stored in $O(\log(1/\eps))$ bits, leading to an overall space bound of $O(1/\eps^2 \log^2(1/\eps))$ bits. Although this algorithm is suboptimal, we believe that its simplicity makes it appealing in practice. 

The optimal algorithm (\cref{s:algorithm}) is much more technically involved. As with the previous algorithm, it hashes sampled elements into a smaller universe to reduce space. However, the size of the hash table is now linear, not quadratic, in the sample size. This removes the need to store the hashed identities as we can store the entire hash table explicitly. Combined with a more careful analysis of the number of bits required to represent the counts of sampled elements, this removes the ``extra’’ $\log (1/\eps)$ factor. This improvement, however, comes at the price of allowing collisions, meaning that elements with different frequencies are now mixed together, and the profile of hashed elements does not approximate the original one\footnote{For example, a stream of all distinct elements will be hashed to one where a constant fraction of elements will have duplicates.}. This necessitates inverting this mixing process to obtain frequency estimates for the original sample.
%This task is achieved by estimating the profile entries iteratively. Tracking and bounding the errors incurred by this estimation process is the most technically challenging part of the analysis.

To simplify the analysis, we use the ``Poissonization'' trick so that outcomes in different buckets in the hash table are independent. Specifically, we use an additional hash function which maps an element to a $\poi(1)$ random variable. We create that many distinct copies of each sampled element and add these copies to the hash table.
Our goal is to use the hash table to estimate the number of sampled elements (before Poissonization) with frequency $i$ for $i \in \{1, \ldots, O(1/\eps)\}$.
We achieve this via an iterative algorithm. Letting $\hat{F}_j$ be our estimate for the number of distinct elements in our sample with frequency $j$, assume we are given the estimates $\hat{F}_1, \ldots, \hat{F}_{i-1}$ and want to estimate $\hat{F}_i$.
We observe that there are two types of buckets in the hash table with count $i$.
``Good'' buckets are those which contain a single element with frequency $i$.
``Bad'' buckets are those which contain multiple elements (due to hash collisions) which sum to $i$.

To estimate the number of bad buckets, we sum, over all integer partitions of $i$ with at least two summands, the estimated probability that that exact combination of elements hashes to the same bucket. These estimated probabilities come from our estimates $\hat{F}_1, \ldots, \hat{F}_{i-1}$ as well as our knowledge of the sample size compared to the size of the hash table.
We can then estimate the number of good buckets by subtracting the estimated number of bad buckets from the total number of observed buckets with count $i$.
Finally, we estimate $\hat{F}_i$ by inverting the probability of getting a good bucket, i.e., of a bucket containing exactly one element.

This procedure produces the correct estimates under the assumption that the numbers of each type of bucket described above occur according to their expectations.
As this is not the case, two types of error are introduced when calculating $\hat{F}_i$: random error due to deviations of the number of buckets with count $i$ and propagation error due to using noisy estimates $\hat{F}_1, \ldots, \hat{F}_{i-1}$ in the calculation of the number of bad buckets with count $i$.
At first glance, this second type of error has the potential to grow out of control as errors early on compound through the iterative estimation procedure.
However, we show that the total propagation error across all coordinates of the estimated profile is within a constant factor of the total random error.
To prove this result, we carefully analyze the sensitivity of the estimation of the number of bad buckets with count $i$ to changes in the estimated numbers of elements with counts less than $i$.
Early errors can compound as they recursively affect all subsequent estimates: error in estimating the number of elements with frequency $i$ affects the estimate of the number of bad buckets for all counts greater than $i$. However, the propagation of these errors is limited by the fact that the probability of a large number of elements hashing to the same bucket decays exponentially.
We ultimately bound the error introduced by allowing hash collisions to an additional $O(\eps m)$ term in the expectation of $\|\phi - \hat{\phi}\|_1$.

One aspect of the algorithm that we have so far swept under the rug is how the algorithm samples elements: we need to adaptively maintain a sample of $O(1/\eps^2 \log(1/\eps))$ elements. In a similar methodology to the optimal distinct elements sketch~\cite{kane2010optimal}, we hash each element to a random identity in $[n]$ and sample all elements which have least significant bit at least $\ell$ after hashing. The variable $\ell$ indicates the current sampling ``level''. We track the stream length and number of distinct elements over time in order to update the level and maintain the correct number of samples. In order to remove the counts associated with elements that are no longer in the sample once the level updates, in each cell of the hash table, we keep separate counters stratified by the least significant bit of the hash of contributing elements.

The analysis of this algorithm requires pairwise independence of the counts of buckets. To ensure this pairwise independence holds after replacing truly random bits by a pseudorandom generator, we use Nisan's pseudorandom generator. Since we need to preserve distributions of the counts of {\em pairs} of buckets, which are $O(\log(1/\eps))$-bit long, a random seed of length $\polylog(1/\eps)$ suffices.  (Note that we cannot use Nisan's generator to ensure that the bucket counts are fully independent, as that would require a random seed of length equal to the number of buckets, the space of our algorithm, times $\log(1/\eps)$). We note that the technique of employing Nisan's generator to achieve $O(1)$-wise independence introduced in this paper appears to be quite versatile, and has been since used for other streaming algorithms~\cite{kacham2023pseudorandom}.

Our lower bound (\cref{s:lower}) for algorithms achieving the guarantee of \cref{e:l1_m} proceeds via a reduction from a direct sum of multiple instances, where each instance can be viewed as the composition of the Indexing problem with the Gap Hamming Distance problem with different parameters.
To illustrate the basic connection between profile estimation and these communication problems, note that $\phi_1$ can be used to count the number of elements which appear in exactly one of two binary strings to solve the Gap Hamming Distance problem\footnote{In fact, this simple reduction is how we prove a lower bound for algorithms achieving the guarantee of \cref{e:l1_D}.} while distinguishing between there existing an element with frequency $i$  or $i-1$ can be used to solve Indexing (by Bob adding $i-1$ copies of the element corresponding to his index).

The entries $\phi_1, \phi_2 \ldots \phi_{1/\eps}$ of the profile vector are split into ``scales'', where each scale contains entries $\phi_i$ for comparable (up to a constant factor) values of $i$. Intuitively, each scale contributes $1/\eps^2$ term to the lower bound, for a total bound of $\Omega(1/\eps^2 \log(1/\eps))$ bits. As there are known reductions of Gap Hamming Distance from the Indexing problem, we ultimately are able to prove our entire lower bound via an involved reduction from Indexing itself.

\subsection{Discussion and Open Questions}
We revisit the problem of estimating the profile of a data stream, a problem that appears commonly in practice and has applications to estimating symmetric functions of frequencies. 
We give space-optimal algorithms for two types of error guarantees. 
Our results focus on producing good estimates for entries of the profile corresponding to elements with small frequency (either explicitly through the parameter $\tau$  in \cref{thm:algD} or implicitly by letting the error scale with the mass of the stream in \cref{thm:algm}).

One direction for future work is to study profile estimation guarantees that put more emphasis on estimating large entries of the profile.
What is the optimal space complexity of estimating the profile up to $\pm \eps D$ on the first $\tau$ coordinates for superconstant $\tau$? Recall that if $\tau > D^2$, $\Omega(D \log m)$ bits of space are required.
Estimation in terms of $L_1$ error of the profile requires that we approximate the number of elements appearing exactly $i$ times (for many $i$).
If we allow for approximating the number of elements appearing \emph{approximately} $i$ times, can we use less space?
Answering these questions may imply improved algorithms for a broader class of symmetric functions.

The profile also appears in literature on distribution testing. Several works use the profile of a sample from a distribution to give sample-optimal testers for a broad class of symmetric properties (e.g., testing uniformity or estimating entropy)~\cite{valiant2017estimating, acharya2016unified, charikar2019efficient, anari2021bethe}. For the right notion of error, a streaming algorithm for profile estimation may be able to be used to process a sample in sublinear space while retaining the performance of the testing algorithms. We leave the study of this as an intriguing open question. 

\paragraph*{Paper Organization} 
The paper is organized as follows. In \cref{s:algorithm}, we present and analyze our space-optimal algorithm. In \cref{s:lower}, we present the lower bounds, showing the optimality of our algorithm. Finally, in \cref{s:basic}, we include the analysis of a simpler but suboptimal algorithm based on that of Datar and Muthukrishnan.

\section{Profile Estimation Algorithm}
\label{s:algorithm}
\begin{theorem}[$\pm \eps D$] \label{thm:algD}
For any $\eps > 0$ and $\tau = O(1)$, with input parameters $B = \Theta(1 / \eps^2)$ and $errortype = D$, \cref{alg:profile} uses $O(1 / \eps^2 + \log n)$ bits of space, $O(\log(1/\eps) + \log\log n)$ expected amortized update time, $O(1)$ post-processing time, and returns an estimated profile $\hat{\phi}$ that satisfies 
\begin{equation*}
    \sum_{i=1}^\tau |\phi_i - \hat{\phi}_i| \leq \eps D
\end{equation*} with probability $9/10$.
\end{theorem}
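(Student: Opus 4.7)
The plan is to carry out the argument in three stages: bounding the sample complexity, analyzing the hash-and-invert frequency recovery, and checking the bit-level space and time bookkeeping.

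First, I would verify that a sample $S$ of size $s = \Theta(1/\eps^2)$ suffices. The algorithm draws $S$ by level-based LSB hashing (as in the optimal distinct elements sketch), so each of the $D$ distinct stream elements appears in $S$ independently with probability $p = \Theta(s/D)$, and for every sampled element its true frequency, capped at $\tau + 1$, can be recovered from the sketch. Setting $F_i = |\{j \in S : \mathrm{freq}(j) = i\}|$ gives $\E[F_i] = p\phi_i$ and $\Var(F_i) \le p\phi_i$, so the unbiased estimator $\hat\phi_i := F_i/p$ satisfies $\E[|\hat\phi_i - \phi_i|] \le \sqrt{\phi_i/p}$ and
\begin{equation*}
    \E\!\left[\sum_{i=1}^\tau |\hat\phi_i - \phi_i|\right] \le \sum_{i=1}^\tau \sqrt{\phi_i/p} \le \sqrt{\tau/p} \cdot \sqrt{\sum_{i=1}^\tau \phi_i} \le \sqrt{\tau D/p} = O(\eps D)
\end{equation*}
by Cauchy--Schwarz, whence Markov's inequality yields the required $9/10$ success probability. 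Storing $S$ explicitly would work but cost $\Omega(s\log n)$ bits, so the remaining effort goes into simulating this estimator in sublinear space.

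Second, I replace the explicit sample by $B = \Theta(s) = \Theta(1/\eps^2)$ hash buckets, each storing only the total count of elements landing in it. Collisions make the bucket-count profile differ from the sample profile, so I apply the Poissonized iterative inversion from the overview: after an auxiliary hash duplicates each element a $\poi(1)$ number of times, bucket loads become independent, and the expected number of buckets with total load $i$ decomposes as a sum over integer partitions of $i$ of products $F_{j_1} F_{j_2} \cdots$ with explicit collision-probability coefficients. Solving this triangular system for $\hat F_i$ uses the previously computed $\hat F_1,\dots,\hat F_{i-1}$ to subtract off the ``bad'' (multi-element) contribution and attribute the remainder to the ``good'' (single $i$-rare element) buckets. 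Two errors enter: random fluctuation in the observed bucket counts, which by pairwise independence is $O(\sqrt{B})$ per coordinate and thus contributes $O(\tau\sqrt{B}/p) = O(\eps D)$ to the total; and error propagated from imperfect $\hat F_{<i}$, which must be controlled by sensitivity analysis of the partition sums. This propagation step is the main obstacle: a na\"ive bound blows up through the recursion, but the coefficient of $\hat F_j$ in the estimated load-$i$ bad-bucket count decays geometrically in $i-j$ (because the per-bucket fill probability is $\Theta(1)$ and partitions with many parts are exponentially rare), so a telescoping/amortization argument keeps the total propagated $L_1$ error within a constant factor of the random error.

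Third, the space, time, and randomness accounting: because only coordinates up to $\tau = O(1)$ matter and each bucket's contribution is capped, each bucket stores an $O(1)$-bit count on average (with $O(\log\log(1/\eps))$ worst-case bits), totalling $O(B) = O(1/\eps^2)$ bits; $O(\log n)$ further bits suffice for the current sampling level, a running distinct-elements estimator, and a hash seed. The analysis in step two requires only $O(1)$-wise joint independence of bucket-load distributions, which does not follow from using an $O(1)$-wise independent hash family on the stream. Following the second technique described in the overview, I apply Nisan's pseudorandom generator to the counter-maintenance subroutine: fooling statistics of a constant number of $O(\log(1/\eps))$-bit counters needs only $\polylog(1/\eps) = o(1/\eps^2)$ seed bits, preserving the variance calculation without a $\log$-factor blow-up. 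The $O(\log(1/\eps) + \log\log n)$ amortized update time follows by charging level changes and hash evaluations to distinct stream elements, and the inversion runs in $\poly(1/\eps) = O(1)$ time for constant $\tau$, giving the claimed post-processing bound.
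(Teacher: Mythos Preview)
Your proposal is correct and follows essentially the same three-part structure as the paper's proof: the Cauchy--Schwarz sample-complexity bound matches \cref{lem:empiricalprofile1,lem:empiricalprofile2}, the hash-and-invert analysis with Poissonization and the propagation-vs-random error comparison is exactly \cref{lem:errprop} (proved via the sensitivity bound of \cref{lem:sensitivity}), and the space/randomness accounting via Nisan's generator on the bucket-count subroutine is \cref{lem:hashfn}. One minor sharpening: the paper's sensitivity bound is not pointwise geometric decay in $i-j$ but rather the aggregate inequality $\sum_{i>j} f_j^i(\Delta) \le \Delta/2$, which is what drives the contraction in \cref{lem:alphajk}.
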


\begin{theorem}[$\pm \eps m$] \label{thm:algm}
For any $\eps > 0$, with input parameters $B = \Theta(1/ \eps^2 \log (1/\eps) )$, $\tau = O(1/\eps)$, and $errortype = m$, \cref{alg:profile} uses $O(1 / \eps^2 \log (1/\eps)  + \log n + \log\log m)$ bits of space, $O(\log(1/\eps))$ expected amortized update time, $O(1/\eps^3 \log (1/\eps) )$ post-processing time, and returns an estimated profile $\hat{\phi}$ that satisfies 
\begin{equation*}
    \sum_{i=1}^m |\phi_i - \hat{\phi}_i| \leq \eps m
\end{equation*} with probability $9/10$.
\end{theorem}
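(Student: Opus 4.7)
The plan is to combine three ingredients: truncation to low frequencies, adaptive subsampling via hashing identities, and a careful de-collision procedure on a compact hash table. First I would argue that setting $\hat\phi_i = 0$ for $i > \tau = \Theta(1/\eps)$ contributes only $O(\eps m)$ to $\|\phi - \hat\phi\|_1$, since each distinct element with frequency at least $\tau+1$ uses more than $1/\eps$ of the stream length, so there are at most $\eps m$ such elements. It therefore suffices to estimate $\phi_1,\dots,\phi_\tau$ with $L_1$ error $O(\eps m)$. By the sampling analysis that is the subject of \cref{s:basic}, the rescaled empirical profile of $s = \Theta(1/\eps^2 \log(1/\eps))$ elements drawn uniformly from the distinct support of the stream achieves this bound in expectation, so the remainder of the proof is a space-efficient implementation of this sampling strategy.

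To sample adaptively in small space, I would hash each universe element to a random identity in $[n]$ and retain those whose hash has at least $\ell$ trailing zeros, raising $\ell$ as the stream progresses so that the expected sample size stays $\Theta(s)$. The level $\ell$ is driven by a standard $O(\log n)$-bit distinct-elements sketch together with an $O(\log\log m)$-bit Morris counter for the stream length. Sampled elements are inserted into a table of $B = \Theta(s)$ buckets, each storing a truncated counter of width $O(\log \tau) = O(\log(1/\eps))$; contributions are stratified by the least-significant-bit class of the identity hash so that elements demoted when $\ell$ increases can be subtracted exactly. A second Poissonization hash assigns each element a $\poi(1)$ multiplicity at insertion time, which renders the bucket outcomes independent and greatly simplifies the variance analysis.

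The main obstacle is the \postprocess{} step that converts observed bucket counts into estimates $\hat F_i$ of the number of sampled elements with true frequency $i$. I would run an iterative inversion: assuming $\hat F_1,\dots,\hat F_{i-1}$ are on hand, enumerate integer partitions of $i$ with at least two parts to form a plug-in estimate of the number of ``bad'' (collision) buckets whose summands total $i$; subtract from the observed count of buckets with value $i$ to recover the number of ``good'' (single-element) buckets; then rescale by the inverse probability that a given element lands in a bucket by itself. Two errors then arise and must be controlled. The \emph{random} error comes from deviations of bucket counts from their expectations; a union-plus-variance bound over $i \le \tau$ requires $O(1)$-wise independence of bucket counts, which is a statement about sums of indicators over the whole sample and therefore does \emph{not} follow from $O(1)$-wise independence of the underlying hash family. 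To obtain it I would seed the hashes used inside the subroutine that computes a constant-size tuple of bucket counts using Nisan's pseudorandom generator, whose seed of length $\polylog(1/\eps)$ fits into the space budget. The \emph{propagation} error comes from feeding noisy $\hat F_j$ for $j<i$ into the partition-based estimate; bounding it is the technically hardest step, and I would handle it by showing that because the probability of a $k$-way collision decays geometrically in $k$, the Jacobian of the inversion is essentially lower-triangular with geometrically decaying off-diagonals, so the cumulative propagation error across $i \le \tau$ is only a constant factor times the random error.

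Combining the two bounds yields $\E\bigl[\sum_{i=1}^\tau |\phi_i - \hat\phi_i|\bigr] = O(\eps m)$, which together with the $O(\eps m)$ truncation error and Markov's inequality (after shrinking $\eps$ by a constant) gives the claimed $9/10$ success probability. The space bound $O(1/\eps^2\log(1/\eps) + \log n + \log\log m)$ splits as $B \cdot O(\log(1/\eps))$ for the stratified hash table, $O(\log n)$ for the distinct-elements sketch and Nisan seed, and $O(\log\log m)$ for the Morris counter used to trigger level changes. The $O(\log(1/\eps))$ expected amortized update time is the cost of evaluating a constant number of $\polylog(1/\eps)$-seeded hashes and incrementing a stratified counter, and the $O(1/\eps^3 \log(1/\eps))$ post-processing time comes from iterating $i$ from $1$ to $\tau = O(1/\eps)$ with $O(1/\eps^2 \log(1/\eps))$ work per iteration.
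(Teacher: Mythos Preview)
Your high-level strategy matches the paper's almost exactly: truncate at $\tau=\Theta(1/\eps)$, adaptively subsample by least-significant-bit level, Poissonize, hash into $B=\Theta(s)$ buckets, and run the iterative inversion whose propagation error is controlled by the geometric decay of multi-way collisions. The use of Nisan's generator on a constant-size tuple of bucket counts is also exactly what the paper does.

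There is, however, a genuine gap in your space accounting. You write that the hash-table contribution is ``$B\cdot O(\log(1/\eps))$'', with each of the $B$ buckets storing a counter of width $O(\log\tau)=O(\log(1/\eps))$. But $B=\Theta(1/\eps^2\log(1/\eps))$, so $B\cdot O(\log(1/\eps))=O(1/\eps^2\log^2(1/\eps))$, which overshoots the claimed $O(1/\eps^2\log(1/\eps))$ by a full $\log(1/\eps)$ factor. Fixed-width counters simply do not fit in the stated budget.

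The paper closes this gap with two ingredients you are missing. First, the array $A$ is stored as a \emph{variable-bit-length} array, so the space is $O(B)$ plus the \emph{sum} of the bit-lengths of the individual (level, counter) pairs rather than $B$ times a worst-case width. Second, a Jensen-style argument shows this sum is $O(B)$ in expectation: letting $f_t(i)$ be the (truncated) frequency of element $i$ among the sampled elements at time $t$, one has $\sum_i \log f_t(i)\le D_t\log(\min\{t/D_t,\tau\})$, and after multiplying by the sampling probability $2^{-\ell_t}$ this is $O(B)$ under the level-update rule. A matching variance bound plus Chebyshev and a union bound over the $O(\log n)$ level-change times then shows the space stays $O(B)$ throughout the stream with constant probability. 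Without this refinement your argument only recovers the suboptimal $O(1/\eps^2\log^2(1/\eps))$ bound that the paper explicitly identifies (in \cref{s:basic}) as the ``simpler but suboptimal'' algorithm.
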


Before we describe and analyze \cref{alg:profile}, we will give a few remarks on its inputs. In addition to the stream, the algorithm takes as input several parameters: the domain size $n$ of the stream elements, a frequency threshold $\tau$ (we will ignore counts that exceed this threshold), a number of buckets $B$ for the core hash table, an error parameter $\eps$, and a variable $errortype$ indicating whether the error guarantee will be $\pm \eps D$ or $\pm \eps m$.

For the input parameter $n$, only an upper bound on the domain size is required to set the domain and range of the hash function $g_1$ which is used to sample elements. As our bounds depend logarithmically in $n$, any $\poly(n)$ upper bound suffices and we will assume for simplicity it is a power of two.
We will assume that $B = \Omega(\log n)$. If this is not the case, we can pick a smaller $\eps$ so that $B = \Theta(\log n)$, paying an additive term of $O(\log n)$ in space and leaving the asymptotic complexity unchanged. We also assume that $B = O(D)$: otherwise we have almost enough space to store the entire frequency histogram of the stream.

\begin{algorithm}[ht!]
\caption{\label{alg:profile} $\profilealg$}
\begin{flushleft}
{\bfseries Input:} stream $\mathbf{x} = x_1, \ldots, x_m$, domain size $n$, frequency threshold $\tau$, number of buckets $B$, error parameter $\eps$, $errortype$ (either $D$ or $m$) \\
{\bfseries Output:} estimated profile $\hat{\phi} = \hat{\phi}_1, \ldots \hat{\phi}_\tau$  \\
\begin{algorithmic}[1]
\State $T \gets \Theta(B^2), H \gets \Theta(\log B), K \gets \Theta(1)$
\State Initialize a distinct elements sketch with relative error $\eps/10$~\cite{kane2010optimal} and, if $errortype=D$, also initialize a strong tracking distinct elements sketch with relative error $1/10$ ~\cite{blasiok2020distinctelements}
\State Initialize the main array $A$ of $B$ buckets as a variable-bit-length array~\cite{blandford2008CompactDF} which will store in each bucket a string of (level, counter) pairs
\State Initialize hash functions $g_1: [n] \rightarrow [n]$, $g_2: [n] \rightarrow [T]$, $h_1, \ldots, h_H: [T] \rightarrow [B]$
\State Initialize hash function $z: [T] \rightarrow \N \cup \{0\}$ that maps values in $[T]$ to the outcome of a $\poi(1)$ random variable
\State Initialize current level $\ell_{cur} \gets 1$
\For{$x_t \in \mathbf{x}$} \Comment{Processing stream updates}
    \State Update distinct elements sketches
    \State Let $\tilde{D}_t$ be the estimate of the tracking sketch and let $t$ be the stream length so far
    \If{($errortype = m$ and $2^{\ell_{cur}} < \min\{t K/B, n\}$) or \\
    \qquad ($errortype=D$ and $2^{\ell_{cur}} < \tilde{D}_t K/B$)} \Comment{Decrease sampling probability} \label{line:update-level}
        \State $\ell_{cur} \gets \ell_{cur} + 1$
        \State For each (level, counter) pair in $A$, decrement the level
        \State If any level falls below $0$, remove the corresponding pair from $A$
    \EndIf
    \State $\ell, a_1, \ldots, a_H \gets \samplealg(x_t, H, g_1, g_2, z,h_1, \ldots, h_H)$
    \If{$\ell \geq \ell_{cur}$}
        \State $\updatealg(A, a_1, \ldots, a_H, \ell-\ell_{cur}, \tau)$
    \EndIf
\EndFor
\State $\hat{D} \gets$ distinct elements estimate with error $\eps/10$ \Comment{Post-processing to estimate profile}
\State $G \gets$ the number of nonempty buckets in $A$
\State $\hat{S} \gets - B \ln\left(1 - \frac{G}{B}\right)$ estimate of the number of elements which hash to $1$ under $g_1$
\State $b_i \gets $ number of buckets in $A$ with total count $i$ for $i \in \{1, \ldots, \tau\}$
\State $\hat{F}_1, \ldots, \hat{F}_{\tau} \gets \postprocess(B, \hat{S}, b_1, \ldots, b_{\tau})$ \Comment{Estimate the profile of the sampled elements}
\State $\hat{\phi}_i \gets \left(\frac{\hat{D}}{\hat{S}}\right) \hat{F_i}$ for $i \in \{1, \ldots, \tau\}$
\State \Return $\hat{\phi}_1, \ldots, \hat{\phi}_\tau$
\end{algorithmic}
\end{flushleft}
\end{algorithm}

\begin{algorithm}[ht!]
\caption{\label{alg:sampling} $\samplealg$}
\begin{flushleft}
{\bfseries Input:} stream element $x$, max copies $H$, hash functions $g_1, g_2, z, h_1, \ldots, h_H$ \\
{\bfseries Output:} sampling level, buckets to update $a_1, \ldots, a_H$
\begin{algorithmic}[1]
\State $\ell \gets$ least significant nonzero bit of $g_1(x)$
\State $x' \gets g_2(x)$ \Comment{Smaller ID}
\For{$i \in [H]$}
    \If{$i \leq z(x')$} \Comment{Checking number of copies from Poissonization}
        \State $a_i \gets h_i(x')$
    \Else
        \State $a_i \gets 0$
    \EndIf
\EndFor
\State \Return $\ell, a_1, \ldots, a_H$
\end{algorithmic}
\end{flushleft}
\end{algorithm}

\begin{algorithm}[ht!]
\caption{\label{alg:update} $\updatealg$}
\begin{flushleft}
{\bfseries Input:}  Array $A$, buckets $a_1, \ldots, a_H$ to increment, level $\ell$, max frequency $\tau$ \\
\begin{algorithmic}[1]
\State $j \gets$ smallest $i \in [H]$ s.t. $a_i = 0$ \Comment{The number of buckets to update}
\For{$i = 1, \ldots, j-1$}
    \If{there exists a (level, counter) pair with level $\ell$ in $A[a_i]$ }
        \State Increment the corresponding counter unless it exceeds $\tau$
    \Else
        \State Add a new pair ($\ell$, $1$) to $A[a_i]$
    \EndIf
\EndFor
\end{algorithmic}
\end{flushleft}
\end{algorithm}

\begin{algorithm}[ht!]
\caption{\label{alg:dp} $\postprocess$}
\begin{flushleft}
{\bfseries Input:} number of buckets $B$, estimated number of sampled elements $\hat{S}$, number of buckets $b_i$ with count $i$ for $i \in \{1, \ldots, \tau \}$  \\
{\bfseries Output:} estimated counts $\hat{F}_1, \ldots \hat{F}_{\tau}$  \\
\begin{algorithmic}[1]
\State Initialize $\tau \times \tau$ array $\DP$
\State $\DP[1,1] \gets b_1 e^{\hat{S}/B}$
\For{$i \in \{2, \ldots, \tau\}$}
    \State $\DP[i,i] \gets \max \{b_i e^{\hat{S}/B} - \sum_{x=1}^{\lfloor i/2 \rfloor} \DP[i,x], 0\}$
    \For{$x \in \{1, \ldots, \lfloor i/2 \rfloor\}$}
        \State $\DP[i,x] \gets \sum_{k=1}^{\lfloor i/x \rfloor - 1} \sum_{x' = x+1}^{i - kx} \DP[i - kx, x'] \left(\frac{\DP[x, x]}{B}\right)^k \frac{1}{k!}$
        \If{$i = 0 \bmod x$}
            \State $\DP[i,x] \gets \DP[i,x] + \frac{\DP[x, x]^{i/x}}{B^{i/x - 1} (i/x)!}$
        \EndIf
    \EndFor
\EndFor
\State $\hat{F}_i \gets \DP[i,i]$ for $i \in \{1, \ldots, \tau \}$
\end{algorithmic}
\end{flushleft}
\end{algorithm}

\paragraph*{Algorithm Description}
The algorithm is decomposed into four parts: the main algorithm $\profilealg$ (\cref{alg:profile}), the sampling procedure $\samplealg$ (\cref{alg:sampling}), the update procedure $\updatealg$ (\cref{alg:update}), and the post-processing procedure $\postprocess$ (\cref{alg:dp}).
The core data structure maintained by the algorithm is an array $A$ of $B$ buckets. Each bucket will contain, as necessary, pairs of (level, counter) indicating the summed frequency of items which hash to that bucket with a certain sampling ``level''.

Elements are sampled using the hash function $g_1$ and the $\samplealg$ subroutine. The main algorithm maintains a current level $\ell_{cur}$ and a stream element $x$ is sampled if the position of the least significant $1$ bit in the binary representation of $g_1(x)$ is at least $\ell_{cur}$. The level of the sampled element is the position of its least significant bit minus $\ell_{cur}$. $\poi(1)$ copies of sampled elements are made and assigned to random locations in the hash table using hash functions $g_2, z, h_1, \ldots, h_H$. The main algorithm periodically updates $\ell_{cur}$ to reduce the sampling probability based on constant factor estimates of the number of distinct elements and stream length to ensure that the number of samples is correct.

For each copy of a sampled element, we update the bucket count in the corresponding bucket in the hash table via $\updatealg$. If a (level, counter) pair already exists in that bucket for the level of our sample, we increment the counter; otherwise, we create a new (level, counter) pair. If a counter ever exceeds $\tau$, we stop incrementing that counter. When the current level $\ell_{cur}$ is incremented, we update all (level, counter) pairs in all buckets of the array by decrementing the level (remember the level of a sample is the position of the least significant bit relative to $\ell_{cur}$). Any time the level goes below zero, we remove the corresponding pair from its bucket.

At the end of the stream, we observe the number of buckets with total counts (summed over all counters in the bucket) $1, \ldots, \tau$, i.e., the profile of the array. In $\postprocess$, we estimate the profile of the sampled elements (which have been corrupted by hash collisions) in an iterative process using dynamic programming. Using the number of nonempty bins, we estimate the probability that a bucket receives a single element. We call these ``good'' buckets as their counts correspond to a single sampled item of that frequency. Using this estimated probability, we can estimate the number of items with a given frequency from the number of observed good buckets with that count. Unfortunately, there also exist ``bad'' buckets with multiple items whose count is the sum of the frequency of those items. As there cannot be a bad bucket with count $1$, we first estimate the number of sampled elements with frequency $1$ and use that to estimate the number of bad buckets of frequency $2$. These estimates are then used to estimate the number of bad buckets of frequency $3$, and so on. The dynamic program allows us to efficiently compute these iterative estimates without having to exhaustively list all integer partitions. Our final profile estimate in $\profilealg$ comes from renormalizing the estimates of the empirical sample profile returned by $\postprocess$.

\paragraph*{Notation}
To introduce some notation, let $D_t = |\{x \in (x_1, \ldots, x_t)\}|$ be the number of distinct elements in the stream up to time $t \in [m]$ with the shorthand $D = D_m$. Let $\tilde{D}_t$ be the estimate of the tracking sketch at time $t$. Let $\ell_t$ be the value of $\ell_{cur}$, the current level of the algorithm, at time $t$.

Let $S = \{x \in \mathbf{x}: g_1(x) = 1\}$ be the set of elements sampled by our algorithm at the end of the stream. Let $m_S = \sum_{i=1}^m \1[g_1(x_i) = 1]$ be the mass of elements sampled by our algorithm. Let $F_i = |\{x \in S: |\{j: x_j = x\}| = i\}|$ be the number of sampled elements with frequency $i$. Let $G$ be the number of nonempty buckets in the array $A$ at the end of the stream. We use $\log(x)$ to denote the logarithm of $x$ base $2$. \\

In the rest of this section, we start by bounding several quantities used by our algorithm. Then, the bulk of the analysis in \cref{subsec:inverting} focuses on the estimation error introduced by the post-processing procedure in \cref{alg:dp}.
To begin, we analyze our algorithm under the assumption that all of our hash functions are fully random and at the end in \cref{subsec:pseudorandom} show that our analysis only required limited randomness.

By the guarantees of \cite{blasiok2020distinctelements} and \cite{kane2010optimal}, the estimate of $\tilde{D}_t$ is correct up to relative error $1/10$ at all points in the stream and the estimate of $\hat{D}$ is correct up to relative error $\eps/10$ at the end of the stream with small constant failure probability and using independent randomness from the rest of the algorithm. In what follows, we will condition on the success of these estimators. 

\begin{lemma}\label{lem:space-exp-var}
At any point $t$ in the stream, the amount of space (in bits) required to store the array $A$ has expectation
\begin{equation*}
    O\left(B + 2^{-\ell_t} D_t \log(\min\{t/D_t, \tau\})\right)
\end{equation*}
and variance
\begin{equation*}
    O\left(2^{-\ell_t} D_t \log^2(\min\{t/D_t, \tau\})\right).
\end{equation*}
\end{lemma}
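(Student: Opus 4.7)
The plan is to upper-bound the bit length of $A$ by $O(B)$ of overhead (headers and indexing of the variable-bit-length array from \cite{blandford2008CompactDF}) plus the total length of the (level, counter) pairs stored across buckets. Using a self-delimiting integer code, each pair costs $O(1 + \log(\text{level}+1) + \log(\text{counter}+1))$ bits. To convert this into a per-element sum, I would charge each pair to the sampled elements contributing to it and use the inequality $\log(\sum_{i=1}^k a_i + 1) \le \sum_{i=1}^k (1 + \log(a_i + 1))$ for $a_i \ge 0$ (which follows from $\log \sum a_i \le \log k + \log \max a_i$, $\log \max a_i \le \sum \log a_i$ when $a_i \ge 1$, and $\log k \le k$). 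Re-indexing by element, the total counter cost is at most $\sum_{\text{sampled } x} P_x \cdot O(1 + \log \min(f(x), \tau))$, where $f(x)$ is the frequency of $x$ in $x_1, \ldots, x_t$ and $P_x = z(g_2(x)) \sim \poi(1)$. The stored level offset $\ell(x) - \ell_t$ conditioned on $\ell(x) \ge \ell_t$ is geometric with mean $O(1)$, so it contributes $O(1)$ expected bits per copy.

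Let $Y_x = \1[\ell(x) \ge \ell_t] \cdot P_x \cdot O(1 + \log \min(f(x), \tau))$, so that the bit length of $A$ is at most $O(B) + \sum_x Y_x$. Under fully random $g_1, g_2, z, h_1, \ldots, h_H$ we have $\Pr[\ell(x) \ge \ell_t] = 2^{-\ell_t}$ and $\E[P_x] = 1$, giving by linearity
\[
\E[\text{bits}(A)] \le O(B) + 2^{-\ell_t} \sum_x O(1 + \log \min(f(x), \tau)).
\]
The $O(2^{-\ell_t} D_t)$ contribution from the constant ``$1$'' is absorbed into $O(B)$, because the sampling-level update on Line~\ref{line:update-level} enforces $2^{\ell_t} = \Omega(D_t / B)$ (via the tracking estimate $\tilde D_t$ when $errortype = D$, and via $t \ge D_t$ when $errortype = m$), so $2^{-\ell_t} D_t = O(B/K) = O(B)$. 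For the logarithmic term, concavity of $y \mapsto \log \min(y, \tau)$ together with $\tfrac{1}{D_t} \sum_x f(x) = t/D_t$ and Jensen's inequality yield $\sum_x \log \min(f(x), \tau) \le D_t \log \min(t/D_t, \tau)$, matching the claimed expected space.

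For the variance, the key observation is that the $Y_x$'s are mutually independent across distinct elements under full randomness, because each $Y_x$ depends only on the evaluations of $g_1, g_2, z, h_1, \ldots, h_H$ at the single input $x$. Hence $\Var[\text{bits}(A)] \le \sum_x \E[Y_x^2]$, and using $\E[P_x^2] = 2$ for $P_x \sim \poi(1)$ together with $(1 + a)^2 = O(1 + a^2)$, we obtain $\E[Y_x^2] = O(2^{-\ell_t} (1 + \log^2 \min(f(x), \tau)))$. The constant term again sums to $O(2^{-\ell_t} D_t) = O(B)$, which is subsumed by the claimed bound. It remains to show $\sum_x \log^2 \min(f(x), \tau) = O(D_t \log^2 \min(t/D_t, \tau))$; I would prove this by cases on whether $\tau \le t/D_t$, using the uniform bound $\log^2 \min(f, \tau) \le \log^2 \tau = \log^2 \min(t/D_t, \tau)$ in the capped regime, and otherwise splitting the elements at $f(x) = e$ (those with $f(x) \le e$ contribute only $O(D_t)$, absorbed) and applying Jensen on the concave range $[e, \infty)$ of $\log^2$.

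The step I expect to be the main obstacle is the charging argument in the first paragraph: when several sampled elements collide on the same bucket and share a single (level, counter) pair, a naive per-element accounting would double-count the bits of the merged counter, so one must verify that the sub-additive log inequality really captures the variable-bit-length cost of the shared pair rather than its per-contributor cost. A secondary technicality is the $\log^2$ bound in the variance analysis, where global Jensen fails because $\log^2$ is convex on $[1, e]$, forcing the case split described above.
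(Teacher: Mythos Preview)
Your approach is essentially the paper's: upper-bound the data bits by a per-element sum, compute the expectation via linearity and Jensen on $\log$, and compute the variance via independence across elements. Two points where the paper's execution is lighter than yours. First, your ``main obstacle'' is not one: the paper disposes of the charging step with the single remark that storing a separate counter for each sampled element (rather than a merged counter per bucket) only uses more space, which is precisely your sub-additive inequality stated as the obvious direction. Second, for the $\log^2$ bound the paper avoids your case split at $f(x)=e$ by noting that $\log^2(x+e)$ is concave on $[0,\infty)$ (its second derivative has sign $1-\ln(x+e)\le 0$), so a single global Jensen suffices. One small thing to tighten in your write-up: you fold the random level-offset cost $\log(\ell(x)-\ell_t+1)$ into an $O(1)$ in the definition of $Y_x$, which is fine for the expectation but for the variance you need the second moment of that term conditioned on $\ell(x)\ge \ell_t$ to be $O(1)$ as well; it is, since $\sum_{k\ge 0}2^{-k}\log^2(k+1)<\infty$, but the paper makes this explicit by parameterizing over $\ell$ and using negative correlation of the events $\{\ell(x)=\ell\}$.
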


\begin{proof}
The variable-bit-length array construction given in~\cite{blandford2008CompactDF} guarantees that the space used to store the array $A$ is, up to constant factors, the number of buckets $B$ plus the total number of bits required to represent the data stored in the buckets. To prove the lemma, we will bound the latter quantity.

Consider a fixed point $t$ in the stream. Let $X_{i, \ell}$ be a random variable indicating the bit complexity of storing the level and count of element $i$ under the random event that $g_1(i)$ has least significant bit $\ell$. Let $f_t(i)$ be the number of times $i$ appears up to time $t$, truncated to be at most $\tau$. For $\ell < \ell_t$, $X_{i, \ell} = 0$ and otherwise if $f_t(i) > 0$, $X_{i,\ell} \leq \lceil \log(\ell - \ell_t + 1) \rceil + \lceil \log(f_t(i)) \rceil + 2$.
Let $Z_i = z(i)$ be the number of copies of element $i$ made for Poissonization. Recall that $Z_i \sim \poi(1)$.
Then, $\sum_{i=1}^n \sum_{\ell=1}^{\log n}  X_{i, \ell} Z_i$ upper bounds the space required to store all of the (level, counter) pairs in $A$ as separately storing counters for each sampled item rather than for each nonempty bucket will only use more space.

The expectation of the sum is
\begin{align*}
    \E\left[\sum_{i=1}^n \sum_{\ell=1}^{\log n} X_{i, \ell} Z_i\right] &= \sum_{\ell = \ell_t}^{\log n} \sum_{i \in [n]: f_t(i) > 0} 2^{-\ell} \left(\lceil \log(\ell - \ell_t + 1) \rceil + \lceil \log(f_t(i)) \rceil + 2\right) \\
    &= 2^{-\ell_t}D_t \sum_{\ell'=0}^{\log n - \ell_t} 2^{-\ell'}\left(2 + \lceil \log(\ell' + 1) \rceil\right) + \sum_{i \in [n]: f_t(i) > 0} \lceil \log(f_t(i)) \rceil \sum_{\ell = \ell_t}^{\log n} 2^{-\ell}.
\end{align*}
The first term above converges to $O(2^{-\ell_t} D_t)$ and the second term is bounded by 
\begin{equation*}
     O\left(  2^{-\ell_t} \sum_{i \in [n]: f_t(i) > 0}  \log(f_t(i)) \right). 
\end{equation*}
To analyze the second term, note that $\sum_{i \in [n]: f_t(i) > 0} f_t(i) \leq \min\{t, \tau D_t\}$. By Jensen's inequality and the concavity of logarithm,
\begin{equation*}
    D_t \sum_{i \in [n]: f_t(i) > 0}  \frac{1}{D_t} \log(f_t(i)) \leq D_t \log\left(\frac{1}{D_t} \sum_{i \in [n]: f_t(i) > 0} f_t(i) \right) = D_t \log(\min\{t, \tau D_t\}/D_t).
\end{equation*}
So, in total, the expected space required to store all (level, counter) pairs is
\begin{equation*}
    \E\left[\sum_{i=1}^n \sum_{\ell=1}^{\log n} X_{i, \ell} Z_i\right] = O\left(2^{-\ell_t} D_t \log(\min\{t/D_t, \tau\})\right).
\end{equation*}

We will continue by bounding the variance of the required space. For a fixed element $i$,
\begin{equation*}
    \Var\left(\sum_{\ell=1}^{\log n} X_{i, \ell} Z_i \right) \leq \sum_{\ell=1}^{\log n} \Var(X_{i, \ell} Z_i)
\end{equation*}
as $X_{i, \ell} Z_i$ are negatively correlated across different $\ell$: if $X_{i, \ell}$ is nonzero, all $X_{i, \ell'}$ for $\ell' \neq \ell$ are zero. In addition, since the $X_{i, \ell} Z_i$ variables are independent across different $i$,
\begin{equation*}
    \Var\left( \sum_{i=1}^n \sum_{\ell=1}^{\log n} X_{i, \ell} Z_i \right)
    \leq \sum_{i=1}^n  \sum_{\ell=1}^{\log n} \Var(X_{i, \ell} Z_i).
\end{equation*}
As $X_{i, \ell}$ and $Z_i$ use independent randomness and $\Var(Z_i)=1$,
\begin{align*}
    \Var(X_{i, \ell} Z_i) &= \Var(X_{i, \ell}) \Var(Z_i) + Var(X_{i, \ell}) \E[Z_i^2] + \E(X_{i, \ell}^2) \Var(Z_i) \\
    &\leq 4\E[X^2_{i, \ell}] \\
    &= 4 2^{-\ell} \left(\lceil \log(\ell - \ell_t + 1) \rceil + \lceil \log(f_t(i)) \rceil + 2\right)^2 \\
    &= O\left(2^{-\ell} \left(\log^2(\ell - \ell_t + 1) + \log^2(f_t(i)) + 1\right)\right)
\end{align*}
By the same calculation as for the expectation earlier,
\begin{equation*}
    \sum_{i=1}^n \sum_{\ell=1}^{\log n} 2^{-\ell} \left(\log^2(\ell - \ell_t + 1) + \log^2(f_t(i)) + 1\right)
    = O\left(2^{-\ell_t} D_t + 2^{-\ell_t} \sum_{i \in [n]: f_t(i) > 0} \log^2(f_t(i))\right).
\end{equation*}
As $\log^2(x + e)$ is concave for $x > 0$, 
\begin{equation*}
     \sum_{i \in [n]: f_t(i) > 0} \log^2(f_t(i)) \leq D_t \log^2(\min\{t, \tau D_t\}/D_t + e/D_t)
\end{equation*}
by Jensen's inequality.
So, in total, the variance is bounded by
\begin{equation*}
    \Var\left(\sum_{i=1}^n \sum_{\ell=1}^{\log n} X_{i, \ell} Z_i\right) = O\left(2^{-\ell_t} D_t \log^2(\min\{t/D_t, \tau\})\right).
\end{equation*}
\end{proof}

\begin{lemma}\label{lem:counts}
With constant probability, over all points in the stream $t \in [m]$, it holds that the space used to store $A$ is $O(B)$.
% \begin{enumerate}[label=(\alph*)]
%     \item If $errortype=m$, the space used to store $A$ is $O(B)$.
%     \item If $errortype=D$, the space used to store $A$ is $O(B \log \tau)$.
% \end{enumerate}
\end{lemma}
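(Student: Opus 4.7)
The plan is to reduce the uniform-over-$t$ statement to a union bound over only $O(\log n)$ critical moments, one per sampling level. Since \cref{line:update-level} only ever increments $\ell_{cur}$ and the cap $\min\{tK/B,n\}$ keeps $\ell_{cur} \le \log n$, the stream decomposes into at most $\log n$ contiguous epochs on which $\ell_{cur}$ is constant. Within an epoch, \cref{line:update-level} is never triggered, so the only code that modifies $A$ is \cref{alg:update}, which only creates new (level, counter) pairs or grows existing counters. Together with the variable-bit-length encoding, this makes the bit-length of $A$ monotone nondecreasing on each epoch. Hence it suffices to bound the space at the last time $T_\ell$ of every epoch and take a union bound.

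At each $T_\ell$, I would invoke \cref{lem:space-exp-var} using the invariant enforced by \cref{line:update-level}: after every stream step, $2^{\ell_{cur}} \ge \min\{tK/B,n\}$ in the $errortype=m$ case, and (via the constant-factor accuracy of the tracking sketch) $2^{\ell_{cur}} = \Omega(D_t K/B)$ in the $errortype=D$ case. Both bounds imply $2^{-\ell_t} D_t = O(B/K) = O(B)$ using $D_t \le t$ and $D_t \le n$. Combined with the elementary inequality $x\log^k(1/x) = O(1)$ on $x \in (0,1]$ for fixed $k$ (applied with $x = D_t/t$), this collapses the logarithmic factors appearing in \cref{lem:space-exp-var} and yields $\E[X_t] = O(B)$ and $\Var(X_t) = O(B)$, where $X_t$ denotes the number of bits used by $A$ at time $t$.

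With these bounds in hand, Chebyshev's inequality delivers, for any sufficiently large constant $c$,
\begin{equation*}
    \Pr\bigl[X_{T_\ell} > cB\bigr] \le \frac{\Var(X_{T_\ell})}{(cB - \E[X_{T_\ell}])^2} = O\!\left(\frac{1}{c^2 B}\right) \le O\!\left(\frac{1}{c^2 \log n}\right),
\end{equation*}
invoking the standing assumption $B = \Omega(\log n)$. A union bound over the at most $\log n$ epochs then gives $X_{T_\ell} \le cB$ simultaneously at every epoch boundary with probability $1 - O(1/c^2)$, an arbitrarily small constant for $c$ large. Combined with within-epoch monotonicity, this yields $X_t = O(B)$ uniformly over $t \in [m]$.

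The main obstacle is the second paragraph: one has to verify, across both $errortype$ settings and both the ``typical'' and ``level-capped'' ($2^{\ell_t} = n$) regimes, that the $\log^k(\min\{t/D_t,\tau\})$ factors do not secretly contribute a $\log(1/\eps)$. The relevant case split is whether $t/D_t \le \tau$, in which case $x\log^k(1/x)$ with $x = D_t/t$ suppresses the log factor to $O(1)$, or $t/D_t > \tau$, in which case the constraint $D_t/t < 1/\tau$ already kills $\log^k\tau$; the capped regime is then handled using $D_t/n \le 1$ together with the slack $B = \Omega(1/\eps^2) \gg \log^k(1/\eps)$.
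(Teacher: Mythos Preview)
Your proposal is correct and follows essentially the same approach as the paper: reduce to at most $O(\log n)$ epoch endpoints via within-epoch monotonicity of the stored bits, invoke \cref{lem:space-exp-var} together with the level invariant from \cref{line:update-level} to get $\E[X_{T_\ell}]=O(B)$ and $\Var(X_{T_\ell})=O(B)$, apply Chebyshev, and union bound using $B=\Omega(\log n)$. Your handling of the capped regime (using $D_t/n\le 1$ and $\log^k\tau \ll B$) is in fact slightly tighter than the paper's, which instead argues that only the final epoch can be capped and accepts the looser $O(1/\alpha^2)$ tail there; both routes give the same conclusion.
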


\begin{proof}
For any consecutive period of time during which $\ell_t$ remains unchanged, the space used by the algorithm never decreases. In addition, the points at which $\ell_t$ changes relies only on the stream length and tracking sketch and thus is independent of the other randomness used by the algorithm. Let $t_i$ be the last point in the stream with $\ell_{t_i} = i$ for $i \in \{1, \ldots, r\}$. It must be the case that the maximum space usage of the stream occurs at one of these points. Note that $r < O(\log n)$ as we never increase $\ell_t$ if $2^{\ell_t} = \Omega(n)$ (note that this is maintained implicitly for $errortype=D$ as $\tilde{D}_t \leq 1.1 n$). 

Let $Y$ be a random variable corresponding to the space used for storing $A$.
By \cref{lem:space-exp-var}, the expected space is
\begin{equation*}
    \E[Y] = O\left(B + 2^{-\ell_t} D_t \log(\min\{t/D_t, \tau\})\right).
\end{equation*}
And by Chebyshev's inequality,
\begin{equation*}
    \Pr\left(|Y - \E[Y]| \geq \alpha B\right) = O\left(\frac{B + 2^{-\ell_t} D_t \log^2(\min\{t/D_t, \tau\})}{\alpha^2 B^2}\right)
\end{equation*}
for any $\alpha > 0$.
We will proceed separately for each error type.

\begin{alphaenumerate}
    \item $errortype=m$:
    By the condition on Line~10 %\ref{line:update-level} 
    % hardwired, otherwise shows line 11 which is confusing
    in \cref{alg:profile}, we always maintain that $2^{\ell_t}$ is at least  $\min\{t/KB, n\}$ for constant $K$. 
    Noting that $t \geq D_t$ and $B = \Omega(\log n)$, the expected space is
    \begin{equation*}
        O\left(B + \frac{B D_t \log(t/D_t)}{t} + \frac{D_t\log(t/D_t)}{n}\right) = O(B).
    \end{equation*}
    If $2^{\ell_t} = \Theta(t/B)$, the probability of deviation is bounded by 
    \begin{equation*}
        \Pr\left(|Y - \E[Y]| \geq \alpha B\right) = O\left(\frac{1}{\alpha^2 B} + \frac{D_t \log^2(t/D_t)}{\alpha^2 t B}\right) = O\left(\frac{1}{\alpha^2 B}\right).
    \end{equation*}
    If $2^{\ell_t} \geq 2n$,
    \begin{equation*}
        \Pr\left(|Y - \E[Y]| \geq \alpha B\right) = O\left(\frac{1}{\alpha^2 B} + \frac{D_t\log^2(t/D_t)}{n \alpha^2 B^2}\right) = O\left(\frac{1}{\alpha^2}\right).
    \end{equation*}
    Once $2^{\ell_t}$ exceeds $2n$, it will never be incremented again, so for $t \in \{t_1, \ldots, t_{r-1}\}$, it must be the case that $2^{\ell_t} < 2n$.
    In addition, for these timesteps, it must be the case that $2^{\ell_t} \approx t/KB$ as, by definition, after adding one additional point to the stream, $2^{\ell_t}$ will exceed $t/KB$.
    Union bounding over all $t_1, \ldots, t_r$, the probability that the space ever exceeds $C B$ for some large enough constant $C$ is at most
    \begin{align*}
        O\left(\frac{r-1}{\alpha^2 B} + \frac{1}{\alpha^2}\right) = O\left(\frac{1}{\alpha^2}\right).
    \end{align*}
    So, choosing some appropriate constant $\alpha$, the probability that the space ever exceeds $CB$ over the course of the stream is bounded by a small constant.

    \item $errortype=D$ and $\tau = O(1)$:
    By the condition on Line~\ref{line:update-level} in \cref{alg:profile}, we always maintain that $2^{\ell_t}$ is at least $\tilde{D}_t/BK$ for constant $K$. Therefore, $2^{-\ell_t} D_t \log \tau = O(B)$ and the expected space and variance is $O(B)$.
    For $\alpha > 0$, the probability of deviation is bounded by 
    \begin{equation*}
        \Pr\left(|Y - \E[Y]| \geq \alpha B\right)
        = O\left(\frac{1}{\alpha^2 B}\right).
    \end{equation*}
    Union bounding over all $t_1, \ldots, t_r$ and using the assumption that $B = \Omega(\log n)$, the probability that the space ever exceeds $\alpha B$ is at most
    \begin{align*}
        O\left(\frac{r}{B\alpha^2}\right) = O\left(\frac{1}{\alpha^2}\right).
    \end{align*}
    So, choosing some appropriate constant $\alpha$, the probability that the space ever exceeds $\alpha B$ over the course of the stream is bounded by a small constant.
\end{alphaenumerate}
\end{proof}

\begin{lemma}\label{lem:S} 
With constant probability, the size of the final sample is bounded as follows:
\begin{alphaenumerate}
    \item If $errortype=m$, $|S| = \Theta\left(\max\{DB/m, D/n\}\right)$.
    \item If $errortype=D$, $|S| = \Theta(B)$.
\end{alphaenumerate}
\end{lemma}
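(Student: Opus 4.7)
My plan is a two-step argument: first pin down the final sampling level $\ell_m$ (the value of $\ell_{cur}$ at the end of the stream), then apply a Chebyshev concentration bound. Under the assumption of fully random hashes that is in effect for this part of the analysis, $g_1(x)$ is uniform over the power-of-two domain $[n]$, so an element of the stream is sampled iff the least significant $1$-bit of $g_1(x)$ sits at position at least $\ell_m$, an event of probability $2^{-\ell_m}$ independently across distinct elements. Consequently $|S| \sim \mathrm{Bin}(D, 2^{-\ell_m})$ with $\E[|S|] = D\cdot 2^{-\ell_m}$ and $\Var(|S|) \leq \E[|S|]$.

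To bound $\ell_m$, I would use the level-update rule on Line~\ref{line:update-level} of \cref{alg:profile}. Because $\ell_{cur}$ grows by at most one per stream step, and the target threshold ($\min\{tK/B, n\}$ in case (a), $\tilde{D}_t K/B$ in case (b)) grows by at most $K/B = O(1)$ per step, a short induction on $t$ shows that $2^{\ell_{cur}}$ stays within a constant factor of the target at all times (once the target exceeds a small constant). Plugging in $t = m$ and invoking the conditioning $\tilde{D}_m = \Theta(D)$ from the success of the tracking sketch gives $2^{\ell_m} = \Theta(\min\{m/B, n\})$ in case (a) and $2^{\ell_m} = \Theta(D/B)$ in case (b). Substituting into $D\cdot 2^{-\ell_m}$ recovers the claimed expectations $\Theta(\max\{DB/m, D/n\})$ and $\Theta(B)$, respectively.

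Chebyshev's inequality then yields $\Pr\bigl[||S| - \E[|S|]| \geq \alpha \E[|S|]\bigr] \leq 1/(\alpha^2 \E[|S|])$, which is a small constant for appropriate $\alpha$ whenever $\E[|S|]$ is bounded below by a sufficiently large constant. In case (b) this is immediate since $\E[|S|] = \Theta(B) = \omega(1)$ by $B = \Omega(\log n)$; in case (a) it holds in the non-trivial regime where $\max\{DB/m, D/n\}$ is at least a constant, which is the regime where the $\Theta$ statement is meaningful. The main obstacle is cleanly separating the two sources of randomness: $\ell_m$ is itself random through $\tilde{D}_t$, whose randomness is independent of $g_1$. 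I would handle this by first conditioning on the constant-probability event that the tracking sketch succeeds (so that $\tilde{D}_t = \Theta(D_t)$ for all $t$ and $\ell_m$ becomes a function of the stream alone with the bounds above), then applying Chebyshev to $|S|$ with $\ell_m$ fixed, and finally union-bounding over the two constant-probability failure events.
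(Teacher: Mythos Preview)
Your proposal is correct and follows essentially the same argument as the paper: condition on the final level $\ell_m$, observe that $|S|\sim\mathrm{Bin}(D,2^{-\ell_m})$, pin down $2^{\ell_m}$ from the level-update rule (up to a constant factor, once the threshold has grown past a small constant), and finish with Chebyshev. The paper is terser---it simply asserts that ``after a certain point in the stream'' the level tracks the threshold and invokes the standing assumption $B=O(D)$ to ensure that point is reached---whereas you spell out the induction and the conditioning on the tracking sketch more explicitly, but there is no substantive difference in approach.
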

\begin{proof}
Conditioning on the final level $\ell_m$, each element is sampled with probability $2^{-\ell_m}$. $|S|$ is distributed as a Binomial random variable with expectation $2^{-\ell_m} D$ as there are $D$ possible elements to sample.

For  $errortype = m$, after a certain point in the stream, we maintain that $2^{\ell_t} \approx \min\{tK/B, n\}$ (up to a factor of two), so $2^{-\ell_m} D = \Theta(\max\{D B/m, D/n\})$. 
While $D_t$ can be significantly smaller than $D_t B / t$ early on, since $B = O(D)$, by the end of the stream we will have reached a point where $D_t = \Omega(D_t B/t)$.
% While $2^{-\ell_t} D_t$ can be significantly smaller than $D_t B/t$ early on, since $B = O(D) = O(m)$, by the end of the stream they must either be the same order of magnitude or $DB/m = O(1)$.

For $errortype = D$, after a certain point in the stream, we maintain that $2^{\ell_t} \approx \tilde{D}_t K/B$ (up to a factor of two), so $2^{-\ell_m} D = \Theta(B)$. While $2^{-\ell_t} D_t$ can be significantly smaller than $B$ early on, since $B = O(D)$, by the end of the stream they must be the same order of magnitude.

As the variance of a Binomial random variable is at most its expectation, the results follow from Chebyshev's inequality.
\end{proof}

% \begin{lemma}\label{lem:m_S}
% With constant probability, $m_S = O\left(\frac{\log(1/\eps)}{\eps^2}\right)$.
% \end{lemma}
% \begin{proof}
% As each element is sampled with probability $\Theta(\frac{\log(1/\eps)}{m \eps^2})$ and the total mass of all elements is $m$, $\E[m_S] = \Theta\left(\frac{\log(1/\eps)}{\eps^2}\right)$.
% The result follows from Markov's inequality.
% \end{proof}

\begin{lemma}\label{lem:poisson}
With constant probability, the number of elements with frequency $i$ in a given bucket is distributed as $\poi(F_i/B)$.
\end{lemma}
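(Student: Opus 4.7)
The plan is to reduce the statement to a standard Poisson thinning calculation, with the key technical work being to justify that each sampled element uses independent randomness from the Poissonization hash $z$ and the bucket hashes $h_1,\ldots,h_H$, and that the hard truncation at $H$ copies in $\samplealg$ is irrelevant with high probability. Throughout, I condition on the outcome of $g_1$ (which determines the sample $S$ and hence $F_i$) and work in the probability space of the independent hashes $g_2,z,h_1,\ldots,h_H$.

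First I would identify the two bad events that must be excluded. Let $E_1$ be the event that $g_2$ is \emph{not} injective on $S$, and let $E_2$ be the event that some $x\in S$ has $z(g_2(x))>H$. By \cref{lem:S}, $|S|=O(B)$ with constant probability, so I condition on that. Since $T=\Theta(B^2)$, a standard birthday computation with fully random $g_2$ gives $\Pr[E_1]=O(|S|^2/T)=O(1)$, and this constant can be made arbitrarily small by choosing the hidden constant in $T=\Theta(B^2)$ large enough. For $E_2$: $z(g_2(x))\sim\poi(1)$, and since $H=\Theta(\log B)$, we have $\Pr[\poi(1)>H]\le e/H!\le 1/B^{10}$ for a sufficient constant in $H$; union bounding over the $O(B)$ sampled elements gives $\Pr[E_2]=o(1)$.

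On the complementary good event $\overline{E_1}\cap \overline{E_2}$, each distinct $x\in S$ maps to a distinct key $x'=g_2(x)\in [T]$, so the random variables $z(g_2(x))$ and $\{h_k(g_2(x))\}_k$ are mutually independent across different sampled elements. Moreover, since $z(g_2(x))\le H$, the truncation in $\samplealg$ does not remove any copies, so the number of copies of $x$ that land in a fixed bucket $b$ equals
\[
C_x := \sum_{k=1}^{z(g_2(x))}\1[h_k(g_2(x))=b].
\]
By Poisson thinning, since $z(g_2(x))\sim\poi(1)$ and each copy lands in $b$ independently with probability $1/B$, we have $C_x\sim\poi(1/B)$.

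Finally, restrict attention to those $F_i$ sampled elements whose stream frequency is exactly $i$; on the good event these elements contribute independent $\poi(1/B)$ variables $C_x$, so the total number of frequency-$i$ copies landing in bucket $b$ is $\sum_{x:f(x)=i} C_x\sim\poi(F_i/B)$ by convolution of independent Poissons. Combining the constant-probability guarantees for $|S|=O(B)$, for $\overline{E_1}$, and for $\overline{E_2}$ yields the claim. The only real subtlety I expect to have to argue carefully is the separation of the $g_1$-randomness (which controls $F_i$) from the $(g_2,z,h_k)$-randomness (which controls the bucket distribution); everything else is Poisson thinning plus a union bound.
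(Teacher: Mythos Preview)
Your proposal is correct and follows essentially the same approach as the paper: a Poisson thinning argument combined with a union bound over the truncation event $z(g_2(x))>H$. You are in fact more careful than the paper's own proof, which does not explicitly address the possibility that $g_2$ collides on $S$ (the paper defers that to the pseudorandomness section); otherwise the two arguments are the same.
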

\begin{proof}
Let $x\in S$ be a sampled element with frequency $i$.
Each time $x$ appears in the stream, $Z_x \sim \poi(1)$ copies of it are added to the stream.
Summing over all such $x \in S$, the total number of (copied) elements with frequency $i$ added to $A$ is  distributed as $\poi(F_i)$.
By the ``Poissonization'' trick, the number of elements with frequency $i$ in a given bucket is distributed as $\poi(F_i/B)$, as desired.

The one caveat is that we truncate $Z_x$ to be at most $H$. However, the probability that a $\poi(1)$ random variable exceeds $H$ is at most $e^{-O(H)}$. As $S = O(B)$, setting $H = \Theta(\log B)$ suffices for $Z_x$ to not exceed $H$ with constant probability after union bounding over all $x \in S$.
\end{proof}

Recall that $\hat{S} = - B \ln\left(1 - \frac{G}{B}\right)$ is our estimate of $|S|$.
\begin{lemma}\label{lem:hatS}
With constant probability, $|\hat{S} - |S|| = O(\sqrt{|S|})$.
\end{lemma}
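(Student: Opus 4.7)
The plan is to argue via Poissonization that $G$ is a sum of independent Bernoullis with a controllable mean and variance, apply Chebyshev's inequality, and then convert a deviation bound on $G$ to a deviation bound on $\hat{S}$ via a local Lipschitz bound on the map $g \mapsto -B\ln(1-g/B)$.

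First I would observe that by \cref{lem:poisson} (conditioning on the constant-probability event that the Poisson counts are not truncated by $H$), summing the per-frequency bucket counts over $i$ shows that the total count in each bucket is distributed as $\poi(|S|/B)$, and these bucket counts are mutually independent by the standard Poissonization property. Hence the indicators $Y_j = \1[\text{bucket } j \text{ nonempty}]$ are i.i.d.\ Bernoulli with success probability $p = 1 - e^{-|S|/B}$, so
\begin{equation*}
    \E[G] = Bp, \qquad \Var(G) = Bp(1-p) \leq Bp = B(1-e^{-|S|/B}) \leq |S|,
\end{equation*}
where in the last step I use $1 - e^{-x} \leq x$. By Chebyshev's inequality, with constant probability,
\begin{equation*}
    |G - Bp| \leq C\sqrt{|S|}
\end{equation*}
for some absolute constant $C$.

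Next I would condition additionally on \cref{lem:S}, which gives $|S| = O(B)$ in both error regimes. This implies $|S|/B = O(1)$, so $1 - p = e^{-|S|/B} = \Omega(1)$, i.e., $Bp$ is bounded away from $B$ by a $\Theta(B)$ gap. Since the deviation $|G - Bp| = O(\sqrt{|S|}) = O(\sqrt{B})$ is $o(B)$, the random point $G$ also lies at distance $\Omega(B)$ from $B$, so throughout the interval between $Bp$ and $G$ we have $1 - g/B = \Omega(1)$.

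Define $f(g) = -B\ln(1 - g/B)$; then $f'(g) = (1 - g/B)^{-1}$, and a direct calculation gives $f(Bp) = -B \ln(e^{-|S|/B}) = |S|$, while $f(G) = \hat{S}$ by definition. By the previous paragraph, $|f'(g)| = O(1)$ uniformly on the interval between $Bp$ and $G$, so by the mean value theorem
\begin{equation*}
    |\hat{S} - |S|| = |f(G) - f(Bp)| = O(|G - Bp|) = O(\sqrt{|S|}),
\end{equation*}
as desired. The union of the three constant-probability events invoked above (Poisson truncation does not occur, Chebyshev bound holds for $G$, and the sample size bound of \cref{lem:S}) still holds with constant probability.

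The main obstacle is the possibility that $G$ is close to $B$, where $f'$ blows up and a small additive deviation in $G$ could translate into an uncontrolled deviation in $\hat{S}$. The key input that rules this out is the $|S| = O(B)$ bound from \cref{lem:S}, which keeps $p$ bounded away from $1$; without such a bound, the estimator $\hat{S}$ would be genuinely unstable in this regime.
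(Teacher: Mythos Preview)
Your proof is correct and follows essentially the same approach as the paper: bound $|G - \E[G]|$ by Chebyshev, then transfer this to $|\hat{S} - |S||$ via the local behavior of $g \mapsto -B\ln(1-g/B)$, using $|S| = O(B)$ to keep $1 - g/B$ bounded away from zero. The only cosmetic differences are that you phrase the transfer step via the mean value theorem whereas the paper writes out the expansion $\ln(1-G/B) = -|S|/B + \ln\bigl(1 \pm \delta_G/(Be^{-|S|/B})\bigr)$ explicitly, and you invoke \cref{lem:S} overtly for $|S| = O(B)$ while the paper uses this implicitly in its final step.
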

\begin{proof}
The number of samples within a given bin is distributed as a $\poi(|S|/B)$ random variable and is independent across bins.
Therefore, the probability that a given bin is nonempty is $1 - e^{-|S|/B}$.
The number of nonempty bins $G$ is thus distributed as a $\text{Binomial}(B, 1 - e^{-|S|/B})$ random variable with expectation
\begin{equation*}
    \E[G] =  B \left(1 - e^{-|S|/B}\right)
\end{equation*}
Let $\delta_G = G - \E[G]$.
The variance of $G$ is at most its expectation, so with constant probability, its absolute deviation from its expectation is at most $|\delta_G| = O\left(\sqrt{B \left(1 - e^{-|S|/B}\right)}\right)$.
Now, consider $\hat{S}$ under this assumption on $\delta_G$:
\begin{align*}
    ||S| - \hat{S}| &= \left||S| + B\ln\left(1 - \frac{G}{B}\right)\right| \\
    &= \left| |S| +B\ln\left(1 - \frac{B \left(1 - e^{-|S|/B}\right) + \delta_G}{B}\right) \right|\\
    &= \left||S| + B \ln \left(e^{-|S|/B} + \frac{\delta_G}{B}\right)\right| \\
    &= \left||S| + B \left(\ln \left(e^{-|S|/B}\right) + \ln\left(1 + \frac{\delta_G}{Be^{-|S|/B}}\right)\right) \right|\\
    &= \left||S| - |S| + B\ln\left(1 + \frac{\delta_G}{Be^{-|S|/B}}\right)\right| \\
    &\leq \left|\frac{B\delta_G}{B e^{-|S|/B}}\right| \\
    &= O\left( \frac{\sqrt{B(1 - e^{-|S|/B})}}{e^{-|S|/B}}\right) \\
    &= O\left(\sqrt{B \left(e^{|S|/B} - 1\right)}\right) \\
    &= O(\sqrt{|S|}).
\end{align*}
\end{proof}

Before we continue, it will be useful to have the following lemmas.
\begin{lemma}\label{lem:sumofsqrt}
Given a sequence of non-negative integers $x_1, \ldots, x_\tau$ such that $\sum_{i=1}^\tau x_i \cdot i \leq m$ and $\sum_{i=1}^\tau x_i \leq D$,
\begin{equation*}
    \sum_{i=1}^\tau \sqrt{x_i} = O(\sqrt{m \log \tau})
\end{equation*}
and
\begin{equation*}
    \sum_{i=1}^\tau \sqrt{x_i} = O(\sqrt{D\tau})
\end{equation*}
\end{lemma}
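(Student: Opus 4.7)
}
My plan is to prove both bounds by appropriate applications of the Cauchy--Schwarz inequality, choosing different ``weightings'' of the summand $\sqrt{x_i}$ in each case to exploit whichever constraint is available.

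For the second bound, $\sum_{i=1}^\tau \sqrt{x_i} = O(\sqrt{D\tau})$, I would simply view $\sqrt{x_i} = 1 \cdot \sqrt{x_i}$ and apply Cauchy--Schwarz directly:
\begin{equation*}
    \sum_{i=1}^\tau \sqrt{x_i} \;\leq\; \sqrt{\tau} \cdot \sqrt{\sum_{i=1}^\tau x_i} \;\leq\; \sqrt{\tau D},
\end{equation*}
using the hypothesis $\sum_{i=1}^\tau x_i \leq D$ in the last step. This part is immediate.

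For the first bound, $\sum_{i=1}^\tau \sqrt{x_i} = O(\sqrt{m \log \tau})$, the key trick is to introduce a factor of $i$ inside the square root so that the constraint $\sum_i x_i \cdot i \leq m$ can be invoked. Specifically, I would write $\sqrt{x_i} = \sqrt{x_i \cdot i} \cdot (1/\sqrt{i})$ and apply Cauchy--Schwarz to get
\begin{equation*}
    \sum_{i=1}^\tau \sqrt{x_i} \;=\; \sum_{i=1}^\tau \frac{\sqrt{x_i \cdot i}}{\sqrt{i}} \;\leq\; \sqrt{\sum_{i=1}^\tau x_i \cdot i} \cdot \sqrt{\sum_{i=1}^\tau \frac{1}{i}} \;\leq\; \sqrt{m} \cdot \sqrt{H_\tau} \;=\; O(\sqrt{m \log \tau}),
\end{equation*}
where $H_\tau = \sum_{i=1}^\tau 1/i = O(\log \tau)$ is the $\tau$-th harmonic number.

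There is no real obstacle here: both bounds are one-line applications of Cauchy--Schwarz once one identifies the correct pairing of factors. The only ``insight'' is recognizing in the first bound that weighting by $1/\sqrt{i}$ matches the constraint $\sum_i x_i \cdot i \leq m$ while producing the harmonic sum on the other side, which is what yields the $\log \tau$ factor. No case analysis or further optimization is needed.
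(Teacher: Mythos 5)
Your proposal is correct and matches the paper's proof essentially verbatim: both bounds are obtained by Cauchy--Schwarz with exactly the same pairings, namely $\sqrt{x_i} = \sqrt{x_i \cdot i}\cdot(1/\sqrt{i})$ yielding $\sqrt{m}\cdot O(\sqrt{\log\tau})$ via the harmonic sum, and $\sqrt{x_i} = 1\cdot\sqrt{x_i}$ yielding $\sqrt{D\tau}$. Nothing further is needed.
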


\begin{proof}
Consider the $\tau$-dimensional vectors $a, b$ such that $a_i = \sqrt{x_i \cdot i}$ and $b_i = 1/\sqrt{i}$.
Note that $\langle a, b \rangle = \sum_{i=1}^\tau \sqrt{x_i}$,
$\|a\|_2 = \sqrt{\sum_{i=1}^\tau x_i \cdot i} \leq \sqrt{m}$, and
$\|b\|_2 = \sqrt{\sum_{i=1}^\tau 1/i} = O(\sqrt{\log \tau})$.
The first result of lemma then follows from Cauchy-Schwarz.

For the second result, consider vectors $a, b$ such that $a_i = \sqrt{x_i}$ and $b_i = 1$.
Then, $\langle a, b \rangle = \sum_{i=1}^\tau \sqrt{x_i}$,
$\|a\|_2 = \sqrt{D}$, and
$\|b\|_2 = \sqrt{\tau}$.
The second result of lemma then follows from Cauchy-Schwarz.
\end{proof}

We will now prove that the $L_1$ profile estimation error is small if we rescale the empirical profile of the sampled elements $F_1, \ldots, F_{\tau}$.
Recall that in the algorithm we only estimate this empirical profile (the bulk of the later analysis will focus on the quality of that estimation).
\begin{lemma}\label{lem:empiricalprofile1}
\begin{equation*}
    \E\left[\sum_{i=1}^{\tau} \left| \phi_i - \frac{D}{|S|} F_i \right|\right] \leq \sqrt{\frac{D}{|S|}} \sum_{i=1}^{\tau} \sqrt{\phi_i}.
\end{equation*}
\end{lemma}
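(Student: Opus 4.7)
The plan is to condition on the sample size $|S|$ and exploit the fact that, under this conditioning, $F_i$ has a hypergeometric distribution. Under the full-randomness assumption on $g_1$, each distinct element appearing in the stream is sampled independently with the common probability $2^{-\ell_m}$. Conditioning on the total number of sampled distinct elements being exactly $|S|$ therefore makes $S$ a uniformly random size-$|S|$ subset of the $D$ distinct elements of the stream. Under this conditioning, $F_i$ counts how many of the $\phi_i$ frequency-$i$ elements land in a uniformly random size-$|S|$ subset drawn from a population of $D$, so $F_i \sim \mathrm{HG}(D, \phi_i, |S|)$ with
\begin{equation*}
\E[F_i \mid |S|] \;=\; \frac{\phi_i |S|}{D}, \qquad \Var(F_i \mid |S|) \;=\; \frac{|S|\,\phi_i(D-\phi_i)(D-|S|)}{D^2(D-1)} \;\leq\; \frac{\phi_i |S|}{D}.
\end{equation*}

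Next, I would apply the standard consequence of Jensen's inequality (concavity of $\sqrt{\cdot}$) that $\E|X - \E X| \leq \sqrt{\Var(X)}$ coordinatewise. For each $i \in \{1, \ldots, \tau\}$, this gives $\E[|F_i - \phi_i|S|/D| \mid |S|] \leq \sqrt{\phi_i|S|/D}$. Multiplying through by $D/|S|$ yields
\begin{equation*}
\E\!\left[\left|\phi_i - \tfrac{D}{|S|}F_i\right| \;\middle|\; |S|\right] \;\leq\; \tfrac{D}{|S|}\sqrt{\tfrac{\phi_i|S|}{D}} \;=\; \sqrt{\tfrac{\phi_i D}{|S|}}.
\end{equation*}
Summing over $i$ with linearity of expectation and factoring out the common $\sqrt{D/|S|}$ from the right-hand side then gives exactly the claimed bound.

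The only substantive step is identifying the conditional hypergeometric distribution of $F_i$ given $|S|$ and checking that its variance is bounded by its mean; everything after that is routine. Notably, there is no compounding of error across coordinates because the triangle inequality is applied only after the coordinatewise Jensen step, and the scaling factor $D/|S|$ is shared across all $i$, so it pulls out cleanly. The final bound is to be read as a statement about the conditional expectation given $|S|$, which is the most natural interpretation since both sides depend on the random sample size in the same way.
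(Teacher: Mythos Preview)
Your proposal is correct and follows essentially the same route as the paper: bound each coordinate via $\E|X-\E X|\le \sqrt{\Var(X)}$ (Jensen), then sum. The paper writes $F_i\sim \mathrm{Bin}(|S|,\phi_i/D)$ and bounds $\Var(\tfrac{D}{|S|}F_i)\le D\phi_i/|S|$; you instead identify the conditional law of $F_i$ given $|S|$ as hypergeometric, which is the more precise statement in this sampling-without-replacement setting and yields a variance no larger than the binomial one, so the same bound follows. Either way the argument is identical after the variance step.
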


\begin{proof}
Note that $F_i \sim Bin(|S|, \frac{\phi_i}{D})$. We will analyze the expectation and variance of our estimate $\hat\phi = \frac{D}{|S|}F_i$:
\begin{equation*}
    \E[\hat\phi] = \frac{D}{|S|} \E[F_i] = \phi_i
\end{equation*}
and
\begin{equation*}
    \Var(\hat\phi) = \frac{D^2}{|S|^2} \Var(F_i) \leq \frac{D \phi_i}{|S|}.
\end{equation*}
Let $\delta_i = \left|\phi_i - \frac{D}{|S|}F_i\right|$ be the error in the $i$th coordinate.
The expectation of $\delta_i$ is
\begin{equation*}\label{eq:experrorcoord}
    \E\left[\delta_i\right] = \E\left[\sqrt{(\phi_i - \hat{\phi_i})^2}\right] \leq \sqrt{\E\left[\left(\phi_i - \hat{\phi_i}\right)^2\right]} = \sqrt{\Var(\hat{\phi_i})} = \sqrt{\frac{D\phi_i}{|S|}}.
\end{equation*}
(Using Jensen's inequality applied to the square root).
By linearity of expectation, the total expected error is
\begin{equation*}
    \E\left[\sum_{i=1}^{\tau} \delta_i\right] = \sum_{i=1}^{\tau} \E[\delta_i] \leq \sqrt{\frac{D}{|S|}} \sum_{i=1}^{\tau} \sqrt{\phi_i}.
\end{equation*}
\end{proof}

\begin{lemma}\label{lem:empiricalprofile2}
For $errortype=m$ if $B = \Theta\left(\frac{\log \tau}{\eps^2}\right)$,
\begin{equation*}
    \sum_{i=1}^{\tau} \left| \phi_i - \frac{D}{|S|} F_i \right| = O(\eps m),
\end{equation*}
and for $errortype=D$  and $\tau = O(1)$ if $B = \Theta\left(\frac{1}{\eps^2}\right)$,
\begin{equation*}
    \sum_{i=1}^{\tau} \left| \phi_i - \frac{D}{|S|} F_i \right| = O(\eps D),
\end{equation*}
both holding with constant probability.
\end{lemma}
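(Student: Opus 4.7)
The plan is to combine the expected-error bound from Lemma~\ref{lem:empiricalprofile1} with the Cauchy--Schwarz bounds on $\sum_{i=1}^\tau \sqrt{\phi_i}$ from Lemma~\ref{lem:sumofsqrt}, condition on the bound on $|S|$ supplied by Lemma~\ref{lem:S}, and then upgrade the expectation bound to an in-probability statement via Markov's inequality. The bulk of the proof is just algebra; the only conceptual content is matching the parameters of $B$ to the desired error. Throughout, I would union bound over the constant-probability events of Lemma~\ref{lem:S} and the Markov step so that all conclusions hold simultaneously with constant probability.

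For $errortype = m$, by Lemma~\ref{lem:S} we have $|S| = \Theta(\max\{DB/m, D/n\})$ with constant probability, so in either regime $D/|S| = O(m/B)$: when $DB/m \ge D/n$ this is immediate, and when $D/n > DB/m$ we have $n < m/B$, so $D/|S| = O(n) = O(m/B)$ still. Using $\sum_{i=1}^\tau i \phi_i \le m$, the first inequality of Lemma~\ref{lem:sumofsqrt} gives $\sum_{i=1}^\tau \sqrt{\phi_i} = O(\sqrt{m \log \tau})$. Plugging these into Lemma~\ref{lem:empiricalprofile1} (which I apply conditionally on the value of $|S|$, noting that given $|S|$ each $F_i$ is $\mathrm{Bin}(|S|, \phi_i/D)$) yields
\begin{equation*}
\E\!\left[\sum_{i=1}^\tau \left|\phi_i - \tfrac{D}{|S|} F_i\right|\,\Big|\,|S|\right] = O\!\left(\sqrt{\tfrac{m}{B}} \cdot \sqrt{m \log \tau}\right) = O\!\left(m \sqrt{\tfrac{\log \tau}{B}}\right) = O(\eps m),
\end{equation*}
by the choice $B = \Theta(\log \tau / \eps^2)$. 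Markov's inequality then gives $\sum_{i=1}^\tau |\phi_i - (D/|S|) F_i| = O(\eps m)$ with constant probability.

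For $errortype = D$ with $\tau = O(1)$, Lemma~\ref{lem:S} gives $|S| = \Theta(B)$ so $D/|S| = O(D/B)$, and the second inequality of Lemma~\ref{lem:sumofsqrt} (using $\sum_{i=1}^\tau \phi_i \le D$) together with $\tau = O(1)$ gives $\sum_{i=1}^\tau \sqrt{\phi_i} = O(\sqrt{D \tau}) = O(\sqrt{D})$. Substituting into Lemma~\ref{lem:empiricalprofile1} yields expected error $O(\sqrt{D/B} \cdot \sqrt{D}) = O(D/\sqrt{B}) = O(\eps D)$ since $B = \Theta(1/\eps^2)$. Applying Markov's inequality and union bounding with the events from Lemma~\ref{lem:S} completes both cases. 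There is no real obstacle here beyond bookkeeping; the tightness of the argument relies on the non-obvious fact already captured in Lemma~\ref{lem:sumofsqrt} that the $\ell_1$ sum of square roots of the profile is controlled by $\sqrt{m \log \tau}$ (respectively $\sqrt{D \tau}$) rather than by the worst-case $\sqrt{\tau m}$, which would be too large.
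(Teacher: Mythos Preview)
Your proposal is correct and follows essentially the same route as the paper: combine Lemma~\ref{lem:empiricalprofile1} with Lemma~\ref{lem:sumofsqrt}, plug in the bound on $|S|$ from Lemma~\ref{lem:S}, and finish with Markov. The only cosmetic difference is that for $errortype=m$ you split into the two regimes of $\max\{DB/m,\,D/n\}$, whereas the paper simply uses $|S|\ge \Omega(DB/m)$ (which already suffices since the max is at least each term); your extra case is harmless but unnecessary.
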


\begin{proof}
As $\sum_{i=1}^\tau \phi_i \cdot i = m$ and $\sum_{i=1}^\tau \phi_i = D$, \cref{lem:sumofsqrt} implies the following bounds
\begin{equation*}
    \sum_{i=1}^{\tau} \sqrt{\phi_i} = O(\sqrt{m \log \tau})
\end{equation*}
and
\begin{equation*}
    \sum_{i=1}^{\tau} \sqrt{\phi_i} = O(\sqrt{D}).
\end{equation*}

Consider $errortype=m$. By \cref{lem:S}, $S = \Omega(DB/m)$ for $errortype=m$. By \cref{lem:empiricalprofile1}, the expected error of the empirical estimate is bounded by
\begin{equation*}
    \E\left[\sum_{i=1}^{\tau} \left| \phi_i - \frac{D}{|S|} F_i \right|\right] = O\left(\sqrt{\frac{D m \log \tau}{|S|}}\right) = O\left(\sqrt{\frac{m^2 \log(\tau)}{B}}\right) = O(\eps m).
\end{equation*}

Now consider $errortype=D$. By \cref{lem:S}, $S = \Omega(D)$ for $errortype=D$. By \cref{lem:empiricalprofile1}, the expected error of the empirical estimate is bounded by
\begin{equation*}
    \E\left[\sum_{i=1}^{\tau} \left| \phi_i - \frac{D}{|S|} F_i \right|\right] = O\left(\sqrt{\frac{D^2}{|S|}}\right) = O\left(D \sqrt{\frac{1}{B}}\right) = O(\eps D).
\end{equation*}
Markov's inequality completes the proof.
\end{proof}

\subsection{Inverting Counts}\label{subsec:inverting}
Let $b_i$ be the number of buckets with count exactly $i$.
Let $r_i$ be the number of buckets with count $i$ that are formed by collisions of elements with smaller frequencies.
Let $s_i$ be the number of buckets with count $i$ that are formed by a single element with frequency $i$ falling in that bucket and no other elements falling in that bucket. Note that $b_i = r_i + s_i$.

The core idea of our post-processing procedure is to recursively estimate $F_i$ for $i=1,\ldots, \tau$ by relating the expectation of $F_i$ to the expectation of $s_i$. Using prior estimates $\hat{F}_1, \ldots, \hat{F}_{i-1}$, we will approximate $\E[r_i]$. Then, using $b_i$ as an estimate of $\E[b_i]$, we will plug in all of these estimates to solve for $F_i$. Along the way, errors will be introduced both due to random deviations as well as error which propagate from the fact that we do not know the true $F_j$'s for $j < i$. The core technical challenge of our analysis is to bound these errors.

Let $X_{i, k} \sim \poi(F_i/B)$ be a random variable corresponding to the number of elements with count $i$ in bucket $k$. By Poissonization, $X_{i, k}$ is mutually independent of all $X_{i', k'}$ for $(i, k) \neq (i', k')$.
Let $X_k = \sum_{i=1}^m X_{i, k} \sim \poi(|S|/B)$ be the random variable associated with the number of elements in bucket $k$.
\begin{align*}
    \E[s_i] &= \sum_{k=1}^B \Pr(\text{bucket $k$ contains a unique element which has count $i$}) \\
    &= \sum_{k=1}^B \Pr(X_{i, k} = 1) \prod_{j \in [m]: j \neq i} \Pr(X_{j, k} = 0) \\
    &= \sum_{k=1}^B \frac{F_i}{B} \left(e^{-F_i/B}\right)\prod_{j \in [m]: j \neq i} e^{-F_j/B} \\
    &= B \left(\frac{F_i}{B}\right) e^{-|S|/B} \\
    &= F_i e^{-|S|/B}
\end{align*}

As $s_i = b_i - r_i$, we can express $F_i$ as
\begin{equation}\label{e:F-br}
    F_i = \E[b_i e^{|S|/B}] - \E[r_i e^{|S|/B}].
\end{equation}
As the expectations of the $b_i$'s and $r_i$'s depend on the true values $F_1, \ldots, F_i$, we cannot calculate them exactly. Rather, our estimate $\hat{F}_i$ will be formed by plugging into \cref{e:F-br} the empirical count of $b_i$ and the approximation $\hat{r}_i$ of $\E[r_i]$ formed using our previous estimates $\hat{F}_1, \ldots, \hat{F}_{i-1}$ in place of the true $F_j$'s. Further, as we do not know $|S|$, we will need to plug in an estimate $\hat{S}$ wherever it appears.

We will now show that we can express $\E[r_i]$ as a function of $F_1, \ldots F_{i-1}$. Let $Y_i \sim \poi(F_i/B)$ be a random variable corresponding to the number of elements with count $i$ in a given bucket. Let $\mathbf{y}= y_1, \ldots, y_m$ denote a vector corresponding to a specific assignment of how many distinct elements of counts $1, \ldots, m$ appear in a given bucket.
\begin{align*}
    \E[r_i] &= B \Pr(\text{a bucket has summed count $i$ and at least two distinct elements})\\
    &= B \sum_{\mathbf{y}: (\sum_{j=1}^m y_j \geq 2) \land  (\sum_{j=1}^m y_j\cdot j = i)} \prod_{j=1}^m \Pr(Y_j = y_j) \\
    &= B  \sum_{\mathbf{y}: (\sum_{j=1}^m y_j \geq 2) \land  (\sum_{j=1}^m y_j\cdot j = i)} \prod_{j=1}^m \left(\frac{F_j}{B}\right)^{y_j} \frac{e^{-F_j/B}}{y_j!} \\
    &= B e^{-|S|/B}  \sum_{\mathbf{y}: (\sum_{j=1}^m y_j \geq 2) \land  (\sum_{j=1}^m y_j\cdot j = i)} \prod_{j=1}^{i-1} \left(\frac{F_j}{B}\right)^{y_j} \frac{1}{y_j!}
\end{align*}
Let $\hat{r_i}(\hat{F}_1, \ldots \hat{F}_{i-1})$ be the quantity we get by calculating $\E[r_i e^{|S|/B}]$ under estimated parameters $\hat{F}_1, \ldots \hat{F}_{i-1}$:
\begin{equation} \label{e:hatr}
    \hat{r_i}(\hat{F}_1, \ldots \hat{F}_{i-1})
    = B  \sum_{\mathbf{y}: (\sum_{j=1}^{i-1} y_j \geq 2) \land  (\sum_{j=1}^{i-1} y_j\cdot j = i)} \prod_{j=1}^{i-1} \left(\frac{\hat{F}_j}{B}\right)^{y_j} \frac{1}{y_j!}
\end{equation}
Then, our final estimate for $\hat{F_i}$ will be:
\begin{equation}
    \hat{F}_i = \max\{b_i e^{\hat{S}/B} - \hat{r}_i(\hat{F}_1, \ldots, \hat{F}_{i-1}), 0\}.
\end{equation}

\subsubsection{Dynamic Programming}
Calculating these estimates na\"ively requires $\text{exp}(\tau)$ time as $\hat{r}_i(\cdot)$ in \cref{e:hatr} is a summation over integer partitions of $i$. Therefore, we use a dynamic program in \cref{alg:dp} to calculate this expression efficiently.

Given positive integral parameters $j, x$, the quantity of interest will be the expected number of buckets under $\hat{F}_1, \ldots, \hat{F}_{i-1}$ which have total (summed) count $j$ and minimum frequency element with frequency $x$.
We will use the following description of the dynamic program which is expressed equivalently in pseudocode in \cref{alg:dp}:
\begin{equation}\label{e:dp}
    \DP[j,x] = \begin{cases}
        \hat{F}_j \; &\text{ if $x = j$ and $j < i$}\\
        \sum_{k=1}^{\lfloor j/x \rfloor - 1} \sum_{x' = x+1}^{j - kx} \DP[j - kx, x'] \left(\frac{\hat{F}_x}{B}\right)^k \frac{1}{k!}
        &\text{ if $x \leq \lfloor j/2 \rfloor$ and $j \neq 0 \bmod{x}$} \\
        \sum_{k=1}^{\lfloor j/x \rfloor - 1} \sum_{x' = x+1}^{j - kx} \DP[j - kx, x'] \left(\frac{\hat{F}_x}{B}\right)^k \frac{1}{k!}
        + \left(\frac{\hat{F}_x^{j/x}}{B^{j/x - 1}}\right) \frac{1}{(j/x)!}
        &\text{ if $x \leq \lfloor j/2 \rfloor$ and $j = 0 \bmod{x}$}\\
        0 \; &\text{ otherwise}
    \end{cases}
\end{equation}

\begin{lemma}
$\hat{r}_i(\hat{F}_1, \ldots, \hat{F}_{i-1}) = \sum_{x = 1}^{\lfloor i/2 \rfloor} \DP[i,x]$.
\end{lemma}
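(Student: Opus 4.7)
The plan is to interpret $\DP[j, x]$ combinatorially and establish the identity via strong induction on $j$. Define $T(j, x)$ to be the sum $\sum_{\mathbf{y}} \prod_{j'} (\hat{F}_{j'}/B)^{y_{j'}}/y_{j'}!$ over all nonzero integer vectors $\mathbf{y}$ satisfying $\sum_{j'} y_{j'} \cdot j' = j$ and $\min\{j' : y_{j'} > 0\} = x$. I would first verify that $\hat{r}_i = B \sum_{x=1}^{\lfloor i/2 \rfloor} T(i, x)$: because the total weight is $i$, the constraint $\sum_{j'} y_{j'} \geq 2$ appearing in \cref{e:hatr} is equivalent to the minimum index being at most $\lfloor i/2 \rfloor$ (any single-element $\mathbf{y}$ with total weight $i$ has $\min = i > i/2$, and any multi-element $\mathbf{y}$ with total weight $i$ necessarily has $\min \leq i/2$). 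The target identity then reduces to showing $\DP[j, x] = B \cdot T(j, x)$ for all $1 \leq x \leq j \leq i$.

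The inductive step at level $j$ handles three ranges of $x$. For $x = j$, the only valid $\mathbf{y}$ is $y_j = 1$, so $T(j, j) = \hat{F}_j/B$, matching $\DP[j, j] = \hat{F}_j$ as set by $\postprocess$. For $\lfloor j/2 \rfloor < x < j$, no valid configuration exists (one copy of $x$ sums to less than $j$, two copies sum to more than $j$, and there is no room for a larger element), so both sides are zero. For $1 \leq x \leq \lfloor j/2 \rfloor$, I would partition $T(j, x)$ by the multiplicity $k = y_x \geq 1$ of the smallest frequency: for each $k$, the remaining coordinates form a sub-configuration of total weight $j - kx$ whose minimum index is at least $x+1$, contributing the factor $(\hat{F}_x/B)^k / k!$ multiplied by either $\sum_{x' > x} T(j-kx, x')$ (when $j - kx > 0$) or $1$ (when $j - kx = 0$, which occurs exactly when $x \mid j$ and $k = j/x$). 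Applying the inductive hypothesis $\DP[j-kx, x'] = B \cdot T(j-kx, x')$ inside the sum, the expression for $B \cdot T(j, x)$ matches the recurrence in \cref{e:dp} term by term.

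The main bookkeeping point to verify is the upper index $\lfloor j/x \rfloor - 1$ in the first summation of the recurrence. When $x \nmid j$, the excluded value $k = \lfloor j/x \rfloor$ would leave remainder $j \bmod x \in (0, x)$, forcing a nonempty sub-configuration with $\min \geq x+1$ and weight strictly less than $x$, which is vacuous; hence truncating discards nothing. When $x \mid j$, the value $k = j/x$ corresponds to the all-$x$ configuration and is accounted for separately by the additive term $\hat{F}_x^{j/x} / (B^{j/x - 1} (j/x)!)$, which equals $B \cdot (\hat{F}_x/B)^{j/x}/(j/x)!$ as required. With these cases reconciled, summing the identity $\DP[i, x] = B \cdot T(i, x)$ over $x \in \{1, \ldots, \lfloor i/2 \rfloor\}$ yields the claimed equality $\sum_{x=1}^{\lfloor i/2 \rfloor} \DP[i, x] = \hat{r}_i(\hat{F}_1, \ldots, \hat{F}_{i-1})$.
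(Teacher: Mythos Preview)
Your proposal is correct and follows essentially the same approach as the paper. The paper defines $\hat{r}_{i,x}$ as the restriction of $\hat{r}_i$ to configurations with minimum frequency $x$ (with the constraint $\sum y_j \geq 2$), and proves $\DP[i,x] = \hat{r}_{i,x}$ by induction on $i$, decomposing by the multiplicity $k$ of the minimum frequency; your $B \cdot T(j,x)$ is the same object, just extended to the diagonal $x=j$ where it equals $\hat{F}_j$, which lets you write the induction a little more uniformly. One small notational slip: you say you show $\DP[j,x] = B \cdot T(j,x)$ for all $1 \leq x \leq j \leq i$, but at $(j,x)=(i,i)$ the entry $\DP[i,i]$ is not set to $\hat{F}_i$ by the recurrence \cref{e:dp} (it is computed from $b_i$ instead); this case is not needed for the lemma, so the argument goes through once you restrict to $j < i$ on the diagonal.
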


\begin{proof}
Consider the following quantity which restricts $\hat{r}_i$ to occurrences with minimum frequency $x$:
\begin{equation}
    \hat{r}_{i, x}(\hat{F}_1, \ldots, \hat{F}_{i-1})
    = B  \sum_{\mathbf{y}: (\sum_{j=1}^{i-1} y_j \geq 2) \land  (\sum_{j=1}^{i-1} y_j\cdot j = i) \land (\sum_{j=1}^{x} y_j = y_x \geq 1)} \prod_{j=1}^{i-1} \left(\frac{\hat{F}_j}{B}\right)^{y_j} \frac{1}{y_j!}.
\end{equation}
Note that $\hat{r}_i(\hat{F}_1, \ldots, \hat{F}_{i-1}) = \sum_{x=1}^{\lfloor i/2 \rfloor} \hat{r}_{i,x}(\hat{F}_1, \ldots, \hat{F}_{i-1})$.

We will prove via induction on $i$ that $\DP[i,x] = \hat{r}_{i,x}(\hat{F}_1, \ldots, \hat{F}_{i-1})$ for all $x \in \{1,\ldots \lfloor i/2 \rfloor\}$.
For the base case, consider $i = 2$. We must show that $\DP[2,1] = \hat{r}_{2,1}(\hat{F}_1)$.
Considering the left hand side first, the sum over values of $k$ in the third case of \cref{e:dp} is empty, so
\begin{equation*}
    \DP[2,1] = \left(\frac{\hat{F}_1^2}{B}\right)\frac{1}{2}.
\end{equation*}
For the right hand side, the only valid assignment $\mathbf{y}$ is the one in which $y_1 = 2$ and all other variables are zero.
Therefore,
\begin{equation*}
    \hat{r}_{2, 1}(\hat{F}_1) = B \left(\frac{\hat{F}_1}{B}\right)^2 \frac{1}{2} = \left(\frac{\hat{F}_1^2}{B}\right)\frac{1}{2},
\end{equation*}
completing the base case.

For the inductive case, fix $i > 1$ and assume for all $j < i$ that $\DP[j, x] = \hat{r}_{j,x}(\hat{F}_1, \ldots, \hat{F}_{j-1})$ for $x \in \{1, \ldots, \lfloor j/2 \rfloor\}$.
Consider any $x \in \{1,\ldots, \lfloor i/2 \rfloor\}$ such that $i \neq 0 \bmod x$. We will address the special case where $i$ is a multiple of $x$ later.
The core idea is to decompose the sum over assignments with total count $i$ and minimum value $x$ into sums over assignments with total counts less than $i$ and minimum values greater than $x$:
\begin{align*}
    \hat{r}_{i, x}(\hat{F}_1, \ldots, \hat{F}_{i-1})
    &= B  \sum_{\mathbf{y}: (\sum_{j=1}^{i-1} y_j \geq 2) \land  (\sum_{j=1}^{i-1} y_j\cdot j = i) \land (\sum_{j=1}^{x} y_j = y_x \geq 1)} \prod_{j=1}^{i-1} \left(\frac{\hat{F}_j}{B}\right)^{y_j} \frac{1}{y_j!} \\
    &= B  \sum_{\mathbf{y}: (\sum_{j=1}^{i-1} y_j - y_x \geq 1) \land  (\sum_{j=1}^{i-1} y_j \cdot j - y_x \cdot x = i - y_x \cdot x) \land (\sum_{j=1}^{x} y_j = y_x \geq 1)} \prod_{j=1}^{i-1} \left(\frac{\hat{F}_j}{B}\right)^{y_j} \frac{1}{y_j!} \\
    &= B \sum_{k=1}^{\lfloor i/x \rfloor - 1} \left(\frac{\hat{F}_x}{B}\right)^{k} \frac{1}{k!}
    \sum_{\mathbf{y}: (\sum_{j=1}^{i-1} y_j \geq 1) \land  (\sum_{j=1}^{i-1} y_j \cdot j = i - kx) \land (\sum_{j=1}^{x} y_j = 0)} \prod_{j=1}^{i-1} \left(\frac{\hat{F}_j}{B}\right)^{y_j} \frac{1}{y_j!}.
\end{align*}
We can break down the sum over assignments in the final equation above into two parts. It contains one term which corresponds to the assignment with a single element of frequency $i - kx$ and no other elements. The rest of the terms correspond to the sum over all assignments with at least two elements whose sum is $i - kx$ and whose minimum frequency is greater than $x$. This latter sum is equivalent to $\frac{1}{B} \sum_{x'=x+1}^{\lfloor (i-kx)/2 \rfloor} \hat{r}_{i-kx, x'}(\hat{F}_1, \ldots, \hat{F}_{i-kx-1})$:
\begin{align*}
    \hat{r}_{i, x}(\hat{F}_1, \ldots, \hat{F}_{i-1})
    &= B \sum_{k=1}^{\lfloor i/x \rfloor - 1} \left(\frac{\hat{F}_x}{B}\right)^{k} \frac{1}{k!}
    \left(\frac{\hat{F}_{i - kx}}{B} + \frac{1}{B}\sum_{x' = x+1}^{\lfloor (i - kx)/2 \rfloor} \hat{r}_{i - kx, x'}(\hat{F}_1, \ldots, \hat{F}_{i - kx - 1})\right) \\
    &= \sum_{k=1}^{\lfloor i/x \rfloor - 1} \left(\frac{\hat{F}_x}{B}\right)^{k} \frac{1}{k!}
    \left(\hat{F}_{i - kx} + \sum_{x' = x+1}^{\lfloor (i - kx)/2 \rfloor} \hat{r}_{i - kx, x'}(\hat{F}_1, \ldots, \hat{F}_{i - kx - 1})\right).
\end{align*}
Recall that $\DP[j, j] = \hat{F}_j$. Combined with the inductive hypothesis,
\begin{align*}
    \hat{r}_{i, x}(\hat{F}_1, \ldots, \hat{F}_{i-1})
    &= \sum_{k=1}^{\lfloor i/x \rfloor - 1} \left(\frac{\hat{F}_x}{B}\right)^{k} \frac{1}{k!}
    \left(\DP[i - kx, i - kx] + \sum_{x' = x+1}^{\lfloor (i - kx)/2 \rfloor} \DP[i - kx, x']\right).
\end{align*}
As $\DP[j, x] = 0$ for $x \in \{\lfloor j/2 \rfloor + 1,  \ldots, j-1\}$, we can extend the range of the sum over values of $x'$ to $\{x+1, \ldots, i-kx-1\}$:
\begin{align*}
    \hat{r}_{i, x}(\hat{F}_1, \ldots, \hat{F}_{i-1})
    &= \sum_{k=1}^{\lfloor i/x \rfloor - 1} \left(\frac{\hat{F}_x}{B}\right)^{k} \frac{1}{k!}
    \left(\sum_{x' = x+1}^{i - kx} \DP[i - kx, x']\right) \\
    &= \sum_{k=1}^{\lfloor i/x \rfloor - 1} \sum_{x' = x+1}^{i - kx} \DP[i - kx, x'] \left(\frac{\hat{F}_x}{B}\right)^{k} \frac{1}{k!} \\
    &= \DP[i, x],
\end{align*}
completing the induction.

It remains to address the case where $i = 0 \bmod x$. In this case, $\hat{r}_{i, x}(\hat{F}_1, \ldots, \hat{F}_{i-1})$ has an additional term corresponding to an assignment $\mathbf{y}$ where $y_x = i/x$ and all other terms are zero. This corresponds to an additive term
\begin{equation*}
        B \left(\frac{\hat{F}_x}{B}\right)^{i/x} \frac{1}{(i/x)!} = \left(\frac{\hat{F}_x^{i/x}}{B^{i/x - 1}}\right) \frac{1}{(i/x)!}
\end{equation*}
which is accounted for exactly by the additional term in the dynamic program \cref{e:dp} in the case where $i$ is a multiple of $x$.

\end{proof}

\subsubsection{Error Analysis}
Let $\zeta_i$ be a random variable for the error introduced by using our old estimates to evaluate $\hat{r_i}$:
\begin{equation}
    \zeta_i = |\hat{r_i}(F_1, \ldots, F_{i-1}) - \hat{r_i}(\hat{F}_1, \ldots, \hat{F}_{i-1})|.
\end{equation}
Let $\gamma$ be a random variable for the rest of the error in estimating $F_i$ due to randomness deviations in $b_i$ as well as due to our error in approximating $|S|$ as $\hat{S}$,
\begin{equation}
    \gamma_i = |\E[b_i]e^{|S|/B} - b_ie^{\hat{S}/B}|.
\end{equation}
Note that as $F_i \geq 0$, thresholding our estimate $\hat{F}_i$ to always be at least zero only reduces the error:
\begin{equation*}
    |F_i - \hat{F}_i| \leq \gamma_i + \zeta_i.
\end{equation*}

\begin{lemma}\label{lem:errprop}
$\sum_{i=1}^{\tau} \zeta_i \leq \sum_{i=1}^{\tau} \gamma_i$.
\end{lemma}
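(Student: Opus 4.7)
My plan is to express the signed version of the propagation error, $\rho_i = \hat{r}_i(\hat{F}_1, \ldots, \hat{F}_{i-1}) - \hat{r}_i(F_1, \ldots, F_{i-1})$, in a form that exposes the dependence on the observational errors $\sigma_j = b_j e^{\hat{S}/B} - \E[b_j] e^{|S|/B}$ through the chain of previous estimates. The first step would be to exploit the fact that $\hat{r}_i$ is a polynomial in $F_1, \ldots, F_{i-1}$ with nonnegative coefficients (since it is derived from the Poisson EGF restricted to partitions with at least two parts) and to use the multivariable telescoping identity for products to write
\[
\rho_i \;=\; B \sum_{k=1}^{i} a_{i-k} \cdot \tilde{e}_k
\]
where $a_l$ is the coefficient of $t^l$ in $\exp(\sum_j p_j t^j)$ (the EGF under true parameters) and $\tilde{e}_k$ is the coefficient of $t^k$ in $\exp(\sum_{j<i} (e_j/B) t^j)$, which aggregates the multiplicative error from $e_j = \hat{F}_j - F_j$.

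Next, I would apply the triangle inequality, using the nonnegativity of $a_j$, to get $|\rho_i| \leq B \sum_{k=1}^i a_{i-k}\,|\tilde{e}_k|$. Summing over $i$ and swapping the order of summation would give $\sum_i |\rho_i| \leq \sum_k |\tilde{e}_k| \cdot B \sum_{l \geq 1}^{\tau-k} a_l$. The inner geometric sum can be bounded using the EGF identity $\sum_{l \geq 0} a_l = e^{|S|/B}$ combined with the algorithm's maintenance of $|S|/B$ as a small (tunable) constant, namely $|S|/B \approx 1/K$ from the level update rule.

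Then, I would use the first-order behavior $\tilde{e}_k \approx e_k/B$ plus a combinatorial bound on the higher-order terms in the exponential expansion (each of which is a product of $e_j/B$'s whose total power of $t$ is exactly $k$), together with the recurrence $|e_j| \leq |\sigma_j| + |\rho_j|$ coming from $\eta_j = \sigma_j - \rho_j$. This yields a self-referential inequality of the form $\sum_i |\rho_i| \leq C \left(\sum_j |\sigma_j| + \sum_j |\rho_j|\right)$ with $C = O(1/K)$, which, upon choosing $K$ large enough to make $C \leq 1/2$, solves to $\sum_i |\rho_i| \leq \sum_j |\sigma_j|$.

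The main obstacle is obtaining the exact constant $1$ claimed in the lemma rather than the loose $O(1)$ factor the direct triangle-inequality argument produces. A clean resolution likely requires either a potential-function / credit-scheme induction, in which unused slack $|\sigma_j| - |\rho_j|$ from earlier coordinates is explicitly carried forward and discharged in later terms, or the signed identity $\rho_i + \eta_i = \sigma_i$ used to produce cancellations rather than only triangle-inequality upper bounds. Controlling the higher-order nonlinear terms in the $\tilde{e}_k$'s also hinges on the exponential tail decay of the Poisson EGF, which is the same mechanism that earlier let us truncate $\poi(1)$ by $H = \Theta(\log B)$ at negligible cost.
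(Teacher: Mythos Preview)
Your generating-function framework is a genuine alternative to the paper's approach, and the core mechanism you identify---a contraction coming from $|S|/B$ being a small constant---is exactly right.  But the argument as written has a real gap at the sum--swap step.

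Concretely, from $\rho_i = B\sum_{k=1}^{i} a_{i-k}\,\tilde e_k^{(i)}$, swapping gives for each fixed $k$ a sum over $l=i-k$ ranging from $0$ to $\tau-k$, not from $1$.  The $l=0$ term carries $a_0=1$, so if you keep it your constant is at least $e^{|S|/B}\geq 1$, not $\leq 1/2$.  You silently drop $l=0$; what makes this legitimate is that the $k=i$ contribution uses $\tilde e_i^{(i)}$, the coefficient of $t^i$ in $\exp(\sum_{j<i}(e_j/B)t^j)$, which has no linear part in the $e_j$'s---it is a sum over partitions of $i$ into at least two parts, each part an $e_j/B$.  So the $l=0$ term is precisely the nonlinear residue, and your linear heuristic $\tilde e_k\approx e_k/B$ together with $\sum_{l\geq 1}a_l=e^{|S|/B}-1$ would indeed give the self-referential inequality with $C=e^{|S|/B}-1\leq 1/2$.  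The missing piece is a uniform bound on those nonlinear terms that does not blow the constant back up; the crude bound $\sum_k B|\tilde e_k|\leq B(e^{\sum_j|e_j|/B}-1)$ uses only $\sum_j|e_j|=O(|S|)$ and yields a constant of order $1$, not $1/2$.

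The paper avoids this difficulty by linearizing at the outset: it defines a worst-case one-coordinate sensitivity $f_j^i(\Delta)=\sup|\hat r_i(\ldots,F'_j,\ldots)-\hat r_i(\ldots,F'_j\pm\Delta,\ldots)|$ over all admissible configurations, proves by a direct case analysis that $\sum_{i>j}f_j^i(\Delta)\leq \Delta/2$ once $C_1|S|/B\leq 1/8$, and then expands $\zeta_i$ as a sum over increasing chains $j_1<\cdots<j_k<i$ of iterated $f$'s.  Because every propagation step contracts by $1/2$, summing over chain length gives a geometric series with total $\leq \sum_i\gamma_i$.  This is morally the same contraction as your $e^{|S|/B}-1$ factor, but the sup in $f_j^i$ absorbs the nonlinearity up front rather than leaving it as a separate residue to bound.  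Your proposed fixes (a credit scheme, or exploiting the signed identity $e_i=\sigma_i-\rho_i$) are not what is needed; what is needed is a direct bound on the second-order part of $\tilde e_k^{(i)}$ showing it contributes an additional small multiple of $\sum_j|e_j|$, which is doable but is exactly the case analysis the paper packages into its sensitivity lemma.
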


To analyze the $\zeta_i$ terms, we will introduce a new quantity $f_j^i(\Delta)$ which measures the sensitivity of $\hat{r_i}$ to a change in the $j$th coordinate for $j \in \{1, \ldots, i-1\}$ of magnitude $\Delta \geq 0$.
Let $\mathcal{F} = \{F'_1, \ldots, F'_{i-1} \in \N \cup \{0\}: \sum_j F'_j \leq C_1 |S|\}$. Then,
\begin{equation} \label{e:sensitivity}
    f_j^i(\Delta) = \sup_{F'_1, \ldots, F'_{i-1} \in \mathcal{F}, z \in \{\pm 1\}} \left| \hat{r_i}(F'_1, \ldots, F'_j, \ldots, F'_{i-1}) - \hat{r_i}(F'_1, \ldots, \max\{F'_j + z \Delta, 0\}, \ldots, F'_{i-1}) \right|
\end{equation}
First, for appropriate $C_1$, with constant probability, we will prove $\sum_{i=1}^{\tau} \hat{F}_i \leq C_1 |S|$ so that we can relate these quantities to the $\zeta_i$ terms. Note that for any valid input, $\hat{r}_i(F'_1, \ldots, F'_{i-1}) \geq 0$.
Therefore,
\begin{equation*}
    0 \leq \hat{F}_i \leq b_i e^{\hat{S}/B}.
\end{equation*}
Note that the sum over $b_i$'s is exactly the number of nonempty bins: $\sum_{i=1}^\tau b_i \leq \sum_{i=1}^m b_i = G < \hat{S}$. By the concentration of those quantities in \cref{lem:hatS}, with constant probability, $\sum_{i=1}^\tau \hat{F}_i \leq (C_1/e) |S| e^{\hat{S}/B} \leq C_1 |S|$ for large enough $C_1$.

We will prove \cref{lem:errprop} via a sequences of intermediate lemmas relating the $\zeta_i$ terms to the sensitivity of $\hat{r}_i(\cdot)$ as captured by $f_j^i(\Delta)$ and then bounding these sensitivities.

\begin{lemma}\label{lem:zeta-f}
\begin{equation*}
    \zeta_i \leq \sum_{k=1}^i \sum_{1 \leq j_1 < j_2 < \ldots < j_k \leq i-1} f_{j_{k}}^i \left( f_{j_{k-1}}^{j_k} \left( \ldots \left( f_{j_2}^{j_3} \left( f_{j_1}^{j_2} \left(\gamma_{j_1}\right) \right) \right)\right)\right).
\end{equation*}
\end{lemma}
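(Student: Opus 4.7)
The argument has three ingredients: subadditivity of each sensitivity $f_j^i$, a one-step telescoping of $\zeta_i$ in terms of the $\gamma_j$'s and smaller $\zeta_j$'s, and an induction on $i$.

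First I would show $f_j^i(a+b) \leq f_j^i(a) + f_j^i(b)$ for $a,b \geq 0$. For any $F' \in \mathcal{F}$ and sign $z$ nearly attaining the sup defining $f_j^i(a+b)$, let $F''$ be $F'$ with its $j$-th coordinate replaced by $\max\{F'_j + za,\,0\}$ and $F'''$ the analogous configuration for shift $z(a+b)$. By the triangle inequality,
\begin{equation*}
|\hat{r}_i(F') - \hat{r}_i(F''')| \leq |\hat{r}_i(F') - \hat{r}_i(F'')| + |\hat{r}_i(F'') - \hat{r}_i(F''')|,
\end{equation*}
and each summand on the right is bounded by $f_j^i(a)$ or $f_j^i(b)$ by definition, with a short case analysis when the truncation $\max\{\cdot,0\}$ is active and a minor slack in the constant defining $\mathcal{F}$ so that the intermediate $F''$ remains feasible. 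Induction then yields $f_j^i\bigl(\sum_\ell c_\ell\bigr) \leq \sum_\ell f_j^i(c_\ell)$ for finite sums.

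Second, I would swap the arguments of $\hat{r}_i$ one coordinate at a time from $(F_1,\dots,F_{i-1})$ to $(\hat{F}_1,\dots,\hat{F}_{i-1})$. Combining the triangle inequality, the definition of $f_j^i$, the bound $|\hat{F}_j - F_j| \leq \gamma_j + \zeta_j$ stated just before the lemma, and subadditivity gives
\begin{equation*}
\zeta_i \;\leq\; \sum_{j=1}^{i-1} f_j^i\bigl(|\hat{F}_j - F_j|\bigr) \;\leq\; \sum_{j=1}^{i-1} f_j^i(\gamma_j) \;+\; \sum_{j=1}^{i-1} f_j^i(\zeta_j),
\end{equation*}
with the previously established bound $\sum_j \hat{F}_j \leq C_1 |S|$ guaranteeing every hybrid configuration lies in $\mathcal{F}$.

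Finally I would induct on $i$. The base case $\zeta_1 = 0$ is immediate since $\hat{r}_1$ has no inputs (no partition of $1$ has two or more parts). For the inductive step, plug the inductive bound for each $\zeta_j$ with $j < i$ into $f_j^i(\zeta_j)$ above and distribute $f_j^i$ across the resulting nested sum by subadditivity. A length-$k$ chain $j_1 < \cdots < j_k = j$ arising inside $\zeta_j$ becomes a length-$(k+1)$ chain $j_1 < \cdots < j_k < j_{k+1} = i$ after applying $f_j^i$; strict monotonicity is automatically preserved because every chain index inside $\zeta_j$ is strictly below $j < i$. The contributions $\sum_j f_j^i(\gamma_j)$ from the one-step telescoping account for the $k=1$ chains. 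The principal technical subtlety is the clean proof of subadditivity in the presence of the truncation and feasibility constraint defining $\mathcal{F}$; once that is in place, the rest is bookkeeping whose combinatorial structure exactly matches the chain indexing in the claim.
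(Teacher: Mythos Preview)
Your proposal is correct and follows essentially the same route as the paper: prove subadditivity of $f_j^i$, establish the one-step bound $\zeta_i \le \sum_{j<i} f_j^i(\gamma_j) + \sum_{j<i} f_j^i(\zeta_j)$ by changing coordinates one at a time, and induct on $i$. The only cosmetic differences are that the paper starts the induction at $i=2$ rather than $i=1$, and it asserts subadditivity in one line (``$f_j^i(a)$ is the supremum \ldots'') whereas you spell out the triangle-inequality argument and flag the truncation/feasibility issue more explicitly.
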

\begin{proof}
Intuitively, each term in this expression accounts for the error that propagates from $\hat{F}_{j_1}$ to $\hat{F}_{j_2} \ldots$ to $\hat{F}_{j_k}$ and ultimately to $\hat{F}_i$.
We will prove the claim inductively.
For the base case, consider $i = 2$.
Recall that $\zeta_1 = 0$.
Then,
\begin{align*}
    \zeta_2 &= |\hat{r}_2(F_1) - \hat{r}_2(\hat{F}_1)| \\
    &= |\hat{r}_2(F_1) - \hat{r}_2(F_1 \pm \gamma_1)| \\
    &\leq f_1^2(\gamma_1).
\end{align*}
For the inductive step, consider some $i > 2$ and assume the claim holds for all $i' < i$:
\begin{align*}
    \zeta_i &= |\hat{r}_i(F_1, \ldots, F_{i-1}) - \hat{r}_i(\hat{F}_1, \ldots, \hat{F}_{i-1})| \\
    &=|\hat{r}_i(F_1, \ldots, F_{i-1}) - \hat{r}_i(\max\{F_1 \pm \gamma_1 \pm \zeta_1, 0\}, \ldots, \max\{F_{i-1} \pm \gamma_{i-1} \pm \zeta_{i-1}), 0\}| \\
    &\leq \sum_{j=1}^{i-1} f_j^i(\gamma_j) + f_j^i(\zeta_i) \\
    &\leq \sum_{j=1}^{i-1} \left(f_j^i(\gamma_j) + f_j^i\left(\sum_{k=1}^j \sum_{1 \leq j_1 < j_2 < \ldots < j_k \leq j-1} f_{j_{k-1}}^{j_k} \left( \ldots \left( f_{j_2}^{j_3} \left( f_{j_1}^{j_2} \left(\gamma_{j_1}\right) \right) \right)\right)\right) \right).
\end{align*}
The last inequality comes from the inductive hypothesis. Note that as $f_j^i(a)$ is the supremum with respect to all valid assignments of $F'_1, \ldots, F'_{i-1}$, $f_j^i(a + b) \leq f_j^i(a) + f_j^i(b)$. Therefore,
\begin{align*}
    \zeta_i &\leq \sum_{j=1}^{i-1} \left( f_j^i(\gamma_j) + \sum_{k=1}^j \sum_{1 \leq j_1 < j_2 < \ldots < j_k \leq j-1} f_j^i\left( f_{j_{k-1}}^{j_k} \left( \ldots \left( f_{j_2}^{j_3} \left( f_{j_1}^{j_2} \left(\gamma_{j_1}\right) \right) \right)\right)\right)\right) \\
    &= \sum_{j=1}^{i-1} \sum_{k=1}^j \sum_{1 \leq j_1 < j_2 < \ldots < j_k = j} f_{j_k}^i\left( f_{j_{k-1}}^{j_k} \left( \ldots \left( f_{j_2}^{j_3} \left( f_{j_1}^{j_2} \left(\gamma_{j_1}\right) \right) \right)\right)\right) \\
    &= \sum_{k=1}^{i-1} \sum_{1 \leq j_1 < j_2 < \ldots < j_k \leq i-1} f_{j_k}^i\left( f_{j_{k-1}}^{j_k} \left( \ldots \left( f_{j_2}^{j_3} \left( f_{j_1}^{j_2} \left(\gamma_{j_1}\right) \right) \right)\right)\right). 
\end{align*}
This completes the induction, proving the claim.
\end{proof}

Now, we will bound the quantities $f_j^i(\Delta)$ to get a bound on the $\zeta_i$ terms.
Fix $j$ and consider the quantity $\sum_{i=j+1}^{\tau} f_j^i(\Delta)$: the summed sensitivities of $\hat{r}_{j+1}(\cdot), \ldots, \hat{r}_{\tau}(\cdot)$ to changing $F'_j$ by $\Delta$ for $\Delta \leq C_1 |S|$. We will show that the total effect on $\hat{r}_i$ for $i > j$ from changing the estimate of $F'_j$ is the magnitude of the change.
\begin{lemma}\label{lem:sensitivity}
\begin{equation*}
    \sum_{i=j+1}^{\tau} f_j^i(\Delta) \leq \frac{\Delta}{2}.
\end{equation*}
\end{lemma}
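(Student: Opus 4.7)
The plan is to establish the uniform bound $\sum_{i=j+1}^{\tau}[\hat{r}_i(F'_j+\Delta)-\hat{r}_i(F'_j)] \leq \Delta/2$ for every $F'\in\mathcal{F}$, then take suprema. Since each $\hat{r}_i$ is a polynomial in $F'_j$ with non-negative coefficients, it is monotone non-decreasing in $F'_j$; so the $z=-1$ case and the truncation at zero only decrease the magnitude of the change, and I may assume $z=+1$. The key tool is to differentiate $\hat{r}_i$ with respect to $F'_j$ in closed form using a generating-function identity.

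Concretely, I would introduce the auxiliary quantity $q_m(F') := \sum_{\mathbf{y}:\,\sum_k y_k k = m}\prod_k (F'_k/B)^{y_k}/y_k!$, which factorizes via the exponential generating function as $\sum_{m \geq 0} q_m(F')\,t^m = \exp\!\bigl(\sum_k F'_k t^k/B\bigr)$; evaluating at $t=1$ yields $\sum_{m \geq 0} q_m(F') = e^{\lambda}$ with $\lambda := \sum_k F'_k/B$. Term-by-term differentiation of the defining sum of $\hat{r}_i$, together with the re-indexing $y'_j = y_j - 1$, shows that the constraint $\sum_k y_k \geq 2$ becomes $\sum_k y'_k + 1 \geq 2$, which is automatic for $i>j$ (since $\sum_k y'_k \cdot k = i - j \geq 1$ forces $\sum_k y'_k \geq 1$). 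This gives the clean identity $\partial\hat{r}_i/\partial F'_j = q_{i-j}(F')$ for $i>j$, and summing across $i$ telescopes: $\sum_{i=j+1}^{\tau}\partial\hat{r}_i/\partial F'_j = \sum_{m=1}^{\tau-j} q_m(F') \leq e^{\lambda} - 1$.

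By the fundamental theorem of calculus applied along the path $F'_j \to F'_j + \Delta$ with the other coordinates frozen, parameterizing so that $\lambda$ shifts linearly by $\Delta/B$,
\[
\sum_{i=j+1}^{\tau}[\hat{r}_i(F'_j+\Delta)-\hat{r}_i(F'_j)] \;\leq\; \int_0^{\Delta}(e^{\lambda+u/B}-1)\,du \;=\; Be^{\lambda}(e^{\Delta/B}-1)-\Delta.
\]
The desired bound $\leq \Delta/2$ is equivalent to $e^{\lambda}\cdot (e^{\Delta/B}-1)/(\Delta/B) \leq 3/2$. Since $F' \in \mathcal{F}$ forces $\lambda \leq C_1|S|/B$ and \cref{lem:S} controls $|S|/B$ by an absolute constant, I calibrate the hidden constants (in particular, the $C_1$ from the earlier argument showing $\sum_i \hat{F}_i \leq C_1|S|$) small enough that $\lambda \leq \ln(3/2)$, at which point the ratio $(e^{\Delta/B}-1)/(\Delta/B)$ is close enough to $1$ that $e^{\lambda}(e^{\Delta/B}-1)/(\Delta/B) \leq 3/2$. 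The main obstacle is precisely this constant calibration: ensuring $C_1$ and the constant bound on $|S|/B$ are simultaneously compatible with $\lambda,\Delta/B\lesssim\ln(3/2)$, so that the Poisson-type bound $Be^{\lambda}(e^{\Delta/B}-1)-\Delta$ collapses to $\Delta/2$ rather than a larger constant multiple of $\Delta$.
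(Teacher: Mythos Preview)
Your generating-function route is genuinely different from the paper's case analysis and is considerably cleaner: the identity $\partial \hat r_i/\partial F'_j = q_{i-j}(F')$ with $\sum_{m\ge 1} q_m(F')=e^{\lambda}-1$ replaces the paper's four-case comparison of $|(1+z\Delta/F'_j)^\ell - 1|$ and its interpretation via Poisson tail probabilities. Both routes arrive at essentially the same final inequality (the paper's $e^{1/8}(1/8+2/7)\le 1/2$ is the counterpart of your $e^{\lambda}\cdot(e^{\Delta/B}-1)/(\Delta/B)\le 3/2$), and your concern about calibrating $C_1$ is handled exactly as in the paper: one takes $C_1|S|/B\le 1/8$, which simultaneously suffices for $\sum_i \hat F_i\le C_1|S|$ and makes the exponential factors benign.

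There is, however, a structural issue in your plan that you should be aware of. You say you will ``establish the uniform bound $\sum_i[\hat r_i(F'_j+\Delta)-\hat r_i(F'_j)]\le \Delta/2$ for every $F'\in\mathcal{F}$, then take suprema.'' But the lemma asks for $\sum_i f_j^i(\Delta)=\sum_i \sup_{F'}|\ldots|$, and the supremum sits \emph{inside} the sum: different $i$ may select different maximizing $F'$. Your argument bounds $\sup_{F'}\sum_i|\ldots|$, which is the smaller quantity. Concretely, for each $i>j$ one can take $F'$ with all mass on coordinate $i-j$ (so $F'_{i-j}=C_1|S|$, $F'_j=0$); the term $y_j=y_{i-j}=1$ in $\hat r_i$ alone then gives $f_j^i(\Delta)\ge \Delta\cdot C_1|S|/B$, and summing over $i$ yields $(\tau-j)\Delta\cdot C_1|S|/B$, which exceeds $\Delta/2$ once $\tau-j$ is moderately large. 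The paper's proof, if you read it closely, fixes a maximizing $F'$ at the outset and then combines the sums over $i$ into a single sum over all $\mathbf{y}$ \emph{at that one $F'$} before invoking the Poisson interpretation---so it is doing exactly what you propose, and shares exactly this gap. In other words, what both arguments actually establish is the ``single-$F'$'' inequality, which is plausibly all that the downstream error-propagation argument truly needs (since in \cref{lem:zeta-f} one is tracking a specific path in $\mathcal F$, not an adversarial choice per $i$); but neither argument proves the supremum version as literally stated.
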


\begin{proof}
Let $z_j, \Delta_j$ be chosen to achieve the maximum value in \cref{e:sensitivity}:
\begin{equation*}
    f_j^i(\Delta) =  \left| \hat{r_i}(F'_1, \ldots, F'_j, \ldots, F'_{i-1}) - \hat{r_i}(F'_1, \ldots, F'_j + z_j \Delta_j, \ldots, F'_{i-1}) \right|
\end{equation*}
with $\Delta_j = \Delta$ if $F'_j + z_j \Delta_j > 0$ and $\Delta_j = F'_j$ otherwise.
Expanding the definition of $\hat{r}_i(\cdot)$ in \cref{e:hatr},
\begin{align*}
    \sum_{i=j+1}^{\tau} f_j^i(\Delta)
    =& \sum_{i=j+1}^{\tau}
    B  \sum_{\mathbf{y}: (\sum_{k=1}^{i-1} y_k \geq 2) \land  (\sum_{k=1}^{i-1} y_k\cdot k = i) \land (y_j \geq 1)}  \left| \frac{(F'_j + z_j \Delta_j)^{y_j} - (F'_j)^{y_j}}{B^{y_j}} \right| \frac{1}{y_j!} \prod_{k \in [i-1]: k \neq j} \left(\frac{F'_k}{B}\right)^{y_k} \frac{1}{y_k!} \\
    \leq& B \sum_{\mathbf{y}: (\sum_{k=1}^{\tau} y_k \geq 2)\land (y_j \geq 1)} \left| \frac{(F'_j + z_j \Delta_j)^{y_j} - (F'_j)^{y_j}}{B^{y_j}} \right| \frac{1}{y_j!} \prod_{k \in [{\tau}]: k \neq j} \left(\frac{F'_k}{B}\right)^{y_k} \frac{1}{y_k!}.
\end{align*}
The inequality comes from the fact that the union of the disjoint sets of bucket assignments with a specific sum $i$ for $i \in \{j+1, \ldots, \tau\}$ is a subset of all bucket assignments.
Parameterizing by the number of elements $\ell$ with frequency $j$ and then separating out the term where $\ell = 1$,
\begin{align*}
    \sum_{i=j+1}^{\tau} f_j^i(\Delta)
    \leq& B \sum_{\ell=1}^\infty \sum_{\mathbf{y}: (\sum_{k=1}^{\tau} y_k \geq 2) \land ( y_j = \ell)} \left| \frac{(F'_j + z_j \Delta_j)^{\ell} - (F'_j)^{\ell}}{B^{\ell}} \right| \frac{1}{\ell!} \prod_{k \in [{\tau}]: k \neq j} \left(\frac{F'_k}{B}\right)^{y_k} \frac{1}{y_k!} \\
    =&B \sum_{\mathbf{y}: (\sum_{k=1}^{\tau} y_k \geq 2) \land ( y_j = 1)} \left| \frac{(F'_j + z_j \Delta_j) - F'_j}{B} \right| \prod_{k \in [{\tau}]: k \neq j} \left(\frac{F'_k}{B}\right)^{y_k} \frac{1}{y_k!} \\
    &+ B  \sum_{\ell=2}^\infty \sum_{\mathbf{y}:  y_j = \ell} \left| \frac{(F'_j + z_j \Delta_j)^{\ell} - (F'_j)^{\ell}}{B^{\ell}} \right| \frac{1}{\ell!} \prod_{k \in [{\tau}]: k \neq j} \left(\frac{F'_k}{B}\right)^{y_k} \frac{1}{y_k!} \\
    \leq&B \sum_{\mathbf{y}: (\sum_{k=1}^{\tau} y_k \geq 2) \land ( y_j = 1)} \left(\frac{\Delta_j}{F'_j}\right) \left(\frac{F'_j}{B}\right) \prod_{k \in [{\tau}]: k \neq j} \left(\frac{F'_k}{B}\right)^{y_k} \frac{1}{y_k!} \\
    &+ B  \sum_{\ell=2}^\infty \sum_{\mathbf{y}:  y_j = \ell} \left|\left(1 + \frac{z_j \Delta_j}{F'_j}\right)^\ell - 1\right|\left(\frac{F'_j}{B}\right)^{\ell} \frac{1}{\ell!} \prod_{k \in [\tau]: k \neq j} \left(\frac{F'_k}{B}\right)^{y_k} \frac{1}{y_k!}.
\end{align*}
Recall that due to our constraint on the sum over all $F'_k$, $\prod_{k \in [\tau]} e^{F'_k/B} \leq e^{C_1 |S|/B}$. We will introduce these exponential terms to make the terms within the summation resemble the Poisson density function:
\begin{align*}
    \sum_{i=j+1}^{\tau} f_j^i(\Delta)
    \leq& B  e^{C_1 |S|/B} \left(\frac{\Delta_j}{F'_j}\right)\sum_{\mathbf{y}: (\sum_{k=1}^{\tau} y_k \geq 2) \land (y_j = 1)}  \prod_{k \in [{\tau}]} \left(\frac{F'_k}{B}\right)^{y_k} \frac{e^{-F'_k/B}}{y_k!} \\
    &+ B   e^{C_1 |S|/B} \sum_{\ell=2}^\infty \left|\left(1 + \frac{z_j \Delta_j}{F'_j}\right)^\ell - 1\right| \sum_{\mathbf{y}: y_j = \ell} \prod_{k \in [{\tau}]} \left(\frac{F'_k}{B}\right)^{y_k} \frac{e^{-F'_k/B}}{y_k!}.
\end{align*}
The first summation above corresponds to the probability of a random variable distributed as $\poi(F'_j/B)$ equaling 1 and the sum over independent Poisson random variables distributed as $\poi(F'_k/B)$ for all other $k \neq j$ summing to at least one (so that the bucket contains at least two elements). Note that the sum over these independent Poisson random variables is itself a Poisson random variable with rate parameter at most $\frac{C_1|S| - F'_j}{B}$.
The second summation corresponds to the case where the $\poi(F'_j/B)$ variable equals $\ell$ and marginalizes over all assignments to the other variables:
\begin{align}
    \sum_{i=j+1}^{\tau} f_j^i(\Delta)
    \leq& B  e^{C_1 |S|/B} \left(\frac{\Delta_j}{F'_j}\right) \Pr\left(\poi\left(\frac{F'_j}{B}\right) = 1\right) \left(1 - \Pr\left(\poi\left(\frac{C_1 |S| - F'_j}{B}\right) = 0\right)\right) \nonumber \\
    &+ B   e^{C_1 |S|/B} \sum_{\ell=2}^\infty \left|\left(1 + \frac{z_j \Delta_j}{F'_j}\right)^\ell - 1\right| \Pr\left(\poi\left(\frac{F'_j}{B}\right) = \ell\right) \nonumber \\
    \leq& B  e^{C_1 |S|/B} \left(\frac{\Delta_j}{F'_j}\right) \left(\frac{F'_j}{B}\right) \left(1 - e^{-\frac{C_1 |S| - F'_j}{B}}\right) \nonumber \\
    &+ B  e^{C_1 |S|/B} \sum_{\ell=2}^\infty \left|\left(1 + \frac{z_j \Delta_j}{F'_j}\right)^\ell - 1\right| \left(\frac{F'_j}{B}\right)^\ell \frac{e^{-F'_j/B}}{\ell!}  \nonumber \\
    \leq& B e^{C_1 |S|/B} \left( \left( \frac{\Delta}{B}\right) \left(\frac{C_1 |S|}{B}\right)
    +  \sum_{\ell=2}^\infty \left|\left(1 + \frac{z_j \Delta_j}{F'_j}\right)^\ell - 1\right| \left(\frac{F'_j}{B}\right)^\ell \frac{1}{\ell!} \right). \label{e:sensitivitybound}
\end{align}

Proceeding by cases, we will show the following upper bound on the term which appears within the summation above: 
\begin{equation*}
    \left|\left(1 + \frac{z_j \Delta_j}{F'_j}\right)^\ell - 1\right| \left(\frac{F'_j}{B}\right)^\ell \frac{1}{\ell!} \leq \frac{2\Delta}{B} \left(\frac{C_1 |S|}{B}\right)^{\ell-1}.
\end{equation*}
Consider the case where $\Delta \geq F'_j$. If $z_j = -1$, then it must be the case that $\Delta_j = F'_j$ and the term evaluates to:
\begin{equation*}
    \left(1 - \left(1 - \frac{F'_j}{F'_j}\right)^\ell \right) \left(\frac{F'_j}{B}\right)^\ell \frac{1}{\ell!} = \left(\frac{F'_j}{B}\right)^\ell \frac{1}{\ell!}
    \leq \left(\frac{\Delta}{B}\right)^\ell \leq \frac{\Delta}{B} \left(\frac{C_1 |S|}{B}\right)^{\ell-1}
\end{equation*}
If $\Delta \geq F'_j$ and $z_j = 1$, using the fact that $\Delta_j = \Delta \leq C_1 |S|$,
\begin{equation*}
    \left(\left(1 + \frac{\Delta}{F'_j}\right)^\ell - 1\right) \left(\frac{F'_j}{B}\right)^\ell \frac{1}{\ell!}
    \leq \left(\frac{2\Delta}{F'_j}\right)^\ell \left(\frac{F'_j}{B}\right)^\ell \frac{1}{\ell!}
    = \left(\frac{\Delta}{B}\right)^\ell \frac{2^\ell}{\ell!}
    \leq \frac{2\Delta}{B} \left(\frac{C_1 |S|}{B}\right)^{\ell-1}.
\end{equation*}

Now, consider the case where $\Delta < F'_j$. Then, $\Delta_j = \Delta$.
If $z_j = -1$, via Bernoulli's inequality,
\begin{equation*}
    \left(1 - \left(1 - \frac{\Delta}{F'_j}\right)^\ell \right) \left(\frac{F'_j}{B}\right)^\ell \frac{1}{\ell!}
    \leq \left(\frac{\Delta \ell}{F'_j}\right) \left(\frac{F'_j}{B}\right)^\ell \frac{1}{\ell!}
    = \frac{\Delta}{B(\ell-1)!} \left(\frac{F'_j}{B}\right)^{\ell-1}
    \leq \frac{\Delta}{B} \left(\frac{C_1|S|}{B}\right)^{\ell-1}.
\end{equation*}
If $\Delta < F'_j$ and $z_j = 1$, then
\begin{equation*}
    \left(1 + \frac{\Delta}{F'_j}\right)^\ell - 1
    \leq 1 + \frac{\Delta}{F'_j} \left(\binom{l}{1} + \binom{l}{2} + \ldots + \binom{l}{l}\right) - 1
    = \frac{\Delta}{F'_j} \left(2^\ell - 1\right)
\end{equation*}
and
\begin{equation*}
    \left(\left(1 + \frac{\Delta}{F'_j}\right)^\ell - 1\right) \left(\frac{F'_j}{B}\right)^\ell \frac{1}{\ell!} 
    \leq \frac{\Delta}{F'_j} \left(2^{\ell} - 1\right)  \left(\frac{F'_j}{B}\right)^\ell  \frac{1}{\ell!} 
    \leq \frac{\Delta}{F'_j} \left(\frac{F'_j}{B}\right)^\ell 
    \leq \frac{2\Delta}{B} \left(\frac{C_1 |S|}{B}\right)^{\ell-1},
\end{equation*}
completing the case analysis.

Assuming that $C_1$ is chosen s.t.\ $C_1 |S| \leq 8B$, this results in the overall bound following from \cref{e:sensitivitybound}:
\begin{align*}
    \sum_{i=1}^{\tau} f_j^i(\Delta) &\leq B e^{C_1 |S|/B} \left( \frac{\Delta C_1 |S|}{B^2} + \left(\frac{2\Delta}{B}\right) \left(\sum_{\ell=2}^\infty \left(\frac{C_1 |S|}{B}\right)^{\ell-1}\right) \right)\\
    &= \Delta e^{C_1|S|/B} \left( \left(\frac{C_1 |S|}{B}\right) +  2 \sum_{\ell=1}^\infty \left(\frac{C_1 |S|}{B}\right)^\ell\right) \\
    &\leq \Delta e^{1/8} \left(\frac{1}{8} + \frac{2}{7}\right) \\
    &\leq  \frac{\Delta}{2}.
\end{align*}
This completes the proof of \cref{lem:sensitivity}.
\end{proof}

Recall that the sum of the $\zeta_i$ terms can be bounded in terms of $f_j^i$ as follows:
\begin{align*}
    \sum_{i=1}^{\tau} \zeta_i = \sum_{i=1}^{\tau} \sum_{k=1}^i \sum_{1 \leq j_1 < j_2 < \ldots < j_k \leq i-1} f_{j_k}^i \left( f_{j_{k-1}}^{j_k} \left( \ldots \left( f_{j_2}^{j_3} \left( f_{j_1}^{j_2} \left(\gamma_{j_1}\right) \right)\right) \right)\right).
\end{align*}
We will show that as $k$ increases, the impact of the $\gamma_j$ terms diminish geometrically due to the sensitivity of $\hat{r}_i(\cdot)$ as described by \cref{lem:sensitivity}.
Let $\alpha_{j,k}$ be the sum over all terms from above for a fixed $k$ and for $j_k = j$, i.e., the effect of random error in $F'_j$ on all estimates of $\hat{r}_i$ for $i > j$ via dependency chains of length $k$:
\begin{equation}
    \alpha_{j,k} = \sum_{i=j+1}^{\tau} \sum_{1 \leq j_1 < j_2 < \ldots < j_k = j} f_{j_k}^i \left( f_{j_{k-1}}^{j_k} \left( \ldots \left( f_{j_2}^{j_3} \left( f_{j_1}^{j_2} \left(\gamma_{j_1}\right) \right)\right) \right)\right).
\end{equation}
Note that 
\begin{equation} 
\label{eq:zeta-alpha}
\sum_{j=1}^{\tau} \sum_{k=1}^{\tau} \alpha_{j,k} = \sum_{i=1}^{\tau} \zeta_i.
\end{equation}
We will inductively prove the following inequality about the $\alpha_{j,k}$ terms.
\begin{lemma}\label{lem:alphajk}
\begin{equation*}
    \sum_{j=1}^{\tau} \alpha_{j, k} \leq \frac{1}{2^k} \sum_{i=1}^{\tau}  \gamma_i.
\end{equation*}
\end{lemma}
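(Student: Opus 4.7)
The plan is to induct on $k$, using Lemma~\ref{lem:sensitivity} (the sensitivity bound $\sum_{i=j+1}^{\tau} f_j^i(\Delta) \leq \Delta/2$) as the only nontrivial input. The key observation is that in the definition of $\alpha_{j,k}$, the outermost function $f_{j_k}^{i} = f_j^{i}$ is the only piece that depends on $i$; the nested composition inside does not. This lets us pull the sum over $i$ past everything else and apply the sensitivity bound, which strips off one layer of the chain and produces a factor of $1/2$.

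For the base case $k=1$, the only chain is $j_1 = j$, so
\begin{equation*}
\alpha_{j,1} \;=\; \sum_{i=j+1}^{\tau} f_j^i(\gamma_j) \;\leq\; \frac{\gamma_j}{2}
\end{equation*}
by Lemma~\ref{lem:sensitivity}, and summing over $j$ gives $\sum_{j} \alpha_{j,1} \leq \tfrac{1}{2} \sum_i \gamma_i$.

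For the inductive step, fix $k \geq 2$ and assume $\sum_{j'} \alpha_{j', k-1} \leq \tfrac{1}{2^{k-1}} \sum_i \gamma_i$. Interchanging the outermost sum over $i$ with the sum over admissible chains $j_1 < \cdots < j_{k-1} < j_k = j$, and using that the inner composition $f_{j_{k-1}}^{j}(\cdots f_{j_1}^{j_2}(\gamma_{j_1}))$ is independent of $i$, I would write
\begin{equation*}
\alpha_{j,k} \;=\; \sum_{1 \leq j_1 < \cdots < j_{k-1} < j} \; \sum_{i=j+1}^{\tau} f_j^{i}\Bigl( f_{j_{k-1}}^{j}\bigl( \cdots f_{j_1}^{j_2}(\gamma_{j_1}) \cdots \bigr) \Bigr),
\end{equation*}
then apply Lemma~\ref{lem:sensitivity} to the inner $i$-sum to obtain
\begin{equation*}
\alpha_{j,k} \;\leq\; \frac{1}{2} \sum_{1 \leq j_1 < \cdots < j_{k-1} < j} f_{j_{k-1}}^{j}\bigl( f_{j_{k-2}}^{j_{k-1}}\bigl(\cdots f_{j_1}^{j_2}(\gamma_{j_1})\cdots\bigr)\bigr).
\end{equation*}

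Summing over $j$ and renaming $j_{k-1}$ as the new ``second-to-outer'' index and $j$ as the new ``outer'' index, the right-hand side becomes precisely $\tfrac{1}{2}\sum_{j'} \alpha_{j', k-1}$ by the definition of $\alpha_{j', k-1}$ (with the roles of $j$ and $i$ swapped). Applying the inductive hypothesis then yields
\begin{equation*}
\sum_{j=1}^{\tau} \alpha_{j,k} \;\leq\; \frac{1}{2} \sum_{j'=1}^{\tau} \alpha_{j', k-1} \;\leq\; \frac{1}{2^k} \sum_{i=1}^{\tau} \gamma_i,
\end{equation*}
completing the induction.

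There is no real obstacle here beyond careful index bookkeeping: the content is the sensitivity bound of Lemma~\ref{lem:sensitivity}, and the main thing to verify is that the inner expression after stripping the outermost $f_j^i$ matches the definition of $\alpha_{\cdot,k-1}$ after the natural reindexing. One should also check that all intermediate arguments $\Delta$ fed to $f_j^i(\cdot)$ lie in the admissible range (i.e., are at most $C_1 |S|$) so that Lemma~\ref{lem:sensitivity} genuinely applies; this follows since each $\gamma_i$ is bounded by the total count in $A$, and each layer only shrinks the argument by a factor of $1/2$.
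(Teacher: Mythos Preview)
Your proof is correct and follows essentially the same approach as the paper: induction on $k$, stripping the outermost $f_j^i$ via Lemma~\ref{lem:sensitivity} to gain a factor of $1/2$, and then recognizing the remaining double sum as $\sum_{j'} \alpha_{j',k-1}$ after reindexing. The paper carries out the same steps but introduces explicit notation $T_{j,k}$ for the set of admissible chains and $\beta_t$ for the inner composition, which makes the reindexing step (your ``renaming $j_{k-1}$ as the new second-to-outer index and $j$ as the new outer index'') fully explicit rather than verbal; otherwise the arguments are identical.
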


\begin{proof}
For the base case, consider $k=1$. Directly from \cref{lem:sensitivity}, we know that
\begin{equation*}
    \alpha_{j, 1} = \sum_{i=j+1}^{\tau} f_j^i(\gamma_j) \leq \frac{\gamma_j}{2}.
\end{equation*}
Therefore,
\begin{equation*}
    \sum_{j=1}^{\tau} \alpha_{j, 1} \leq \sum_{j=1}^{\tau} \frac{\gamma_j}{2},
\end{equation*}
as required.
Now, consider some $k > 1$ and assume the inequality holds for $k-1$.
Consider any fixed $j$:
\begin{equation*}
    \alpha_{j, k} = \sum_{i=j+1}^{\tau} \sum_{1 \leq j_1 < j_2 < \ldots < j_k = j} f_{j_k}^i \left( f_{j_{k-1}}^{j_k} \left( \ldots \left( f_{j_2}^{j_3} \left( f_{j_1}^{j_2} \left(\gamma_{j_1}\right) \right)\right) \right)\right)
\end{equation*}
Let $T_{j,k} = \{ (j_1, \ldots , j_k): 1 \leq j_1  \leq j_2 \leq \ldots \leq j_k=j\}$ be the set of all valid sequences of length $k$ ending in $j$ and for $t \in T_{j,k}$, let
\begin{equation*}
    \beta_{t} = f_{j_{k-1}}^{j_k}\left(f_{j_{k-2}}^{j_{k-1}} \left( \ldots \left(f_{j_1}^{j_2}(\gamma_1)\right)\right)\right)
\end{equation*}
be the corresponding composition of $f$'s.
Then,
\begin{equation*}
    \alpha_{j, k} = \sum_{i=j+1}^{\tau} \sum_{t \in T_{j,k}} f_{j}^i\left(\beta_{t}\right) = \sum_{t \in T_{j,k}} \sum_{i=j+1}^{\tau}  f_{j}^i\left(\beta_{t}\right).
\end{equation*}
By \cref{lem:sensitivity}, for any given $t \in T_{j,k}$,
\begin{equation*}
    \sum_{i=j+1}^{\tau} f_j^i(\beta_t) \leq \frac{\beta_t}{2}
\end{equation*}
which in turn implies a bound on $\alpha_{j, k}$:
\begin{equation*}
    \alpha_{j, k} \leq \sum_{t \in T_{j,k}} \frac{\beta_t}{2}.
\end{equation*}
Rewriting $\sum_{j=1}^{\tau} \alpha_{j, k-1}$ in terms of the $\beta_t$ variables:
\begin{align*}
    \sum_{j=1}^{\tau} \alpha_{j, k-1} &=\sum_{j=1}^{\tau}\sum_{i=j+1}^{\tau} \sum_{1 \leq j_1 < j_2 < \ldots < j_{k-1} = j} f_{j_{k-1}}^i \left( f_{j_{k-2}}^{j_{k-1}} \left( \ldots \left( f_{j_2}^{j_3} \left( f_{j_1}^{j_2} \left(\gamma_{j_1}\right) \right)\right) \right)\right) \\
    &= \sum_{i=1}^{\tau} \sum_{1 \leq j_1 < j_2 < \ldots < j_{k-1} < j_k = i} f_{j_{k-1}}^{j_k}\left( f_{j_{k-2}}^{j_{k-1}} \left( \ldots \left( f_{j_2}^{j_3} \left( f_{j_1}^{j_2} \left(\gamma_{j_1}\right) \right)\right) \right)\right)\\
    &= \sum_{i=1}^{\tau} \sum_{t \in T_{i,k}}\beta_t.
\end{align*}
Applying the inductive hypothesis, 
\begin{equation*}
    \sum_{j=1}^{\tau} \alpha_{j, k-1} = \sum_{i=1}^{\tau} \sum_{t \in T_{i,k}}\beta_t \leq \frac{1}{2^{k-1}} \sum_{i=1}^{\tau} \gamma_i.
\end{equation*}
Combining the inequalities,
\begin{equation*}
    \sum_{j=1}^{\tau} \alpha_{j,k} \leq \sum_{j=1}^{\tau} \sum_{t \in T_{j,k}}  \frac{\beta_t}{2}
    \leq \frac{1}{2} \left(\frac{1}{2^{k-1}} \sum_{i=1}^{\tau} \gamma_i\right)
    = \frac{1}{2^k} \sum_{i=1}^{\tau} \gamma_i,
\end{equation*}
as required.
\end{proof}

Now, we can complete the proof of \cref{lem:errprop} by finally bounding the total sum of the $\zeta_i$ terms by the $\gamma_i$ terms.
\begin{proof}[Proof of \cref{lem:errprop}]
From \cref{eq:zeta-alpha} and \cref{lem:alphajk}:
\begin{equation*}
    \sum_{i=1}^{\tau} \zeta_i = \sum_{k=1}^{\tau} \sum_{j=1}^{\tau} \alpha_{j, k}
    \leq \sum_{k=1}^{\tau} \frac{1}{2^k} \sum_{i=1}^{\tau} \gamma_i
    \leq \sum_{i=1}^{\tau} \gamma_i.
\end{equation*}
\end{proof}

In summary, as the errors propagate (via composition of the $f_j^i$'s), their magnitudes diminish geometrically, allowing us to bound the total propagated error by the initial errors.

Now, we will bound the $\gamma_i$ terms which, via \cref{lem:errprop}, bound $\sum_{i=1}^{\tau} |\hat{F}_i - F_i|$ up to constant factors.
\begin{lemma}\label{lem:varerror} 
If $errortype=m$ and $B = \Theta(\log \tau / \epsilon^2)$,
\begin{equation*}
    \sum_{i=1}^{\tau} \gamma_i = O\left(\frac{\log \tau}{\eps}\right),
\end{equation*}
and if $errortype=D$, $\tau=O(1)$, and $B = \Theta(1 / \epsilon^2)$,
\begin{equation*}
    \sum_{i=1}^{\tau} \gamma_i =O\left(\frac{1}{\eps}\right),
\end{equation*}
both holding with constant probability.
\end{lemma}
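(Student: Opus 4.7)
The plan is to split each $\gamma_i$ into a random--deviation piece and an $\hat{S}$--estimation piece. By the triangle inequality,
\begin{equation*}
\gamma_i \;\leq\; e^{\hat{S}/B}\,\bigl|b_i - \E[b_i]\bigr| \;+\; \E[b_i]\,\bigl|e^{|S|/B} - e^{\hat{S}/B}\bigr|.
\end{equation*}
By \cref{lem:S} we have $|S| = O(B)$ in both regimes (using $D \leq m$ for $errortype = m$), and the standing assumption $B = \Omega(\log n)$ makes both $|S|/B$ and $\hat{S}/B$ bounded by $O(1)$, so $e^{\hat{S}/B} = O(1)$. The mean--value theorem combined with \cref{lem:hatS} gives $\bigl|e^{|S|/B} - e^{\hat{S}/B}\bigr| = O(\sqrt{|S|}/B)$. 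Summing the second piece over $i$ and using the trivial $\sum_{i=1}^\tau \E[b_i] \leq B$ yields a contribution of $O(\sqrt{|S|}) = O(\sqrt{B})$, which is $O(\sqrt{\log\tau}/\eps)$ in the $m$ regime and $O(1/\eps)$ in the $D$ regime, in both cases within the target.

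For the random--deviation piece I would invoke the Poissonization of \cref{lem:poisson}: the indicators ``bucket $k$ has total count exactly $i$'' are independent across $k$, so $b_i$ is a sum of $B$ independent Bernoullis and $\Var(b_i) \leq \E[b_i]$. Jensen's inequality applied to $\sqrt{\cdot}$ then gives $\E\bigl[|b_i - \E[b_i]|\bigr] \leq \sqrt{\E[b_i]}$, and by linearity we are reduced to bounding $\sum_{i=1}^\tau \sqrt{\E[b_i]}$. For $errortype = D$ with constant $\tau$, a direct Cauchy--Schwarz with $\sum_i \E[b_i] \leq B$ gives $\sum_i \sqrt{\E[b_i]} \leq \sqrt{\tau B} = O(1/\eps)$, matching the target.

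The main obstacle is the $errortype = m$ case, where $\tau = \Theta(1/\eps)$ and plain Cauchy--Schwarz is loose by a factor of $\sqrt{\tau/\log\tau}$. To recover the target $O(\log\tau/\eps)$ bound, I would instead apply \cref{lem:sumofsqrt} to the sequence $(\E[b_1], \ldots, \E[b_\tau])$: the constraint $\sum_i \E[b_i] \leq B$ plays the role of $D$ in the lemma, and the ``mass'' role is played by $\sum_i i\,\E[b_i]$, which is at most the expected total count stored in $A$. Under Poissonization this expected total count equals $\E[m_S] = m\cdot 2^{-\ell_m}$, which is $\Theta(B)$ by the level--maintenance invariant established in the proof of \cref{lem:S}. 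The lemma then gives $\sum_i \sqrt{\E[b_i]} = O(\sqrt{B\log\tau}) = O(\log\tau/\eps)$. A final application of Markov's inequality converts each expectation bound to a constant-probability bound, and a union bound with the previously established constant-probability events (Poissonization, concentration of $|S|$ and $\hat{S}$, and correct level maintenance) completes the proof.
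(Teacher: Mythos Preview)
Your proposal is correct and follows essentially the same approach as the paper: split $\gamma_i$ via the triangle inequality into the $\hat{S}$--estimation piece and the random--deviation piece, bound the first using \cref{lem:hatS} to get an $O(\sqrt{B})$ contribution, use Poissonization to obtain $\Var(b_i)\le\E[b_i]$, reduce via Jensen to $\sum_i\sqrt{\E[b_i]}$, and finish with \cref{lem:sumofsqrt} (mass constraint for $errortype=m$, count constraint for $errortype=D$). The only cosmetic differences are that the paper writes $\sum_i\E[b_i]\,i=\tfrac{|S|m}{D}$ where you invoke the expected sampled mass $m\cdot 2^{-\ell_m}$, and the paper invokes the second clause of \cref{lem:sumofsqrt} for the $D$ case where you use Cauchy--Schwarz directly; both routes yield the same bounds.
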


\begin{proof}
Recall the expressions for our estimator
\begin{equation*}
    \hat{F}_i = \max\{b_i e^{\hat{S}/B} - \hat{r}_i(\hat{F}_1, \ldots, \hat{F}_{i-1}), 0\},
\end{equation*}
and for the ground truth
\begin{equation*}
    F_i = \E[s_i] e^{|S|/B} = \E[b_i] e^{|S|/B} - \E[r_i] e^{|S|/B} = \E[b_i] e^{|S|/B} - \hat{r}_i(F_1, \ldots, F_{i-1}).
\end{equation*}
The contribution to the error measured by $\gamma_i$ is
\begin{equation}\label{e:bdeviation}
    \gamma_i = |\E[b_i] e^{|S|/B} - b_i e^{\hat{S}/B} | \leq \E[b_i]\left|e^{|S|/B} - e^{\hat{S}/B}\right| + e^{\hat{S}/B} \left|b_i - \E[b_i]\right|.
\end{equation}
Let $\delta_S = \hat{S} - |S|$. Then, the first term can be expressed as
\begin{align*}
    \E[b_i]\left|e^{|S|/B} - e^{\hat{S}/B}\right| &= \E[b_i]e^{|S|/B}\left|1 - \frac{e^{\hat{S}/B}}{e^{{|S|/B}}}\right| \\
    &=\E[b_i]e^{|S|/B}\left|1 - e^{\delta_S/B}\right| \\
    &\leq \E[b_i]e^{|S|/B}\left|\frac{\delta_S}{B}\right|.
\end{align*}
Recall from \cref{lem:hatS}, with constant probability, $|\delta_S| =O(\sqrt{|S|})$. Under this assumption, and since $|S|=O(B)$ (\cref{lem:S}), the sum over all of such terms can be bounded by
\begin{equation*}
    \sum_{i=1}^{\tau} \E[b_i]\left|e^{|S|/B} - e^{\hat{S}/B}\right| = O\left(\sum_{i=1}^{\tau} \frac{\sqrt{|S|}E[b_i]}{B}\right) = O\left(\frac{|S|^{3/2}}{B}\right) = O\left(\sqrt{B}\right). 
\end{equation*}
% where, as defined earlier,  $m_S$ is the mass of elements included in $|S|$.

Now, let us consider the other terms in \cref{e:bdeviation} which account for the deviation of $b_i$ from its expectation:
\begin{equation*}
    \sum_{i=1}^{\tau} e^{\hat{S}/B} |b_i - \E[b_i]| = O\left(\sum_{i=1}^{\tau}  |b_i - \E[b_i]|\right) = O\left(\sum_{i=1}^{\tau}  \sqrt{\Var(b_i)}\right)
\end{equation*}
where the last inequality holds with constant probability.
Let $Z_{i, k}$ be an indicator random variable for the event that bucket $k$ has summed count exactly $i$.
Then, $b_i = \sum_{k=1}^B Z_{i, k}$. Further, the $Z_{i, k}$'s are mutually independent due to Poissonization, so $b_i$ has a Binomial distribution.
Therefore, $\Var(b_i) \leq \E[b_i]$.

As the sum of the counts over all buckets is the total mass of the sampled elements,
\begin{equation*}
    \sum_{i=1}^\tau \E[b_i] i = \frac{|S| m}{D}.
\end{equation*}
In the case where $errortype=m$, this evaluates to $O(B)$.
In either case, $\sum_i b_i \leq S = O(B)$.
By application of \cref{lem:sumofsqrt}, if $errotype=m$,
\begin{equation*}
    \sum_{i=1}^{\tau} \sqrt{\E[b_i]} = O(\sqrt{B \log \tau}),
\end{equation*}
and if $errortype=D$,
\begin{equation*}
    \sum_{i=1}^{\tau} \sqrt{\E[b_i]} = O(\sqrt{B }).
\end{equation*}

Assuming $errortype=m$ and $B = \Theta(\log \tau / \epsilon^2)$, the total contribution of errors from both terms is
\begin{equation*}
    \sum_{i=1}^{\tau} \gamma_i = O\left(\sqrt{B} + \sqrt{B\log \tau}\right) = O\left(\frac{\log \tau}{\eps}\right),
\end{equation*}
with constant probability.

Assuming $errortype=D$ and $B = \Theta(1 / \epsilon^2)$, the total contribution of errors from both terms is
\begin{equation*}
    \sum_{i=1}^{\tau} \gamma_i = O\left(\sqrt{B}\right) = O\left(\frac{1}{\eps}\right),
\end{equation*}
with constant probability.
\end{proof}

This completes the analysis of inversion procedure, showing that \cref{alg:dp} returns a set of frequencies $\hat{F}_1, \ldots \hat{F}_{\tau}$ close in $L_1$ distance to the empirical profile of our samples $F_1, \ldots, F_{\tau}$.

\subsection{Pseudorandomness}\label{subsec:pseudorandom}
So far, in our analysis, we have assumed that our hash functions are fully random. In order to bound the space required to store the hash functions, we will now argue that the guarantees of the algorithm only require limited independence.
We will require a lemma from prior work on applying Nisan's PRG to streaming algorithms.
\begin{lemma}[Lemma 3 from \cite{indyk2006stable}]\label{lem:nisan}
Consider an algorithm $\mathcal{A}$ that, given a stream $\mathcal{S}$ of elements $x$,
and a function $f : [n] \times \{0, 1\}^R  \rightarrow [\text{poly}(M)] \times [\text{poly}(M)]$, does the following:
\begin{itemize}
    \item Set $\mathcal{O}= (0, 0)$; Initialize length-$R$ chunks $R_0, \ldots ,R_n$ of independent random bits
    \item For each new element $x$, perform $\mathcal{O}= \mathcal{O}+ f (x, R_x)$
    \item Output $A(S) = \mathcal{O}$
\end{itemize}
Assume that the function $f ( \cdot , \cdot )$ is computed by an algorithm using $O(C + R)$ space and $O(T)$ time. Then there is an algorithm $\mathcal{A}'$ producing output $\mathcal{A}'(\mathcal{S})$, that uses only $O(C + R + log(Mn))$ bits of storage and $O([C + R + log(Mn)] log(n R))$
random bits, such that
\begin{equation*}
    \Pr[\mathcal{A}(\mathcal{S}) \neq \mathcal{A'}(\mathcal{S})] \leq 1/\text{poly}(n)
\end{equation*}
over some joint probability space of randomness of $\mathcal{A}$ and $\mathcal{A}'$. Then, the algorithm $\mathcal{A}'$ uses $O(T + log(n R))$ arithmetic operations per each element $x$.
\end{lemma}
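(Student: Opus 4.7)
The plan is to view the ideal algorithm $\mathcal{A}$ as a small-space streaming computation whose input is the concatenation of the stream with an auxiliary ``random bit tape'' of length $nR$ (one length-$R$ chunk per possible element $x \in [n]$), and then invoke Nisan's pseudorandom generator to replace the tape by the output of a PRG with a short seed. Concretely, I would first argue that there is a machine $\mathcal{M}$ which, on any fixed ordering of $(x, R_x)$ pairs, reproduces the output of $\mathcal{A}$ using only $O(C + R + \log M)$ bits of working memory: the $O(\log M)$ bits for the accumulator $\mathcal{O}$, the $O(R)$ bits to hold the current random chunk $R_x$, plus the $O(C)$ bits for computing $f$. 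This places $\mathcal{M}$ into the class of read-once branching programs to which Nisan's theorem applies.

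The subtlety, and the main obstacle, is that the streaming algorithm must use the \emph{same} chunk $R_x$ every time element $x$ reappears, while the random bits should be read only once by the PRG. I would handle this by interleaving the tapes: at step $t$, on receiving stream element $x_t$, the algorithm ``scans'' the random tape and extracts the $R$ bits at offset $x_t R$, then feeds $(x_t, R_{x_t})$ into $f$. The scan uses $O(R + \log(nR))$ auxiliary space (a counter plus a window), so the entire process is still a small-space read-once computation whose total working memory is $O(C + R + \log(Mn))$. Applying Nisan's generator designed for $O(C + R + \log(Mn))$-space algorithms running for $\text{poly}(n)$ steps yields a seed length of $O((C + R + \log(Mn)) \log(nR))$ and error $1/\text{poly}(n)$ in total variation distance, which is the claimed failure probability.

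The output of $\mathcal{A}'$ is then defined by storing only the short Nisan seed plus the accumulator $\mathcal{O}$, and, for each stream update $x_t$, recomputing the pseudorandom $R_{x_t}$ by evaluating the PRG at the appropriate index. Since Nisan's generator can be evaluated at a single output position using a logarithmic number of hash applications (each taking $\tilde{O}(1)$ arithmetic operations on words of size $O(\log(nR))$), the per-update time becomes $O(T + \log(nR))$ as claimed, and the overall space is $O(C + R + \log(Mn))$ dominated by the seed, the accumulator, and the $O(R)$ scratch for one random chunk. The one place that requires care is the bookkeeping to ensure that the ``virtual'' order of reading the random tape is consistent with what Nisan's theorem requires (read-once in a fixed order); this is resolved by committing, in the analysis only, to the canonical order $x = 1, 2, \ldots, n$ for the tape while the stream may visit elements in an arbitrary order, since the tape bits are not actually generated until queried.
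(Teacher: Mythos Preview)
The paper does not prove this lemma; it is quoted verbatim (with minor cosmetic changes noted after the statement) from Indyk's 2006 paper and used as a black box. So there is no ``paper's own proof'' to compare against.

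That said, your sketch is the standard Indyk argument in outline, but the crucial logical step is missing and your ``scanning'' description actually points in the wrong direction. The idea of, for each arriving $x_t$, scanning the random tape to locate the block at offset $x_t R$ does \emph{not} yield a read-once computation: over the whole stream the tape would be traversed once per update, so Nisan's theorem would not apply to that machine. What makes the argument go through is the commutativity of the accumulator update $\mathcal{O}\leftarrow\mathcal{O}+f(x,R_x)$: the final output depends only on the multiset of stream elements, so in the \emph{analysis} one may replace the given stream by the stream sorted by element identity. On the sorted stream the random tape is read in the fixed order $R_1,R_2,\ldots,R_n$ in a single pass, the whole computation fits in $O(C+R+\log(Mn))$ space, and Nisan's generator fools it to within $1/\mathrm{poly}(n)$. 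The implementation of $\mathcal{A}'$ then stores only the Nisan seed and recomputes the pseudorandom block $R_x$ on demand (Nisan's construction allows evaluating any output block from the seed with $O(\log(nR))$ hash applications), giving the stated space and time bounds. Your final sentence gestures at this reordering, but without invoking the order-independence of the sum it is not clear why ``committing to the canonical order in the analysis'' is legitimate; that is the step you should make explicit.
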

Note several minor changes from the lemma as it appears in \cite{indyk2006stable}: the original lemma includes a parameter for the size of stream updates which we omit as all of our updates are increments, the original lemma considers a one dimensional output while we consider a two dimensional output, and the original lemma bounds the bias by $1/n$ while the bias can in fact by bounded by $1/\text{poly}(n)$ without changing the asymptotic result.

\begin{lemma}\label{lem:hashfn}
$O\left(\log^3(B) + \log n\right)$ bits of randomness suffice for the hash functions used in \cref{alg:profile}.
\end{lemma}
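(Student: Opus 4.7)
\medskip

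\noindent\textbf{Proof plan for Lemma \ref{lem:hashfn}.}

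The plan is to invoke Lemma \ref{lem:nisan} (Nisan's PRG applied to a streaming subroutine) after first identifying the minimal independence properties used elsewhere in the analysis. I would begin by auditing where the hash randomness entered the argument. Examining Lemmas \ref{lem:S}, \ref{lem:poisson}, \ref{lem:hatS}, and \ref{lem:varerror}, every probabilistic claim about $g_2, z, h_1, \ldots, h_H$ reduces to (i) the correct marginal distribution of $(z(x), h_1(x), \ldots, h_H(x))$ for a single distinct element, and (ii) expectation/variance bounds on $G = \sum_{k=1}^B \mathbbm{1}[\text{bucket $k$ nonempty}]$ and on $b_i = \sum_{k=1}^B Z_{i,k}$, each of which can be written as a sum over bucket pairs of second-moment quantities. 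Thus the analysis only needs the joint distribution of the truncated counts of any fixed pair of buckets to match the fully random one up to $1/\poly(n)$ statistical error. The hash $g_1:[n]\to[n]$ is used only for least-significant-bit sampling, where a pairwise independent family suffices (this is the only place $\log n$ really enters), contributing $O(\log n)$ bits; the distinct-element sketches of \cite{blasiok2020distinctelements,kane2010optimal} are derandomized by their own guarantees.

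For the remaining hashes $g_2, z, h_1, \ldots, h_H$, I would fix a pair of buckets $(k,k')$ and consider the subroutine $\mathcal{A}_{k,k'}$ that scans the stream and, reading a fresh chunk of randomness indexed by the distinct element $g_2(x)$, maintains two counters (one per bucket), each truncated at $\tau = O(1/\eps) = O(\sqrt{B})$. The subroutine uses $C = O(\log B)$ bits of state; the per-element chunk $R = O(\log B)$ bits suffices to specify, for each $i \le H = O(\log B)$, whether $h_i(g_2(x)) \in \{k, k'\}$ and the value of the truncated Poisson $z(g_2(x)) \in \{0, 1, \ldots, H\}$ (which, by Lemma \ref{lem:poisson}'s tail argument, misrepresents $z$ with only $1/\poly(B)$ error); the output fits in $\log M = O(\log B)$ bits. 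Plugging $C, R, \log M = O(\log B)$ into Lemma \ref{lem:nisan} and using the standing assumption $B = \Omega(\log n)$ gives a Nisan seed of length
\[
O\bigl((C + R + \log(Mn)) \log(nR)\bigr) \;=\; O(\log B \cdot \log n) \;=\; O(\log^3 B),
\]
since $\log n = O(\log B) \cdot \log B / \log B \leq O(\log^2 B)$ under $B = \Omega(\log n)$. The same seed simultaneously fools all $O(B^2)$ pairs $(k,k')$ by a union bound with the $1/\poly(n)$ error guarantee from Lemma \ref{lem:nisan}, and the additive $\log n$ bits used for indexing chunks by $g_2$'s value and for $g_1$ yield the final bound $O(\log^3 B + \log n)$.

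The main obstacle is the tight accounting in the last step: a naive application of Lemma \ref{lem:nisan} with chunk size $R = \log n + O(\log B)$ (to evaluate $g_1$ along with the other hashes) gives seed length $\Omega(\log^2 n)$, which does not match the target bound. The key trick is therefore to keep $g_1$ outside of the Nisan subroutine (using $O(\log n)$ bits of pairwise independence directly) and to make the Nisan chunks indexed by values in $[T] = \poly(B)$ rather than by elements of $[n]$, so that $\log n$ never enters multiplicatively in $\log(nR)$ — only additively as the cost of storing $g_1$'s seed. Verifying that this reindexing is sound requires checking that the subroutine is oblivious to the identity of the original stream element and depends only on $g_2(x)$, which is immediate from Algorithm \ref{alg:sampling}.
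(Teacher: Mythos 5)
Your overall plan is the same as the paper's: treat $g_1$ (and $g_2$) with pairwise independence at a cost of $O(\log n)$ bits, and apply Nisan's generator (\cref{lem:nisan}) only to the subroutine that computes the truncated counts of a fixed pair of buckets, with the chunks indexed by the hashed IDs in $[T]=\poly(B)$ so that the true universe size never enters the seed length multiplicatively. However, the displayed seed-length computation as written is wrong and contradicts your own closing paragraph. You compute $O\bigl((C+R+\log(Mn))\log(nR)\bigr)=O(\log B\cdot\log n)=O(\log^3 B)$, justified by the claim that $B=\Omega(\log n)$ implies $\log n = O(\log^2 B)$. That implication is false: $B=\Omega(\log n)$ only gives $\log n = O(B)$, and $\log n$ can be as large as $\Theta(B)=\Theta(1/\eps^2)$, which is exponentially larger than $\log^2 B$. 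If $\log n$ really appeared in the factor $\log(nR)$, the bound would be $O(\log B\cdot\log n)=O(B\log B)$ and the lemma would fail. The fix is exactly what your last paragraph says and what the paper does: the parameter ``$n$'' fed into \cref{lem:nisan} is the size of the post-$g_2$ domain, i.e.\ $\poly(B)$ (the paper uses $O(B^2\log^2\tau)$), so every logarithmic factor in the lemma is $O(\log B)$ and the true $\log n$ appears only additively for storing the seeds of $g_1,g_2$. You should rewrite the displayed calculation with these parameters rather than appeal to the false inequality.

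A second, smaller error is the chunk length. You take $R=O(\log B)$ on the grounds that the pair test only needs to know, for each $i\le H$, whether $h_i(g_2(x))\in\{k,k'\}$, plus the truncated Poisson value. But the chunk $R_x$ is the block of (pseudo)random bits from which the \emph{actual algorithm} derives $z(x'),h_1(x'),\ldots,h_H(x')$; since each $h_i(x')$ takes values in $[B]$ and there are $H=O(\log B)$ of them, the chunk must be $R=O(\log^2 B)$ bits (this is also forced by the bias: each membership event has probability $\Theta(1/B)$, so even a single test consumes $\Theta(\log B)$ bits, and the pair test must read the same chunks the algorithm uses, merely ignoring bits it does not need). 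With $R=O(\log^2 B)$ and the $\poly(B)$-sized domain, \cref{lem:nisan} gives seed length $O\bigl((\log B+\log^2 B+\log B)\log B\bigr)=O(\log^3 B)$, matching the paper; so the final bound is unaffected, but your accounting of $R$ should be corrected.
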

\begin{proof} 
The randomness of \cref{alg:profile} is realized through the following hash functions $g_1$, $g_2$, $z$, $h_1, \ldots, h_H$.
The hash function $g_1: [n] \rightarrow [n]$ is used to sample elements based on the least significant bit of their hashed values. The randomness of this hash function is used in analysis of the space needed to store $A$ as well as $|S|$ and $F_1, \ldots, F_m$. The analysis of those quantities only use first and second moments, so pairwise independence suffices for $g_1$.

The hash function $g_2: [n] \rightarrow [T]$ maps the sampled elements down to a domain of size $\Theta(B^2)$. We only require that with constant probability, there are no collisions between sampled elements. As with constant probability $|S| = O(\sqrt{T})$ (see \cref{lem:S}), this holds under pairwise independence.

The hash function $z: [T] \rightarrow \N \cup \{0\}$ maps sampled stream elements to the outcome of a $\poi(1)$ random variable.\footnote{For example via inverse transform sampling \url{https://en.wikipedia.org/wiki/Inverse_transform_sampling}.}
As long as $z$ is pairwise independent, the total number and mass of (copied) samples will be concentrated about their expectation.

The hash functions $h_1, \ldots, h_H: [T] \rightarrow [B]$ map sampled stream elements to a bucket in the array $A$.
These hash functions (as well as $z$) affect $b_i$, the number of buckets with count exactly $i$ (as well as their sum, $G$, the number of nonempty buckets). 

In our analysis, we use the first and second moments of the $b_i$'s to bound the error of our algorithm.
Let $X_k$ be a random variable for the count in bucket $k$.
The first two moments of $b_i$ depend only on the joint distribution of $(X_k, X_{k'})$. We will use \cref{lem:nisan} to show that limited randomness suffices to simulate this distribution up to small bias.
Consider the following setting of the parameters in \cref{lem:nisan} where the algorithm $\mathcal{A}$ computes the pair $(X_k, X_{k'})$ given fully random hash functions:
\begin{itemize}
    \item $n = O(B^2 \log^2 \tau)$ as we are conditioning on the outcomes of $g_1, g_2$ which have generated $S$ distinct items for our stream coming from a domain of size $O(B^2 \log^2 \tau)$.
    \item $R = O(\log^2 B)$ is the number of random bits required to generate fully random $z(x)$, $h_1(x)$, $\ldots$, $h_H(x)$ as the range of these hash functions is $B$ and there are $H=O(\log B)$ such hash functions.
    \item $M = \tau$ as the counts of a bucket $X_k$ is at most $\tau$.
    \item $f(x, R_x)$ is computed by $\mathcal{A}$ as follows. Set the algorithm's output to be $\mathcal{O} = (a, b) = (0,0)$. $\mathcal{A}$ uses the initial bits of $R_i$ to generate $z(x)$. Then, $\mathcal{A}$ uses each of the next chunks of $O(\log(B))$ bits in $R_i$ to generate $h_1(x), \ldots, h_H(x)$. On generating $h_i(x)$, $\mathcal{A}$ first checks if $z(x) < i$ and if so, stops early and returns $(a,b)$. Otherwise, $\mathcal{A}$ checks if $h_i(x) = k$ and if so increments $a$ by one (unless the count exceeds $\tau$). Finally, $\mathcal{A}$ checks if $h_i(x) = k'$ and if so increments $b$ by one (unless the count exceeds $\tau$). Via this process, $\mathcal{A}$ will recover the true counts of $(X_k, X_{k'})$.
    \item $C$, the space used by $\mathcal{A}$, is $O(\log B)$. In processing a given stream element, $\mathcal{A}$ stores the outcome of $z(x)$, the current $h_i(x)$, $\mathcal{O}$, and the current increment to $\mathcal{O}$. In total, these are a constant number of quantities, each taking up at most $O(\log B)$ space.
\end{itemize}
It follows from application of \cref{lem:nisan} that the algorithm $\mathcal{A'}$ using Nisan's PRG requires only
\begin{equation*}
    O(C + R + \log(Mn))\log(nR) = O\left(\log B + \log^2(B) + \log B) \log B\right) = O(\log^3 B)
\end{equation*}
random bits to approximate the joint distribution of $(X_k, X_{k'})$ up to bias $\text{poly}(\eps)$:
\begin{equation*}
    \Pr[\mathcal{A}(\mathcal{S}) \neq \mathcal{A}'(\mathcal{S})] \leq 1 / \text{poly}(B).
\end{equation*}
For large enough $\text{poly}(B)$, this implies that the expectations and variances of the $b_i$'s using limited randomness will be correct up to a small factor of $\eps$, which is enough for our analysis.

As the hash functions $g_1$ and $g_2$ only require constant independence, they can be stored and queried in $O(\log n)$ bits~\cite{carter1979hash}.
So, the total space required for storing randomness is $O(\log^3(B) + \log n)$.
\end{proof}

\subsection{Putting it all together}
We are now ready to prove the main theorems.

\subsubsection{Error type \texorpdfstring{$m$}{m}}
\begin{proof}[Proof of \cref{thm:algm}]
\item
\paragraph*{Correctness}
Recall that our estimator is
\begin{equation*}
    \hat{\phi}_i = \frac{\hat{D}}{\hat{S}} \hat{F}_i.
\end{equation*}
From \cref{lem:errprop,lem:varerror}, with constant probability,
\begin{equation*}
    \sum_{i=1}^{\tau} |F_i - \hat{F}_i| = O\left(\frac{\log \tau}{\eps}\right).
\end{equation*}
Also, recall that by \cref{lem:empiricalprofile2},  with constant probability,
\begin{equation}\label{e:dm-error}
    \sum_{i=1}^{\tau} \left|\phi_i - \frac{D}{|S|} F_i\right| = O(\eps m).
\end{equation}
With constant probability, $|S| = \Omega\left(\frac{D \log\tau}{m \eps^2}\right)$ by \cref{lem:S}, $|\hat{S} - |S|| = O(\sqrt{|S|})$ by \cref{lem:hatS}, and $|\hat{D} - D| \leq \eps D/10$ by the correctness of the distinct elements sketch (e.g., \cite{kane2010optimal}). In addition, $\log \tau / \eps = O(\eps B) = o(\eps m)$.
Under these conditions,
\begin{align}
    \sum_{i=1}^{\tau} \left|\frac{D}{|S|} F_i - \frac{\hat{D}}{\hat{S}} \hat{F}_i\right| 
    &\leq \sum_{i=1}^{\tau} \frac{D}{|S|}\left|F_i - \hat{F}_i\right| + \left|\frac{D}{|S|} - \frac{\hat{D}}{\hat{S}}\right| \hat{F}_i \nonumber \\
    &\leq \frac{D}{|S|} \cdot O\left(\frac{\log\tau}{\eps}\right) + \sum_{i=1}^{\tau}  \frac{\eps D}{5\hat{S}} \hat{F}_i \nonumber \\
    &= O\left(\frac{m \eps^2}{\log\tau} \left(\frac{\log\tau}{\eps}\right)\right) + \frac{\eps D}{5} \nonumber \\
    &= O(\eps m).\label{e:finalerr}
\end{align}

Recall that $\tau = O(1/\eps)$. Implicitly, we will predict $\hat\phi_i = 0$ for all $i > \tau$. As there are at most $m/\tau$ elements with frequency greater than or equal to $\tau$, this contributes error at most $O(\eps m)$.
So, with appropriately chosen constants and with constant probability, triangle inequality with \cref{e:dm-error} and \cref{e:finalerr} gives
\begin{equation*}
    \sum_{i=1}^m \left|\phi_i - \hat{\phi}_i\right| \leq \eps m,
\end{equation*}
as required.

\paragraph*{Space}
Now, we will analyze the space used by the algorithm while processing stream updates.
By \cref{lem:counts}, the array $A$ can be maintained in $O(B) = O(\log (1/\eps) / \eps^2)$ bits of space.
By \cref{lem:hashfn}, it suffices to store $O(\log^3(B) + \log n) = O(\log^3(1/\eps) + \log n)$ bits of randomness.
As each stream update occurs, we must store its identity in $O(\log n)$ bits of space. Storing the length of the stream na\"ively takes $O(\log m)$ bits. Given a $\poly (m)$ upper bound on the stream length, this can be reduced to $O(\log\log m)$ bits using the Morris+ algorithm of~\cite{nelson2022optimalmorris} to maintain a small constant approximation to the stream length (which is all we require) at all times with small constant failure probability.
The distinct elements sketch can be maintained in  $O(1/\eps^2 + \log n)$ bits of space \cite{kane2010optimal}.
Therefore, the total space usage of our algorithm is, with constant probability, $O(\log (1/\eps)/\eps^2 + \log n + \log \log m)$ bits.

\paragraph*{Update time}
Each time a stream element appears, hashing it for sampling and to compress its ID each require a constant number of operations. Without accounting for the pseudorandom generator, Poissonization and incrementing bucket counts takes $O(1)$ expected amortized time as each element is copied $O(1)$ times in expectation and array updates can be done in expected amortized constant time~\cite{blandford2008CompactDF} (by also storing the (level, counter) dictionaries in cells of the array $A$ as variable-bit-length dictionaries). Deletions of (level, counter) pairs take constant amortized time: they take $O(B)$ time each time the size of the stream doubles relative to $B$. Then, by \cref{lem:nisan} and the parameters used in \cref{lem:hashfn}, these updates take $O(1 + \log\left(B^2 \log^2 B \log^2 \tau\right)) = O(\log(1/\eps))$ time accounting for Nisan's PRG. Finally, updating the distinct elements sketch takes $O(1)$ time~\cite{kane2010optimal}. So, the total update time is $O(\log(1/\eps))$.

\paragraph*{Post-processing}
The dynamic program for post-processing maintains $O(1/\eps^2)$ cells, and the computation takes time polynomial in $1/\eps$. Furthermore, all entries in the DP table are positive. It follows that performing the computation with $O(\log(1/\eps))$ bits of precision per cell suffices to calculate the answer with a multiplicative error of $1+\eps^{O(1)}$. So, the total space usage is $O(\log (1/\eps)/\eps^2)$.

For a given cell in the dynamic program in \cref{alg:dp}, calculating $\DP[i, x]$ requires $O(i^2/x)$ operations due to the nested sum. As we are only concerned with cells where $x \leq i$, filling the row in the table corresponding to a fixed $i$ takes $\sum_x O(i^2/x) = O(i^2 \log i)$ time. Summing over all $i \in \{1, \ldots, \tau\}$, the total post-processing time is bounded by $O(\log (1/\eps)/\eps^3)$.
\end{proof}

\subsubsection{Error type \texorpdfstring{$D$}{D}} 

\begin{proof}[Proof of \cref{thm:algD}]
\item
\paragraph*{Correctness}
Recall that our estimator is
\begin{equation*}
    \hat{\phi}_i = \frac{\hat{D}}{\hat{S}} \hat{F}_i
\end{equation*}
and that $\tau = O(1)$.
From \cref{lem:errprop,lem:varerror}, with constant probability,
\begin{equation*}
    \sum_{i=1}^{\tau} |F_i - \hat{F}_i| = O\left(\frac{1}{\eps}\right).
\end{equation*}
Also, recall that by \cref{lem:empiricalprofile2},  with constant probability,
\begin{equation}\label{e:dm-errorD}
    \sum_{i=1}^{\tau} \left|\phi_i - \frac{D}{|S|} F_i\right| = O(\eps D).
\end{equation}
With constant probability, $D/|S| = \Theta\left(D\eps^2\right)$ by \cref{lem:S}, $|\hat{S} - |S|| = O(\sqrt{|S|})$ by \cref{lem:hatS}, and $|\hat{D} - D| \leq \eps D/10$ by the correctness of the distinct elements sketch (e.g., \cite{kane2010optimal}). 
Under these conditions,
\begin{align}
    \sum_{i=1}^{\tau} \left|\frac{D}{|S|} F_i - \frac{\hat{D}}{\hat{S}} \hat{F}_i\right| 
    &\leq \sum_{i=1}^{\tau} \frac{D}{|S|}\left|F_i - \hat{F}_i\right| + \left|\frac{D}{|S|} - \frac{\hat{D}}{\hat{S}}\right| \hat{F}_i \nonumber \\
    &\leq \frac{D}{|S|} \cdot O\left(\frac{1}{\eps}\right) + \sum_{i=1}^{\tau}  \frac{\eps D}{5\hat{S}} \hat{F}_i \nonumber \\
    &= O\left( \frac{D\eps^2}{\eps}\right) + \frac{\eps D}{5} \nonumber \\
    &= O(\eps D).\label{e:finalerrD}
\end{align}
So, with appropriately chosen constants and with constant probability, triangle inequality with \cref{e:dm-errorD} and \cref{e:finalerrD} gives
\begin{equation*}
    \sum_{i=1}^\tau \left|\phi_i - \hat{\phi}_i\right| \leq \eps D,
\end{equation*}
as required.

\paragraph*{Space}
Now, we will analyze the space used by the algorithm while processing stream updates.
By \cref{lem:counts}, the array $A$ can be maintained in $O(B) = O(1 / \eps^2)$ bits of space.
By \cref{lem:hashfn}, it suffices to store $O(\log^3(B) + \log n) = O(\log^3(1/\eps) + \log n)$ bits of randomness.
As each stream update occurs, we must store its identity in $O(\log n)$ bits of space. We can avoid storing the length of the stream $t$ for $errortype=D$ as we never use the stream length.
The distinct elements sketches can be maintained in $O(1/\eps^2 + \log n)$ and $O(\log n)$  bits of space \cite{kane2010optimal, blasiok2020distinctelements}.
Therefore, the total space usage of our algorithm is, with constant probability, $O( 1/\eps^2 + \log n)$ bits.

\paragraph*{Update time}
From the proof above of \cref{thm:algm}, the expected amortized update time is at least $O(\log(1/\eps))$. In addition, we have to update the tracking distinct elements sketch, which can be done in $O(\log \log n)$ time~\cite{blasiok2020distinctelements}.

\paragraph*{Post-processing}
As in the proof above of \cref{thm:algm}, but with $\tau=O(1)$, the post-processing space usage is bounded by $O( \log(1/\eps))$ and the post-processing time is bounded by $O(1)$ (assuming arithmetic with $O(\log(1/\eps))$ bits of precision can be done in constant time) as the DP table has constant size.
\end{proof}

\section{Lower Bounds}
\label{s:lower}
\subsection{Preliminaries}
For a function $f: \mathcal{X} \times \mathcal{Y} \rightarrow \{0,1\}$, we use $D_{\mu, \delta}^{1-way}(f)$ to denote the minimum communication of a protocol, in which a single message from Alice to Bob is sent, for solving $f$ with probability at least $1-\delta$, where the probability is taken over both the coin tosses of the protocol and the input distribution $\mu$. We consider the indexing function $f = ${\sc IND} in which $\mathcal{X} = \{0,1\}^n$, $\mathcal{Y} = [n] = \{1, 2, \ldots, n\}$, and $f(x,y) = x_y$. The following theorem is well-known:

\begin{theorem}\cite{DBLP:conf/stoc/KremerNR95}\label{thm:IND}
If $\mu$ is the product uniform distribution on $\{0,1\}^n \times [n],$ then 
$D_{\mu, 1/3}^{1-way}({\sc IND}) = \Omega(n)$. 
\end{theorem}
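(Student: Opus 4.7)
}
The plan is to prove this classical one-way communication lower bound via an information-theoretic argument. Fix any randomized one-way protocol of cost $c$, and let $M = M(X, R)$ denote the (random) message Alice sends, where $X$ is uniform on $\{0,1\}^n$ and $R$ is Alice's private randomness. Let $\hat{X}_Y = B(M, Y, R')$ be Bob's output when given index $Y$ uniform on $[n]$ and private randomness $R'$. By assumption, $\Pr[\hat{X}_Y = X_Y] \ge 2/3$, and we want to conclude $c = \Omega(n)$.

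The first key step is to translate the correctness guarantee into a per-coordinate mutual information bound. For each fixed $y \in [n]$, let $p_y = \Pr[\hat{X}_y = X_y]$; the overall success probability condition gives $\E_y[p_y] \ge 2/3$. By Markov, a constant fraction of indices $y$ satisfy $p_y \ge 1/2 + \Omega(1)$, and for each such $y$, Fano's inequality applied to the binary predictor $\hat{X}_y$ for $X_y$ yields $H(X_y \mid M) \le 1 - \Omega(1)$, i.e., $I(M; X_y) = \Omega(1)$. (Here we use that $Y$ and $R'$ are independent of $(X, M)$.)

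The second key step is to aggregate these per-coordinate bounds using independence of the coordinates of $X$. Because $X_1, \ldots, X_n$ are mutually independent under the uniform distribution, subadditivity of entropy gives
\begin{equation*}
H(X_1, \ldots, X_n \mid M) \;\le\; \sum_{y=1}^n H(X_y \mid M),
\end{equation*}
and hence
\begin{equation*}
\sum_{y=1}^n I(M; X_y) \;\le\; I(M; X_1, \ldots, X_n) \;\le\; H(M) \;\le\; c.
\end{equation*}
Combining this with the per-coordinate lower bound from the previous step, which contributes $\Omega(1)$ for each of $\Omega(n)$ indices $y$, gives $c = \Omega(n)$, proving the theorem.

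The main technical care needed is the handling of randomized protocols and of the averaging over $y$: one must verify that the Fano step can be applied to the coordinates $y$ with $p_y \ge 1/2 + \Omega(1)$, and that the independence of $Y$ and $R'$ from $(X, M)$ lets us convert Bob's guarantee into a genuine statement about $I(M; X_y)$. Once this bookkeeping is done, the information-theoretic chain above is routine, and no further structural insight is required.
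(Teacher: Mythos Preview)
The paper does not prove \cref{thm:IND}; it is stated as a known result with a citation to \cite{DBLP:conf/stoc/KremerNR95} and is used as a black box in the lower-bound reductions of \cref{s:lower}. So there is no ``paper's own proof'' to compare against.

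Your argument is the standard information-theoretic proof of this classical bound, and it is correct. A couple of minor remarks. First, the Markov step (isolating $\Omega(n)$ coordinates with $p_y \ge 1/2 + \Omega(1)$) works but is slightly clumsy; the cleaner route is to apply Fano to every coordinate, giving $I(M; X_y) \ge 1 - H_2(1-p_y)$, and then use concavity of $H_2$ together with $\frac{1}{n}\sum_y (1-p_y) \le 1/3$ to conclude $\sum_y I(M; X_y) \ge n(1 - H_2(1/3)) = \Omega(n)$. Second, since $D_{\mu,1/3}^{1\text{-way}}$ is a distributional quantity, by Yao's principle one may assume the protocol is deterministic, which removes the need to track $R$ and $R'$; your treatment of the randomized case is fine, but this observation would let you drop the bookkeeping paragraph entirely.
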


We need the following public-coin reduction from {\sc IND} to the so-called {\sc Gap-Hamming} Problem. In the following, $\Delta(w,z)$ denotes the Hamming distance between binary strings $w$ and $z$, that is, the number of positions in which the strings differ.  

\begin{theorem}\cite{DBLP:journals/toc/JayramKS08} (also Lemma 62 of \cite{woodruff2007efficient} with the constant in the success probability appropriately adjusted) \label{thm:INDtoGHD}
Let $x \in \{0,1\}^t$ and $y \in [t]$ be arbitrary. Let $R$ be a public coin and partition it into $t$ chunks $R^1, \ldots, R^t$, each of size $s = C t$ for a certain constant $C > 0$. Let $z = R^y$ and let $w = \textrm{Majority}(R^{\ell} \mid x_{\ell} = 1)$. Then there are constants $\alpha > \beta$ such that with probability at least $99/100$:
\begin{enumerate}
\item If $x_{y} = 0$, then $\Delta(w, z) > s/2 - \beta \sqrt{s}$, and
\item If $x_{y} = 1$, then $\Delta(w, z) < s/2 - \alpha \sqrt{s}$
\end{enumerate}
Here $w = \textrm{Majority}(R^{\ell} \mid x_{\ell} = 1)$ is an $s$-bit string, and $w_j$ is equal to the majority of the bits $R^{\ell}_j$ for which $x_{\ell} = 1$, where if exactly half the bits are $1$, we say the majority is $1$ for concreteness. 
\end{theorem}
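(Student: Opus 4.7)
The plan is to analyze the Hamming distance $\Delta(w,z)$ coordinatewise and reduce the statement to a concentration argument on a sum of independent Bernoulli indicators, where the only nontrivial ingredient is an anti-concentration estimate for the bias of a uniformly-chosen bit against the majority of a batch of fair coins.

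First I would fix a coordinate $j \in [s]$ and let $k = |\{\ell : x_\ell = 1\}|$. The bits $B_\ell := R^\ell_j$ for $\ell$ with $x_\ell = 1$ are $k$ independent Bernoulli$(1/2)$ variables, and $w_j$ is their majority (with the stated tie-breaking). If $x_y = 0$, then $z_j = R^y_j$ is independent of all the bits defining $w_j$, giving $\Pr[w_j \neq z_j] = 1/2$ exactly. If $x_y = 1$, then $z_j$ is one of the $B_\ell$, so we must compute $\Pr[B_1 \neq \mathrm{Majority}(B_1,\dots,B_k)]$. A direct case split on the value of $B_1$ reduces this to tail probabilities of $Y \sim \mathrm{Bin}(k-1,1/2)$; for $k$ odd one obtains $\Pr[B_1 \neq \mathrm{Majority}] = \tfrac12 - \tfrac12 \Pr[Y = (k-1)/2]$, which equals $\tfrac12 - \Theta(1/\sqrt{k})$ by Stirling's estimate for the central binomial coefficient, and an analogous bound holds for $k$ even (with the asymmetric tie-breaking carried through the case split).

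Next I would note that the indicators $\mathbf{1}[w_j \neq z_j]$ are mutually independent across $j \in [s]$ because distinct coordinates of the public coin $R$ feed into them through disjoint bits. Hence $\Delta(w,z) = \sum_{j=1}^s \mathbf{1}[w_j \neq z_j]$ is a sum of $s$ i.i.d.\ Bernoullis with common mean $p$, where $p = 1/2$ in the case $x_y = 0$ and $p = 1/2 - c/\sqrt{k}$ for an absolute constant $c > 0$ in the case $x_y = 1$. Hoeffding's inequality then yields $|\Delta(w,z) - ps| \leq \lambda\sqrt{s}$ with probability at least $99/100$ for a suitable absolute constant $\lambda$.

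Combining these, in the case $x_y = 0$ we get $\Delta(w,z) > s/2 - \lambda\sqrt{s}$ and set $\beta = \lambda$. In the case $x_y = 1$, because $k \leq t = s/C$ we have $cs/\sqrt{k} \geq c\sqrt{C}\,\sqrt{s}$, so the mean satisfies $ps \leq s/2 - c\sqrt{C}\sqrt{s}$; combined with Hoeffding this gives $\Delta(w,z) < s/2 - (c\sqrt{C} - \lambda)\sqrt{s}$. Choosing the constant $C$ from the theorem statement large enough that $c\sqrt{C} - \lambda > \lambda$, we set $\alpha = c\sqrt{C} - \lambda$ and obtain $\alpha > \beta$ as required. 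The main obstacle is the anti-concentration computation for $\Pr[B_1 = \mathrm{Majority}(B_1,\dots,B_k)]$, where the asymmetric tie-breaking rule must be handled carefully and the $\Theta(1/\sqrt{k})$ bias must be extracted from a central binomial coefficient estimate uniformly in $k$; the edge case $k = 1$ (in which $w = R^y = z$ trivially satisfies the conclusion) should also be noted to avoid vacuity.
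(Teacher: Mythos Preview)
The paper does not give its own proof of this theorem: it is imported verbatim from \cite{DBLP:journals/toc/JayramKS08} (and Lemma~62 of \cite{woodruff2007efficient}) and used as a black box in the lower-bound reductions. So there is no proof in the paper to compare against.

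That said, your argument is correct and is exactly the standard proof from those references. The coordinatewise decomposition, the computation $\Pr[B_1 \neq \mathrm{Majority}(B_1,\dots,B_k)] = \tfrac12 - \Theta(1/\sqrt{k})$ via the central binomial coefficient, the independence of the indicators across $j$, and the final Hoeffding step with $C$ chosen large enough to force $c\sqrt{C} - \lambda > \lambda$ are precisely how the cited works establish the reduction. One tiny edge case you did not mention: when $x_y = 0$ it is possible that $k = 0$, in which case $w$ is the majority of an empty set; under the stated tie-breaking convention this makes $w$ the all-ones string, and $\Delta(w,z)$ is then $\mathrm{Bin}(s,1/2)$, so the conclusion in case~(1) still holds by the same Hoeffding bound.
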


\subsection{Reduction for error type \texorpdfstring{$m$}{m}}
\begin{theorem}\label{thm:lb_m}
    Any single pass streaming algorithm which outputs an additive $\epsilon m$ $L_1$ approximation to the profile with success probability at least $99/100$ requires $\Omega(\log(1/\eps)/\eps^2)$ bits of memory. 
\end{theorem}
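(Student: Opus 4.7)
The plan is to reduce from a direct sum of $L = \Theta(\log(1/\eps))$ independent Indexing instances, each packaged into the stream at its own frequency scale. Using Theorems~\ref{thm:IND} and~\ref{thm:INDtoGHD}, each scale encodes an Indexing instance of size $t = \Theta(1/\eps^2)$ as a Gap Hamming pair of strings of length $s = \Theta(t)$, which Alice and Bob turn into stream updates at scale-specific frequencies. Concretely, Alice holds $x^{(1)}, \ldots, x^{(L)} \in \{0,1\}^t$ drawn independently from the uniform distribution, Bob holds a uniformly random $(K, Y) \in [L] \times [t]$, and each scale $k$ has a private universe of $s$ candidate elements along with a pair of frequencies $(a_k, b_k)$.

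For each scale $k$ and each $j \in [s]$, Alice inserts scale-$k$ element $j$ exactly $a_k$ times whenever the GHD bit $w^{(k)}_j$ (derived from $x^{(k)}$ via the public-coin majority reduction of Theorem~\ref{thm:INDtoGHD}) equals $1$; Bob does the same at frequency $b_k$ with $z^{(K)}_j = R^{Y,K}_j$. The frequencies $(a_k, b_k)$ are chosen so that the triples $\{a_k, b_k, a_k+b_k\}_{k \in [L]}$ are pairwise disjoint, which guarantees that the profile entries at $a_k$, $b_k$, and $a_k+b_k$ isolate only scale $k$'s contribution; from these Bob recovers $\Delta(w^{(K)}, z^{(K)})$ and hence $x^{(K)}_Y$ via Theorem~\ref{thm:INDtoGHD}, provided that the $L_1$ error of $\hat\phi$ on the three scale-$K$ coordinates is at most a constant times the GHD gap $\sqrt{s} = \Theta(1/\eps)$.

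Given the profile estimation algorithm's total $L_1$ error guarantee of $\eps m$ (with probability $99/100$), I would combine a Markov step over Bob's uniformly random $K$ (which shows that the per-scale error on the queried scale is at most $O(\eps m / L)$ with constant probability) with the Indexing lower bound: the $L$-fold concatenation of Indexing instances is itself an Indexing problem on $Lt = \Theta(\log(1/\eps)/\eps^2)$ bits, so by Theorem~\ref{thm:IND} any protocol that lets Bob recover $x^{(K)}_Y$ with constant probability needs $\Omega(Lt)$ bits of communication from Alice, which in our setting is precisely the space of the streaming algorithm's state.

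The main obstacle is the delicate balancing of the parameters $(a_k, b_k)$: the triples must be disjoint so the scales stay separable in the profile, but the total mass $m \approx \sum_k s(a_k+b_k)$ must be small enough that the Markov-averaged per-scale error $\eps m / L$ stays below the Gap Hamming gap $\sqrt{s}$. Since $L$ disjoint triples of positive integers force $\sum_k (a_k + b_k)$ to grow, a naive geometric choice like $a_k = 2^{2k}, b_k = 2^{2k+1}$ inflates $m$ too much for the simple Markov argument; the resolution, which I expect to be the technical crux, is to replace the single-scale Markov step with a scale-by-scale hybrid argument in the spirit of augmented indexing, where Alice's inputs to the higher scales are revealed to Bob as side information and used to peel off their profile contributions, so that the per-scale error bound needed for each of the $L$ Indexing instances is inherited deterministically rather than on average.
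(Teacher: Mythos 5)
Your high-level plan (embed many Indexing instances via the public-coin reduction of Theorem~\ref{thm:INDtoGHD} at disjoint frequency scales, let Bob's uniformly random index select a scale, and argue that the $\eps m$ budget cannot spoil too many scales) is the same skeleton as the paper's proof, and your observation that the whole difficulty lies in balancing the per-scale frequencies against the total mass $m$ is exactly right. However, the proposal as written has a genuine gap at precisely that point, and the fix you sketch does not close it. With equal-size GHD instances of length $s=\Theta(1/\eps^2)$ at every scale and pairwise distinct frequency pairs $(a_k,b_k)$, the total mass is at least $m \gtrsim s\sum_k(a_k+b_k) = \omega(L/\eps^2)$, so even the \emph{fair share} of the error budget per scale, $\eps m/L$, already exceeds the GHD gap $\sqrt{s}=\Theta(1/\eps)$; worse, a legitimate $\eps m$-error estimator may have error $\gg 1/\eps$ on the three coordinates of \emph{every} scale simultaneously, so Bob can decode no Indexing bit at all. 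Revealing Alice's higher-scale inputs to Bob as side information (the augmented-indexing ``peeling'' you propose) does not help: the scales are already separated in the profile by the disjointness of the frequency triples, and knowing the true values of other coordinates does nothing to shrink the estimator's error on the queried coordinates, nor does it reduce $m$ or the allowed error $\eps m$. So the reduction, as specified, cannot establish the bound, and the sketched resolution is not a repair but a restatement of the difficulty.

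The paper resolves the tension differently, by making the embedded GHD instances \emph{shrink} with the frequency scale rather than keeping them all of size $1/\eps^2$. At scale $j$ (frequencies $\ell\in[2^{j-1}+1,2^j]$, multiples of $4$), each instance has length $t_j=1/(2^{2j}\eps^2)$, so its gap is $\sqrt{t_j}=\Theta(1/(2^j\eps))$ -- exactly the per-coordinate share of the $\eps m$ budget once it is split over $\log(1/\eps)$ scales and $\Theta(2^j)$ frequencies per scale. To still pack $\Theta(1/\eps^2)$ bits of Indexing into each scale, the construction uses $\Theta(2^{2j})$ parallel instances per scale, indexed by the frequency $\ell$ \emph{and} by a chunk index $k\in[2^j]$, and -- crucially for keeping $m=O(\log(1/\eps)/\eps^2)$ -- Bob adds the heavy ($\approx\ell$) copies only for his single chunk $k^*$, while Alice's insertions are all of multiplicity one. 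With this choice the total mass per scale is $\Theta(1/\eps^2)$, the Hamming distance at $(j,\ell,k^*)$ equals $\phi_{\ell+1}+\phi_{\ell+2}$, and the uniform-index counting argument (at most a constant fraction of buckets, and within a good bucket a constant fraction of frequencies, can exceed their error share) goes through, after which Theorem~\ref{thm:IND} applied to a single Indexing instance of size $\Theta(\log(1/\eps)/\eps^2)$ gives the bound. If you incorporate this scale-dependent instance size and the chunking trick (Bob heavy only on $k^*$), your argument becomes essentially the paper's proof; without it, the construction does not yield the claimed lower bound.
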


\begin{proof}
We reduce the $L_1$ profile problem with the guarantee in \cref{e:l1_m} from the {\sc IND} problem with $n = (\log (1/\eps))/\epsilon^2$. We will assume the streaming algorithm succeeds with probability at least $99/100$ over its internal randomness. Our reduction is randomized, and uses \cref{thm:INDtoGHD} at different scales, i.e., values $t$ for which it is invoked. 

Alice, given an $x \in \{0,1\}^n,$ partitions $x$ into $r = \log (1/\eps)$ bit strings $x^1, x^2, \ldots, x^r$, each of length $n/r = 1/\epsilon^2$. For each $j \in [r],$ Alice
further partitions $x^j$ into strings $x^{j, \ell, k}$, where $\ell \in [2^{j-1}+1, 2^j]$ is an integer multiple of $4$ in this range, and
each $k \in [2^j]$. The index $j$ represents a ``geometric scale" of possible frequencies, the index $\ell$ represents one particular frequency in this geometric scale, and the index $k$ represents one particular domain item which could have frequency $\ell$ in the stream, as described below. 

Each string $x^{j, \ell, k}$ will be of length $t_j = 1/(2^{2j} \epsilon^2)$, which we can assume is an integer by adjusting $\epsilon$ by a constant factor. For each string $x^{j, \ell, k}$, Alice independently applies the public-coin reduction of Theorem \ref{thm:INDtoGHD} to create a string $w^{j, \ell, k} \in \{0,1\}^{C t_j}$, where $C$ is as in Theorem \ref{thm:INDtoGHD}. 

Bob, given a $y \in [n],$ interprets it as a triple $(j^*, \ell^*, k^*)$. Bob will use $k^*$ in what follows, but will temporarily ignore $j^*$ and $\ell^*$. For each possible $(j, \ell, k^*)$, Bob uses the same part of the public coin that Alice used to generate her string $w^{j, \ell, k}$ in Theorem \ref{thm:INDtoGHD}, for generating his own string $z^{j, \ell, k^*} \in \{0,1\}^{C t_{j}}$. 

For each $j \in [r]$, each $\ell \in [2^{j-1}+1, 2^j]$ which is an integer multiple of $4$, and each $k \in [2^j]$, we associate a chunk of
of $Ct_j = \frac{C}{\eps^2 2^{2j}}$ domain elements $Q_{\ell, k}^j[1], \ldots, Q_{\ell, k}^j[\frac{C}{\eps^2 2^{2j}}]$. The total size of the domain will thus be $n = \sum_{j=1}^{\log (1/\eps)} \Theta(2^{2j}) \left(\frac{C}{\eps^2 2^{2j}}\right) = \Theta \left (\frac{\log (1/\eps)}{ \eps^2} \right ).$
The stream is formed as follows:

\begin{enumerate}
    \item For each $(j, \ell, k)$, {\bf Alice}  associates the $C t_j$ domain elements $Q_{\ell, k}^j$ with the coordinates of her string $w^{j, \ell, k} \in \{0,1\}^{C t_j}$. Namely, Alice adds one copy of each domain element that corresponds to a position in $w^{j, \ell, k}$ that is $1$.

    \item For each $(j, \ell, k^*)$, {\bf Bob} adds $\ell$ copies of each domain element corresponding to a position in $z^{j, \ell, k^*}$ that is $0$, and adds $\ell+2$ copies of each domain element corresponding to a position in $z^{j, \ell, k^*}$ that is $1$. 
    
\end{enumerate}

Let $\phi$ be the true profile of this stream. For $k \neq k^*$, the domain elements $Q_{\ell, k}^j$ have frequency $0$ or $1$. The domain elements $Q_{\ell, k^*}^{j}$ have frequency in $\{\ell, \ell+1, \ell+2, \ell+3\}$. In particular, elements in $Q_{\ell, k^*}^{j}$ have frequency $\ell+1$ if and only if their corresponding index in $z^{j, \ell, k^*}$ was $0$ but in $w^{j, \ell, k^*}$ was $1$,  and have frequency $\ell+2$ if and only if the corresponding index in $z^{j, \ell, k^*}$ was $1$ and in $w^{j, \ell, k^*}$ was $0$.
Note also that $\ell$ skips in integer multiples of $4$, so the sets $\{\ell, \ell+1, \ell+2, \ell+3\}$ and $\{\ell', \ell'+1, \ell+2, \ell'+3\}$ are disjoint for distinct $\ell$ and $\ell'$ considered by Alice and Bob. Therefore, the Hamming distance between $w^{j, \ell, k^*}$ and $z^{j, \ell, k^*}$ is exactly $\phi_{\ell+1} + \phi_{\ell+2}$.

Since with probability at least $99/100$ over the randomness in the public coin used for the randomized reduction for $j^*$, $\ell^*$ and $k^*$, we have $\Delta(w^{j^*, \ell^*, k^*}, z^{j^*, \ell^*, k^*}) > s/2 - \beta \sqrt{s}$ if $x_y = 0$, and have  $\Delta(w^{j^*, \ell^*, k^*}, z^{j^*, \ell^*, k^*}) < s/2 - \alpha \sqrt{s}$ if $x_y = 1$, where $\alpha > \beta$ are the constants in Theorem \ref{thm:INDtoGHD} and $s = C t_{j^*}$ for $t_j = 1/(2^{2j} \epsilon^2)$, it follows that Bob can solve {\sf IND} by estimating 
each of $\phi_{\ell^*+1}$ and $\phi_{\ell^*+2}$ up to additive error $\frac{\gamma}{\eps 2^{j-1}}$, for $\gamma > 0$ a sufficiently small constant depending on $C, \alpha,$ and $\beta$. 

Let $\hat{\phi}$ be an estimate of $\phi$ with  $L_1$ error at most $\eps'm$ for an $\eps' = \Theta(\eps)$ to be determined. For this construction, the total number of stream updates performed by both Alice and Bob is at most:
\begin{equation}
    m = \left[\sum_{j = 1}^r O(2^{2j}) C/(2^{2j} \epsilon^2)\right] = O\left(\frac{\log (1/\eps)}{\eps^2}\right).
\end{equation}
Note for this calculation, it is crucial that Bob only inserts elements for $k = k^*$. Indeed, since Bob inserts $\Theta(2^j)$ elements for each $\ell \in [2^{j-1} +1, 2^j]$ which is a multiple of $4$ and for a given value of $k^*$, this is already $\Theta(2^{2j})$ stream elements. Had Bob inserted $\Theta(2^j)$ elements instead for each $k \in [2^{j-1}+1, 2^j]$, this would be a much larger $\Theta(2^{3j})$ total stream elements for a given $j$, which would cause $m$ to be too large for our error guarantees. 

Now we pick $\epsilon' = \Theta(\epsilon)$ such that the total error is $\eps' m = \frac{\gamma \log(1/\eps)}{100 \eps}$. 
For each $j \in [0, \ldots, \log(1/\eps) -1]$, let the $j$th ``bucket'' $B_j = [2^{j-1} + 1, \ldots, 2^{j}]$ be the corresponding geometric scale of frequencies indexed by $j$. As there are $\log (1/\eps)$ buckets, for at least $9/10$ fraction of the buckets, the $L_1$ error incurred on that bucket (i.e., between $\phi_{B_j}$ and $\hat{\phi}_{B_j}$) is at most $\frac{\gamma}{10\eps}$, as otherwise, the total error would exceed $\frac{\gamma \log (1/\eps)}{100\eps}$.

Let $B_j$ be any such bucket where the error incurred on that bucket is at most $\frac{\gamma}{10\eps}$. As $|B_j| = 2^{j-1}$, for a $9/10$ fraction of the indices $\ell \in B_j$, $|\phi_{\ell} - \hat{\phi}_{\ell}| \leq \frac{\gamma}{\eps 2^{j-1}}$. In particular, for a $4/5 = 1-2 \cdot 1/10$ fraction of the $\ell$, both $\phi_{\ell+1}$ and $\phi_{\ell+2}$ are estimated to within error $\frac{\gamma}{\eps 2^{j-1}}$, which is the error needed by Bob as described above. The crucial point now is that the state of the streaming algorithm does not depend on $j^*$ or $\ell^*$, and so since our input distribution $\mu$ for {\sc IND} is uniform, it follows that this is the error Bob has on each of $\phi_{\ell^*+1}$ and $\phi_{\ell^*+2}$ with failure probability at most $1/10 + 1/5$. 

It follows from the previous paragraph that since our input distribution $\mu$ for {\sc IND} is uniform, and also since our randomized reduction on the $(j^*, \ell^*, k^*)$-th instance fails with probability at most $1/100$, and also since our $L_1$ profile streaming algorithm succeeds with probability at least $99/100$, 
that the total success probability for Bob in solving the {\sf IND} problem over all randomness is at least:
$$1-1/10 -1/5 - 1/100 -1/100 > 2/3,$$
and so it follows by Theorem \ref{thm:IND} that Alice's message, which is the state of the streaming algorithm for the $L_1$ profile problem, must be $\Omega\left(\frac{\log (1/\eps)}{(\eps')^2}\right) = \Omega\left(\frac{\log(1/\eps)}{\eps^2}\right)$ bits long, which completes the proof of the following theorem.

\end{proof}

\subsection{Reduction for error type \texorpdfstring{$D$}{D}}

\begin{theorem}\label{thm:lb_D}
    Any single pass streaming algorithm which outputs an additive $\epsilon D$ approximation to $\phi_1$ with success probability at least $9/10$ requires $\Omega(1/\eps^2)$ bits of memory. 
\end{theorem}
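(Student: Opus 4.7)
The plan is to reduce, in one-way communication, from a Gap Hamming Distance instance of size $N = c/\eps^2$ for a sufficiently small constant $c > 0$. Concretely, Alice holds $w \in \{0,1\}^N$, Bob holds $z \in \{0,1\}^N$, and they must distinguish $\Delta(w,z) \ge N/2 + \sqrt{N}$ from $\Delta(w,z) \le N/2 - \sqrt{N}$. This problem has one-way randomized communication complexity $\Omega(N) = \Omega(1/\eps^2)$, which follows from Theorems \ref{thm:IND} and \ref{thm:INDtoGHD} (take a uniformly random {\sc IND} instance of size $N$, run the public-coin reduction of Theorem \ref{thm:INDtoGHD} with $t = N$ to produce $w$ on Alice's side and $z$ on Bob's side, and apply Theorem \ref{thm:IND}).

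The reduction identifies each coordinate $i \in [N]$ with a distinct domain element $e_i$. Alice processes the stream first, inserting a single copy of $e_i$ for every $i$ with $w_i = 1$; she then passes the state of the streaming algorithm to Bob, who inserts a single copy of $e_i$ for every $i$ with $z_i = 1$. Every element $e_i$ now has frequency in $\{0,1,2\}$, with frequency exactly $1$ iff $w_i \ne z_i$. Hence
\begin{equation*}
    \phi_1 \;=\; \Delta(w,z) \qquad\text{and}\qquad D \;\le\; N.
\end{equation*}

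An estimator $\hat\phi_1$ with $|\hat\phi_1 - \phi_1| \le \eps D$ therefore has additive error at most $\eps N = c/\eps$. The Gap Hamming gap is $2\sqrt{N} = 2\sqrt{c}/\eps$, so for any $c$ with $\sqrt{c} > c$ (in particular any $c < 1/4$ after accounting for the constant in the Gap Hamming gap from Theorem \ref{thm:INDtoGHD}), the error is strictly smaller than half the gap. Thresholding $\hat\phi_1$ at $N/2$ then lets Bob solve Gap Hamming with constant probability, so any streaming algorithm meeting the assumed guarantee must have memory footprint $\Omega(N) = \Omega(1/\eps^2)$ bits.

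The main thing to get right, and essentially the only subtlety, is the calibration between $N$, the Gap Hamming gap $\Theta(\sqrt{N})$, and the allowed error $\eps D \le \eps N$: choosing $N = \Theta(1/\eps^2)$ with a small enough hidden constant is exactly what makes the gap asymptotically dominate the error while still yielding an $\Omega(1/\eps^2)$ lower bound. Unlike the $\pm \eps m$ reduction, there is no need to stratify into geometric scales of frequencies or to control error propagation across many coordinates of $\phi$, because a single coordinate $\phi_1$ already encodes Hamming distance directly.
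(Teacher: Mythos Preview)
Your proposal is correct and takes essentially the same approach as the paper's proof: both reduce from Gap Hamming (obtained from {\sc IND} via Theorem~\ref{thm:INDtoGHD}) by having Alice and Bob each insert one copy of coordinate $i$ whenever their respective bit is $1$, so that $\phi_1 = \Delta(w,z)$ and $D \le N = \Theta(1/\eps^2)$, and then observe that an $\eps D$-additive estimate of $\phi_1$ resolves the $\Theta(\sqrt{N})$ gap. The only cosmetic difference is that the paper fixes $s = 1/\eps^2$ and absorbs the slack into a constant $\gamma$ on the gap side, whereas you put the constant $c$ into $N$; the calibration is the same.
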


\begin{proof}
We reduce the $L_1$ profile problem with the guarantee in \cref{e:l1_D} with $\tau=1$ from the {\sc IND} problem (via the {\sc Gap-Hamming} problem as in \cref{thm:INDtoGHD}).
Using \cref{thm:INDtoGHD}, have Alice and Bob construct $s$-bit strings $w$ and $z$ from their respective inputs $x \in \{0,1\}^n$ and $y \in [n]$. Recall that $s = \Theta(n)$ and choose $n=\Theta(1/\eps^2)$ such that $s = 1/\eps^2$.

For each $i \in [s]$, Alice inserts one copy of $i$ into the stream iff $w_s = 1$, and she sends the state of the streaming algorithm to Bob.
Then, Bob does the same: for each $i \in [s]$, he inserts one copy of $i$ into the stream iff $z_i = 1$.
Let $\phi$ be the true profile of the stream. The number of elements of frequency $1$, $\phi_1$, is exactly the number of elements which appear in exactly one of $w$ or $z$, so $\phi_1 = \Delta(w, z)$.

By \cref{thm:INDtoGHD}, for some constant $\gamma$, an estimate of $\phi_1$ up to $\pm \frac{\sqrt{s}}{\gamma}$ would, with probability $99/100$, solve the original {\sc IND} problem.
Note that $D \leq s = 1/\eps^2$.
Therefore, an algorithm which, with probability $9/10$, produces an estimate of $\phi_1$ up to additive error $\frac{\eps D}{\gamma} \leq \frac{1}{\eps \gamma} = \frac{\sqrt{s}}{\gamma}$ would solve the {\sc IND} problem with probability $89/100$.
\end{proof}

\section*{Acknowledgements}

Justin Chen is supported by an NSF Graduate Research Fellowship under Grant No.\ 174530.
Piotr Indyk is supported by the NSF TRIPODS program (award DMS-2022448) and the Simons Investigator Award. David Woodruff is supported in part by a Simons Investigator Award. 

\bibliographystyle{alpha}
\bibliography{bib}

\newcommand{\etalchar}[1]{$^{#1}$}
\begin{thebibliography}{KGKM05}

\bibitem[ACSS21]{anari2021bethe}
Nima Anari, Moses Charikar, Kirankumar Shiragur, and Aaron Sidford.
\newblock The bethe and sinkhorn permanents of low rank matrices and
  implications for profile maximum likelihood.
\newblock In {\em Conference on Learning Theory}, pages 93--158. PMLR, 2021.

\bibitem[ADOS17]{acharya2016unified}
Jayadev Acharya, Hirakendu Das, Alon Orlitsky, and Ananda~Theertha Suresh.
\newblock A unified maximum likelihood approach for optimal distribution
  property estimation.
\newblock In {\em International Conference on Machine Learning}. PMLR, 2017.

\bibitem[AMS96]{alon1996moments}
Noga Alon, Yossi Matias, and Mario Szegedy.
\newblock The space complexity of approximating the frequency moments.
\newblock In {\em Proceedings of the Twenty-Eighth Annual ACM Symposium on
  Theory of Computing}, STOC '96, page 20–29, New York, NY, USA, 1996.
  Association for Computing Machinery.

\bibitem[BB08]{blandford2008CompactDF}
Daniel~K. Blandford and Guy~E. Blelloch.
\newblock Compact dictionaries for variable-length keys and data with
  applications.
\newblock {\em ACM Trans. Algorithms}, 4:17:1--17:25, 2008.

\bibitem[BDLM05]{buriol2005using}
Luciana~S. Buriol, Debora Donato, Stefano Leonardi, and Tobias Matzner.
\newblock Using data stream algorithms for computing properties of large
  graphs.
\newblock In {\em Workshop on Massive Geometric Data Sets (MASSIVE’05)},
  pages 9--14, 2005.

\bibitem[BGO14]{braverman2014catch}
Vladimir Braverman, Ran Gelles, and Rafail Ostrovsky.
\newblock How to catch l2-heavy-hitters on sliding windows.
\newblock {\em Theoretical Computer Science}, 554:82--94, 2014.

\bibitem[Bla20]{blasiok2020distinctelements}
Jaroslaw Blasiok.
\newblock Optimal streaming and tracking distinct elements with high
  probability.
\newblock {\em {ACM} Trans. Algorithms}, 16(1):3:1--3:28, 2020.

\bibitem[CG19]{cohen2019samplingsketches}
Edith Cohen and Ofir Geri.
\newblock {\em Sampling Sketches for Concave Sublinear Functions of
  Frequencies}.
\newblock 2019.

\bibitem[CM14]{chikhi2014informed}
Rayan Chikhi and Paul Medvedev.
\newblock Informed and automated k-mer size selection for genome assembly.
\newblock {\em Bioinformatics}, 30(1):31--37, 2014.

\bibitem[CMR05]{cormode2005summarizing}
Graham Cormode, Senthilmurugan Muthukrishnan, and Irina Rozenbaum.
\newblock Summarizing and mining inverse distributions on data streams via
  dynamic inverse sampling.
\newblock In {\em VLDB}, volume~5, pages 25--36, 2005.

\bibitem[Coh15]{cohen2015stream}
Edith Cohen.
\newblock Stream sampling for frequency cap statistics.
\newblock In {\em Proceedings of the 21th ACM SIGKDD International Conference
  on Knowledge Discovery and Data Mining}, pages 159--168, 2015.

\bibitem[Coh17]{cohen2017hyperloglog}
Edith Cohen.
\newblock Hyperloglog hyperextended: Sketches for concave sublinear frequency
  statistics.
\newblock In {\em Proceedings of the 23rd ACM SIGKDD International Conference
  on Knowledge Discovery and Data Mining}, KDD '17, page 105–114, New York,
  NY, USA, 2017. Association for Computing Machinery.

\bibitem[CSS19]{charikar2019efficient}
Moses Charikar, Kirankumar Shiragur, and Aaron Sidford.
\newblock Efficient profile maximum likelihood for universal symmetric property
  estimation.
\newblock In {\em Proceedings of the 51st Annual ACM SIGACT Symposium on Theory
  of Computing}, STOC 2019, page 780–791, New York, NY, USA, 2019.
  Association for Computing Machinery.

\bibitem[CW79]{carter1979hash}
J.~Lawrence Carter and Mark~N. Wegman.
\newblock Universal classes of hash functions.
\newblock {\em Journal of Computer and System Sciences}, 18(2):143--154, 1979.

\bibitem[Dic23]{dickens2023datasketchesemail}
Charlie Dickens.
\newblock Personal communication, 2023.

\bibitem[DM02]{datar2002estimating}
Mayur Datar and S~Muthukrishnan.
\newblock Estimating rarity and similarity over data stream windows.
\newblock In {\em European Symposium on Algorithms}, pages 323--335. Springer,
  2002.

\bibitem[DNP{\etalchar{+}}10]{dwork2010pan}
Cynthia Dwork, Moni Naor, Toniann Pitassi, Guy~N Rothblum, and Sergey Yekhanin.
\newblock Pan-private streaming algorithms.
\newblock In {\em ics}, pages 66--80, 2010.

\bibitem[FM85]{flajolet1985probabilistic}
Philippe Flajolet and G.~Nigel Martin.
\newblock Probabilistic counting algorithms for database applications.
\newblock {\em Journal of Computer and System Sciences}, 31(2):182--209, 1985.

\bibitem[FPS11]{feigenblat2011exponential}
Guy Feigenblat, Ely Porat, and Ariel Shiftan.
\newblock Exponential time improvement for min-wise based algorithms.
\newblock In {\em Proceedings of the twenty-second annual ACM-SIAM symposium on
  Discrete Algorithms}, pages 57--66. SIAM, 2011.

\bibitem[GKK{\etalchar{+}}22]{ghazi2022multipartyreach}
Badih Ghazi, Ben Kreuter, Ravi Kumar, Pasin Manurangsi, Jiayu Peng, Evgeny
  Skvortsov, Yao Wang, and Craig Wright.
\newblock Multiparty reach and frequency histogram: Private, secure, and
  practical.
\newblock {\em Proceedings on Privacy Enhancing Technologies}, 2022:373--395,
  01 2022.

\bibitem[Hub64]{huber1964robust}
Peter~J. Huber.
\newblock {Robust Estimation of a Location Parameter}.
\newblock {\em The Annals of Mathematical Statistics}, 35(1):73 -- 101, 1964.

\bibitem[Ind06]{indyk2006stable}
Piotr Indyk.
\newblock Stable distributions, pseudorandom generators, embeddings, and data
  stream computation.
\newblock {\em Journal of the ACM (JACM)}, 53(3):307--323, 2006.

\bibitem[IW05]{indyk2005optimal}
Piotr Indyk and David Woodruff.
\newblock Optimal approximations of the frequency moments of data streams.
\newblock In {\em Proceedings of the thirty-seventh annual ACM symposium on
  Theory of computing}, pages 202--208, 2005.

\bibitem[JKS08]{DBLP:journals/toc/JayramKS08}
T.~S. Jayram, Ravi Kumar, and D.~Sivakumar.
\newblock The one-way communication complexity of hamming distance.
\newblock {\em Theory Comput.}, 4(1):129--135, 2008.

\bibitem[JWZ22]{jayaram2022samplers}
Rajesh Jayaram, David~P. Woodruff, and Samson Zhou.
\newblock Truly perfect samplers for data streams and sliding windows.
\newblock In {\em Proceedings of the 41st ACM SIGMOD-SIGACT-SIGAI Symposium on
  Principles of Database Systems}, PODS '22, page 29–40, New York, NY, USA,
  2022. Association for Computing Machinery.

\bibitem[KGKM05]{karamcheti2005detecting}
Vijay Karamcheti, Davi Geiger, Zvi Kedem, and S~Muthukrishnan.
\newblock Detecting malicious network traffic using inverse distributions of
  packet contents.
\newblock In {\em Proceedings of the 2005 ACM SIGCOMM workshop on Mining
  network data}, pages 165--170, 2005.

\bibitem[KNR95]{DBLP:conf/stoc/KremerNR95}
Ilan Kremer, Noam Nisan, and Dana Ron.
\newblock On randomized one-round communication complexity.
\newblock In Frank~Thomson Leighton and Allan Borodin, editors, {\em
  Proceedings of the Twenty-Seventh Annual {ACM} Symposium on Theory of
  Computing, 29 May-1 June 1995, Las Vegas, Nevada, {USA}}, pages 596--605.
  {ACM}, 1995.

\bibitem[KNW10]{kane2010optimal}
Daniel~M Kane, Jelani Nelson, and David~P Woodruff.
\newblock An optimal algorithm for the distinct elements problem.
\newblock In {\em Proceedings of the twenty-ninth ACM SIGMOD-SIGACT-SIGART
  symposium on Principles of database systems}, pages 41--52, 2010.

\bibitem[KPTW23]{kacham2023pseudorandom}
Praneeth Kacham, Rasmus Pagh, Mikkel Thorup, and David~P. Woodruff.
\newblock Pseudorandom hashing for space-bounded computation with applications
  to streaming.
\newblock In {\em Proceedings of the 64th Annual Symposium on Foundations of
  Computer Science (FOCS)}, 2023.

\bibitem[Mor78]{morris1978counting}
Robert~H. Morris.
\newblock Counting large numbers of events in small registers.
\newblock {\em Communications of the ACM}, 21(10):840--842, 1978.

\bibitem[NY22]{nelson2022optimalmorris}
Jelani Nelson and Huacheng Yu.
\newblock Optimal bounds for approximate counting.
\newblock In {\em Proceedings of the 41st ACM SIGMOD-SIGACT-SIGAI Symposium on
  Principles of Database Systems}, PODS '22, page 119–127, New York, NY, USA,
  2022. Association for Computing Machinery.

\bibitem[Rey83]{Rey1983robust}
William J.~J. Rey.
\newblock {\em {Introduction to Robust and Quasi-Robust Statistical Methods}}.
\newblock Universitext. Springer, Berlin, Heidelberg, 1983.

\bibitem[VV17]{valiant2017estimating}
Gregory Valiant and Paul Valiant.
\newblock Estimating the unseen: improved estimators for entropy and other
  properties.
\newblock {\em Journal of the ACM (JACM)}, 64(6):1--41, 2017.

\bibitem[Woo07]{woodruff2007efficient}
David~Paul Woodruff.
\newblock {\em Efficient and private distance approximation in the
  communication and streaming models}.
\newblock PhD thesis, Massachusetts Institute of Technology, 2007.

\end{thebibliography}

\newpage
\appendix

\section{Re-analyzing the Algorithm of Datar and Muthukrishnan}
\label{s:basic}

In this section, we re-analyze the  Datar-Muthukrishnan algorithm in the case where the error function is measured using the $L_1$ norm according to  \cref{e:l1_m} where our goal is to estimate the profile up to error $\pm \eps m$. 

Recall that the algorithm selects the set $S$ of $s$ samples uniformly at random from the support of the stream and computes exact frequencies of each element sampled. Let $D$ be the number of distinct elements in the stream and let $F_i$ be the number of samples with frequency $i$. 
%For each $i \le 2/\eps$, 
The algorithm then estimates the ratio $\phi_i/D$ by $\frac{1}{s} F_i$.

In our case, we make the following small modifications. First, since our goal is to estimate $\phi_i$ as opposed to $\phi_i/D$, we run a distinct elements streaming algorithm~\cite{kane2010optimal}  in parallel to get an estimate $\hat{D}$ of the number of distinct elements up to a $(1 \pm \eps)$ multiplicative factor. This requires $O(1/\eps^2 + \log n)$ space, which is subsumed by the overall space bound. Then, our estimate of $\phi_i$ will be $\hat{\phi}_i = \frac{\hat{D}}{s} F_i$. Since replacing $D$ by $\hat{D}$ changes the estimates by only a $(1 \pm \eps)$ multiplicative factor, in the rest of this section we assume the algorithm knows $D$ exactly. 

Second, for all $i > 2/\eps$, we set $\hat{\phi}_i$ to zero instead of $\frac{D}{s} F_i$. (As we note below, this adds only $O(\eps m)$ to the total error bound). We refer to this procedure as the {\em modified Datar-Muthukrishan algorithm}.

\begin{theorem}\label{thm:dm-samples}
The modified Datar-Muthukrishan algorithm, with sample size $s = O(1/\eps^2 \log(1/\eps))$, returns an estimated profile vector $\hat{\phi}$ such that $\|\phi - \hat{\phi}\|_1 \leq \eps m$ with constant probability (say, at least $2/3$).
\end{theorem}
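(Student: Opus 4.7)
The plan is to split the total $L_1$ error into a deterministic \emph{truncation} error on coordinates $i > 2/\eps$ (where $\hat\phi_i = 0$) and a random \emph{sampling} error on the coordinates $i \in \{1, \ldots, \lfloor 2/\eps \rfloor\}$. Since $\sum_i \phi_i \cdot i = m$, we immediately get
\begin{equation*}
    \sum_{i > 2/\eps} |\phi_i - \hat{\phi}_i| \;=\; \sum_{i > 2/\eps} \phi_i \;\le\; \frac{\eps}{2} \sum_{i > 2/\eps} \phi_i \cdot i \;\le\; \frac{\eps m}{2},
\end{equation*}
which handles the tail and lets us focus on the head $\tau := \lfloor 2/\eps \rfloor$.

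For the sampling error, I would observe that when $S$ is a uniform sample of $s$ distinct elements out of $D$, each $F_i$ is (essentially) distributed as $\mathrm{Bin}(s, \phi_i/D)$, so $\E[\hat{\phi}_i] = \phi_i$ and $\Var(\hat{\phi}_i) = (D/s)^2 \Var(F_i) \le D\phi_i/s$. By Jensen's inequality applied to the square root,
\begin{equation*}
    \E\left[\,|\phi_i - \hat{\phi}_i|\,\right] \;\le\; \sqrt{\Var(\hat{\phi}_i)} \;\le\; \sqrt{D\phi_i/s}.
\end{equation*}
Summing across the head coordinates and invoking \cref{lem:sumofsqrt} (the Cauchy--Schwarz bound $\sum_{i=1}^\tau \sqrt{\phi_i} = O(\sqrt{m \log \tau})$, which exploits both $\sum_i \phi_i \cdot i \le m$ and the harmonic factor $\sum_i 1/i$) gives
\begin{equation*}
    \E\left[\sum_{i=1}^\tau |\phi_i - \hat{\phi}_i|\right] \;\le\; \sqrt{\tfrac{D}{s}} \sum_{i=1}^\tau \sqrt{\phi_i} \;=\; O\!\left(\sqrt{\tfrac{Dm \log(1/\eps)}{s}}\right) \;\le\; O\!\left(m\sqrt{\tfrac{\log(1/\eps)}{s}}\right),
\end{equation*}
using $D \le m$ in the last step.

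Choosing $s = C\log(1/\eps)/\eps^2$ for a sufficiently large constant $C$ bounds the expected head error by $\eps m / 10$, and then Markov's inequality upgrades this to a tail bound: the head error is at most $\eps m / 2$ with probability at least $4/5$. Adding the deterministic $\eps m/2$ truncation error, together with standard adjustment for the independent failure of the $(1 \pm \eps)$ distinct-elements sketch used to rescale $1/s \cdot F_i$ to $\hat{D}/s \cdot F_i$, yields the claimed total error $\eps m$ with constant probability after rescaling $\eps$ by a constant. The only slightly delicate step is the Cauchy--Schwarz calculation that pulls out the crucial $\sqrt{\log(1/\eps)}$ factor (rather than a worse $\sqrt{\tau} = \Theta(1/\sqrt{\eps})$), since it is exactly this improvement that drives the $O(1/\eps^2 \log(1/\eps))$ sample bound instead of a cruder $O(1/\eps^3)$.
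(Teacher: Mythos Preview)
Your proposal is correct and follows essentially the same approach as the paper: the same truncation/sampling split, the same binomial variance bound combined with Jensen to get $\E|\phi_i-\hat\phi_i|\le\sqrt{D\phi_i/s}$, the same invocation of \cref{lem:sumofsqrt} to obtain the $\sqrt{m\log\tau}$ factor, and then Markov's inequality. The paper packages the sampling-error step as a reference to \cref{lem:empiricalprofile2}, but the underlying computation is identical to what you wrote out.
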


\begin{proof}
First, we note that estimating all $\phi_i$ for $i > 2/\eps$ as zero contributes at most $\eps m/2$ to the $L_1$ error of the estimate. This is because the total number of elements with frequency at least $2/\eps$ is at most $\eps m/2$.

\cref{lem:empiricalprofile2} bounds the $L_1$ error of the reweighted empirical profile. In particular, if $s = \Omega\left(\frac{D \log(1/\eps)}{m \eps^2}\right)$,
\begin{equation*}
    \sum_{i=1}^{\lceil 2/\eps \rceil} \left| \phi_i - \frac{D}{s} F_i \right| = O(\eps m).
\end{equation*}
As $D \leq m$, $s = \Theta(1/\eps^2 \log(1/\eps))$ suffices to achieve expected error of, say, $\eps m/6$. Markov's inequality completes the proof.
\end{proof}

We note that the above algorithm, for each sample, stores $\log n + \log m$ bits to maintain the identity of the sample as well as its count for a total space complexity $O(1/\eps^2 \log(1/\eps) \log(nm))$ bits. In the following section, we present an algorithm which uses $O(1/\eps^2 \log^2(1/\eps) + \log n)$ bits, avoiding this multiplicative $\log (nm)$ dependence.

\subsection*{Improving Storage through Hashing}
\label{s:compression}
In this section we outline an improved algorithm with a reduced space bound of $O(1/\eps^2 \log^2(1/\eps) + \log n)$, i.e., replacing $\log (nm)$ with $\log(1/\eps)$. Although this bound is still not optimal, the algorithm in this section will help us illustrate the challenges in obtaining the optimal bound.

 First, recall from the analysis in the previous section that a more fine-tuned bound on the number of required samples is $s = O(1/\eps^2 \log(1/\eps) \cdot D/m)$, so as the number of distinct elements decreases, we need fewer samples.
 Let $C, C'>1$ be sufficiently large constants. 
Consider the following algorithm for processing a stream element $x_i$ given $\eps, m$ and two hash functions $g$ and $h$:
\begin{enumerate}
    \item \textbf{Sampling:}  Hash $x_i$ using $g$ to the universe $[\frac{m}{C'/\eps^2 \log(1/\eps)}]$  uniformly at random.
    \begin{itemize}
        \item If $g(x_i) = 1$, continue to step 2.
        \item Else, ignore $x_i$.
    \end{itemize}
    \item \textbf{Compression:} Hash $x_i$ using $h$ to the universe $[C (1/\eps^2 \log(1/\eps))^2]$
    \begin{itemize}
        \item Insert/increment the count of $h(x_i)$ in a dictionary (sparse hash table) $H$
        \item If the count of $h(x_i)$ exceeds $2/\eps$, marks its count as N/A and ignore it going forward
    \end{itemize}
\end{enumerate}
Call $S$ the set of unique stream elements that hash to $1$ under $g$. After all items are inserted, we use the dictionary $H$ to compute an empirical estimate $\hat{\phi}$ of the profile (rescaled by $D/|S|$), as in the previous algorithm. In what follows, we show that the resulting estimated profile $\hat{\phi}$ is within an $L_1$ distance of $\eps m$ from the true profile vector $\phi$,  with constant probability.

First, we will show that with high constant probability, $|S|$ is at least $1/\eps^2 \log(1/\eps) D/m$. The probability that any element lands in $S$ is $\frac{C'/\eps^2 \log(1/\eps)}{m}$ and there are $D$ elements, so the expected number of samples is as desired. As the distribution of $|S|$ is binomial, the variance is at most the expectation and therefore the size of the sample is correct up to constant factors with  constant probability via Chebyshev's  inequality.

Now, consider the hash table $H$. If there are no collisions in the hash table, its nonzero entries (those stored in the dictionary) will contain the true counts of each element in $S$ (ignoring elements with counts exceeding $2/\eps$, which is fine since we are not counting those). Note that $|S| < 10C'/\eps^2 \log(1/\eps)$ with a large constant probability as $D \leq n$, so $|H| = C/(10C')^2 |S|^2$. For large enough constant $C$, it is unlikely for there to be any collisions, and $H$ contains the appropriate number of samples along with their true counts.

\paragraph*{Space} It takes $\log n$ bits to store $x_i$. The rest of the space is taken up by the hash table. There are $|S|$ elements in the hash table, which we have already argued is at most $O(1/\eps^2 \log(1/\eps))$. For each element, we must store its identity $h(x_i)$ as well as the corresponding count, both of which take up $O(\log(1/\eps))$ bits. In our analysis, we only ever require pairwise independence, so the hash functions can be stored in $O(\log n)$ space.
Therefore, the total space in bits of this construction is
\begin{equation*}
    O(1/\eps^2 \log^2(1/\eps) + \log n)
\end{equation*}
%improving the multiplicative $\log m + \log n$ term to a $\log(1/\eps)$ term from the basic $L_0$ sampling algorithm.

\paragraph*{Improving the bound} It can be seen that the ``extra" $\log(1/\epsilon)$ factor in the $O(1/\eps^2 \log^2(1/\eps))$ bound is mostly due to the need for avoiding collisions of the sampled elements $S$ under the hash function $h$ (i.e., ensuring that $h$ is perfect) \footnote{In addition, we also need to maintain the count of each sampled element, which also takes $O(\log(1/\eps))$ bits per sample. However, this issue can be solved more easily.}. This requires storing $O(\log(1/\eps))$ bits per sample, to disambiguate distinct elements in $S$. The algorithm presented in \cref{s:algorithm} achieves the optimal space by hashing elements in $S$ to a hash table of size $O(|S|)$, not $O(|S|^2)$, removing the need to store hashed IDs. This, however, comes at the price of allowing collisions, which means that elements with different frequencies are mixed together. Iteratively ``inverting'' this mixing process to obtain frequency estimates is the most technically challenging part of our algorithm.

\end{document}